






\documentclass[11pt]{article} 
\usepackage[margin=1in]{geometry} 
\usepackage{setspace}
\usepackage[autostyle, english = american]{csquotes}

\usepackage[normalem]{ulem} 
\usepackage{hyperref}
\usepackage[english]{babel}

\usepackage{amsthm}
\usepackage{amsfonts}
\usepackage{newtxmath}
\usepackage{amsmath}

\newtheorem{definition}{Definition}
\newtheorem{proposition}{Proposition}
\newtheorem{theorem}{Theorem}
\newtheorem{lemma}{Lemma}
\DeclareMathOperator*{\argmin}{arg\,min}

\usepackage{hyperref} 
\hypersetup{
	colorlinks,
	linkcolor={red!50!black}, 
	citecolor={blue!50!black},
	urlcolor={blue!80!black}
}
\usepackage{mathrsfs}  
\usepackage{tikz}
\usetikzlibrary{shadows} 
\usetikzlibrary{chains,fit,shapes}
\usetikzlibrary{positioning}

\usepackage{xcolor}
\usepackage{xspace}
\usepackage{booktabs}
\usepackage{subcaption}
\usepackage{caption}
\usepackage{comment}
\usepackage{multirow}
\usepackage{bbm}
\usepackage{rotating}
\usepackage{pdflscape}

\usepackage[noend]{algpseudocode}
\usepackage{algorithm}

\usepackage{endnotes}
\let\footnote=\endnote

%


\usepackage{natbib}
 \bibpunct[, ]{(}{)}{,}{a}{}{,}%
 %
 %
 %
 %
 %



\newcommand{\fileset}{\mathscr{F}}
\newcommand{\numfiles}{\ensuremath{n}}	
\newcommand{\tapelength}{\ensuremath{L}}
\newcommand{\file}{\ensuremath{i}}
\newcommand{\filep}{\ensuremath{j}} 
\newcommand{\leftfile}[1]{\ell_{#1}}
\newcommand{\rightfile}[1]{r_{#1}}
\newcommand{\filesize}[1]{s_{#1}}
\newcommand{\fileweight}[1]{w_{#1}}

\newcommand{\numrequests}{\ensuremath{m}}
\newcommand{\ordertuple}{\boldsymbol{\pi}}
\newcommand{\order}{\pi}
\newcommand{\pordertuple}{\boldsymbol{\rho}}
\newcommand{\porder}{\rho}

\newcommand{\dist}[2]{d_{#1,#2}}

\newcommand{\pval}{R}
\newcommand{\policyval}[2]{\pval_{#1}(#2)}
\newcommand{\optpolicyval}{\pval^*}

\newcommand{\rewindphase}{\mathcal{R}}
\newcommand{\forwardphase}{\mathcal{F}}
\newcommand{\response}{R}
\newcommand{\optordertuple}{\ordertuple^*}
\newcommand{\fifo}{\texttt{FIFO}}
\newcommand{\fififi}{\texttt{FIFF}}
\newcommand{\fifila}{\texttt{FILA}} 
\newcommand{\lafila}{\texttt{LFL}}
\newcommand{\zigzag}{\texttt{ARI}} 
 
\newcommand{\exact}{\texttt{DP}} 
\newcommand{\ssf}{\texttt{SSF}}

\newcommand{\neigh}[1]{\eta(#1)}
\newcommand{\neighs}[2]{\eta_{#1}(#2)}

\newcommand{\bigo}{\mathcal{O}}


\newcommand{\rewindrec}{\mathfrak{R}}
\newcommand{\forwardrec}{\mathfrak{F}}

\newcommand{\block}[2]{\mathscr{B}_{#1,#2}}

\newcommand{\blocksize}{b}

\newcommand{\optblock}[2]{\mathcal{V} \left( #1, #2 \right) }

\newcommand{\pendreqs}[2]{k'_{#1,#2}}

\newcommand{\fileweightRV}[1]{W_{#1}}
\newcommand{\prob}[1]{p_{#1}}

\newcommand{\expect}{\mathbb{E}}
\newcommand{\probleft}{\rho} 
\newcommand{\maxprob}{p^{\max}} 
\newcommand{\minprob}{p^{\min}} 

\newcommand{\stocrewindrec}{\mathfrak{R}^s}
\newcommand{\stocforwardrec}{\mathfrak{F}^s}
\newcommand{\stocpendreqs}[2]{\probleft'_{#1,#2}}
\newcommand{\probapprox}[2]{\prob{#2}^{#1}}


\begin{document}

\title{On Approximate Sequencing Policies for Linear Storage Devices}

\author{Carlos Cardonha\thanks{Corresponding author} \\ \small Department of Operations and Information Management, School of Business, University of Connecticut
	\and
	Andre A. Cire\\ \small Dept. of Management, University of Toronto Scarborough \& Rotman School of Management
	\and
	Lucas C. Villa Real \\ \small IBM Research, Brazil
}
\date{}
\maketitle

\abstract{%
This paper investigates sequencing policies for file reading requests in linear storage devices, such as magnetic tapes. Tapes are the technology of choice for long-term storage in data centers due to their low cost and reliability. However, their physical structure imposes challenges to data retrieval operations reflected in classic optimization and operations research problems. In this work, we provide a theoretical and numerical performance analysis of low-complexity algorithms under deterministic, stochastic, and online settings, which are key in practice due to their interpretability and the large scale of existing data services. In the deterministic setting, we show that traditional policies, such as first-in first-out (FIFO), have arbitrarily poor performance, and we develop and investigate new constant-factor approximations. For the stochastic setting, we present a fully polynomial-time approximation scheme that weighs files based on their access frequencies. Finally, we investigate an online extension and propose a new algorithm with constant competitive-factor guarantees. Our numerical analysis on synthetic and real-world data suggest that the proposed algorithms may significantly outperform policies currently adopted in practice with respect to average reading times.
}%



\maketitle

%



\section{Introduction} 
\label{sec:intro}

Scheduling methodologies play a pivotal role in the operations of data storage solutions, an industry that is forecasted to reach a valuation of US \$35 billion in 2022 \citep{idcTrends2020}. Data storage solutions offer a portfolio of random-access technologies for quick and recurrent file access, such as non-volatile memories and solid-state drivers, as well as cold-storage technologies where data are preserved for less frequent but longer-term access.  Due to the critical relevance of both technologies to data processing, the effective storage and retrieval of files have led to an extensive number of classical problems in the operations research and optimization literature (see, e.g., \S A.4 in \citealt{GareyJ79}, \citealt{parker1994creed}, and \citealt{Schaeffer11}).

In this work, we investigate file retrieval policies for linear storage devices, or more precisely, magnetic tapes. Tapes are the standard choice in cold storage due to their compelling cost, space, and security advantages \citep{lantz2018future}. For example, due to increase in cybersecurity threats and demand for high-volume storage, shipments for tape-based storage media increased by 40\% in 2021 \citep{tapes2022}. Tapes are extensively used by the entertainment industry~\citep{Coughlin2019}, oil companies \citep{Wittenborn2007}, space mission analysis \citep{nasaHECC},
and for multi-year research data for projects such as the Large Hadron Collider \citep{Cavalli10}. Standards for magnetic tapes are open and established by the Linear Tape-Open (LTO) consortium~\citep{ultriumlto}, composed of Hewlett Packard Enterprise, Quantum, and IBM, the latter a partner in this research study.

\begin{figure}[h!]
	\centering
	\usetikzlibrary{arrows,backgrounds,snakes}
	\begin{tikzpicture}
		[
			rewindfile/.style={rectangle, draw=black, fill=black!2, thick, minimum size=5mm},
			forwardfile/.style={rectangle, draw=purple, fill=purple!2, thick, minimum size=5mm, text=purple},
			file/.style={rectangle, draw=black, fill=gray!5, thick, minimum size=5mm, text=black},
		]
		
		\node[file, minimum width=70mm] (f1) {1};
		\node[file, minimum width=36mm, right=0.5mm of f1, node distance=50pt] (f2) {2};
		\node[file, minimum width=14mm, right=0.5mm of f2, node distance=50pt] (f3) {3};
		
		\node[right=5mm of f3] (nofile) {};
		
		\newcommand\drawsize[2] {
			\coordinate[above = 6mm of #1.west] (c1a#1);
			\coordinate[above = 6mm of #1.east] (c1b#1);
			\draw[<->] (c1a#1) -- (c1b#1) node[midway, fill=white, font=\scriptsize] {#2};
		}
		
 		\drawsize{f1}{$\filesize{1} = 15$};
 		\drawsize{f2}{$\filesize{2} = 4$};
 		\drawsize{f3}{$\filesize{3} = 2$}; 

		\newcommand\drawsequence[6] {
			\coordinate[below = #4mm of #1.east] (d1a#1#2);
			\coordinate[below = #4mm of #2.west] (d1b#1#2);
			\draw[->, line width=0.3mm,color=orange] (d1a#1#2) -- (d1b#1#2);
			\node[font=\normalsize, color=orange, right=1mm of d1a#1#2.west] {#6};
			\coordinate[below = #5mm of #2.west] (d2a#1#2);
			\coordinate[below = #5mm of #2.east] (d2b#1#2);
			\draw[->, dashed, line width=0.5mm] (d2a#1#2) -- (d2b#1#2);
		    \node[font=\normalsize, right=1mm of d2b#1#2.west] {#3};
		}
 		\drawsequence{f3}{f3}{4}{6}{12}{2};
  		\drawsequence{f3}{f2}{14}{18}{24}{10};
  		\drawsequence{f2}{f1}{48}{30}{36}{33};

		\coordinate[below = 5mm of f3.east, xshift=+10mm] (p1);
		\coordinate[below = 40mm of f3.east, xshift=+10mm] (p2);
		\draw[->] (p1) -- (p2) node[midway,above,rotate=270] {Reading sequence};

	\end{tikzpicture}
	\caption{Example of tape movement when reading files $3$, $2$, and $1$, in this order. Reading operations start at the rightmost file. Solid (colored) and dashed (black) arcs depict repositioning and reading operations, respectively. Labels represent the distance traversed when the movement is finished.
	}
	\label{fig:exampleSchedule1}
\end{figure}

The main characteristic of tapes is that files are distributed sequentially and contiguously throughout their tracks, leading to difficult operational aspects that share a close relationship with fundamental single-resource scheduling problems \citep{pinedo2016scheduling}. Figure \ref{fig:exampleSchedule1} illustrates a tape structure with three files with sizes $15$, $4$, and $2$ bits. Files have a left-to-right orientation, in that a retrieval operation must read the files linearly from their leftmost bit to their rightmost bit. Restrictions on the tape head also specify that only one file can be processed at a time. Thus, when reading a file, the tape head must first reposition itself to the left of the file, read it by traversing its bits, and again reposition itself to the left of the next file in the reading sequence. The figure depicts the retrieval sequence (3, 2, 1) and the total distance traversed for each repositioning and reading operation, assuming that the tape starts at the end of the rightmost file.

Our work investigates the retrieval problem faced by data centers considering offline and online settings. Specifically, in each planning period the data center receives a service request from customers or automated systems describing a set of files to be recovered from a tape. The objective is to define a sequence of file reads to minimize the total (or average) \textit{response time}, i.e., the time elapsed until each file starts to be read for the first time. This is a standard measure, as it captures quality of service and technical considerations \citep{hillyer1996random}. For instance, in Figure \ref{fig:exampleSchedule1} the response time is $2+10+33=45$ for a tape speed of one time unit per bit. 

The challenge in practical storage systems is scalability and interpretability. The deterministic problem variant (i.e., the files to retrieve are known in advance) is tractable with a time complexity that is $\bigo(n^4)$ in the number of files $n$. However, tapes often include thousands of files, and in standard guidelines no more than a few seconds is alloted to file-sequencing algorithms; for example, to enable batch analytics directly on archived data \citep{analyticsOnTape2014}. Tape hardware is also limited to less expensive, slower processors that are more energy efficient given the scale of operations. Further, data centers have a strong preference for low-complexity and understandable algorithms that can be justified, as small policy changes have a propagating effect in file retrieval times due to the large number of tapes and files in a data center. Not surprisingly, the standard tape file system employs first-in first-out (\fifo{}) heuristic strategies \citep{ISO20919}, processing files in the order that they are received. 

In view of these properties, this work provides a formal study of efficient but interpretable sequencing policies considering deterministic, stochastic, and online service request variants. In the deterministic setting, we consider a cold-storage environment where batches of read requests are uncorrelated but possibly few and far between; thus, the reading sequence can be processed offline. We investigate the theoretical performance of \fifo{} and approximation algorithms based on intuitive reverse- and forward-reading directions. We also show that this perspective reveals an alternative polynomial-time exact approach based on dynamic programming (DP).

The stochastic setting assumes that service requests are frequent and that a probability distribution over file requests is known. Our objective is to leverage distributional information to predetermine a reading sequence to be applied to all requests, which is in line with the previous interpretable policies. We show that, in this setting, the proposed DP for the exact deterministic approach can be adapted to a fully polynomial-time approximation scheme for the stochastic problem. Of particular importance is that the underlying DP only needs to be solved once, and the resulting sequence can be reapplied if the distributions remain unchanged.

Finally, we investigate an online setting incorporating real-time service requests when no distribution information is available. We first show that~\fifo{} has an arbitrarily poor performance in worst-case but common scenarios. We then develop a constant-factor competitive policy by iteratively alternating linear forward and backward reading operations. We also show that no competitive factor exists for an alternative objective criterion that considers file release times. 

To evaluate the policies, we provide a numerical study using artificial and real tape systems, the latter used for satellite imaging managed by an industry partner. The results suggest that the ordering of policies with respect to their theoretical performance is also observed empirically. In particular, our best proposed algorithm with constant-factor guarantee is often optimal and of sufficiently low complexity to be adopted in practice.
Moreover, if distributional information is available, the sequence derived from the approximation scheme outperforms approximate policies, especially if the request probabilities are relatively high.

\smallskip
\textit{Summary of contributions.} The primary contribution of this work is the formal study of approximate but low-complexity and interpretable policies for sequencing tasks in a linear storage device, considering both deterministic and uncertain cases. This adds to the existing literature that has focused on offline heuristics \citep{Zhang06,Schaeffer11}. In the deterministic setting, we identify worst-case performance ratios as well as cases where the approximations are optimal. In the stochastic setting, we adapt a polynomial-time dynamic program to develop a fully polynomial approximation scheme for use when access frequencies are available. For the online case, we develop constant-factor approximations and inapproximability of certain objective criteria. Finally, we analyze the numerical performance of the policies and draw insights on their empirical approximation ratio based on the instance structure, such as tape size and file size variance.

\smallskip 
\textit{Paper structure.} The paper is organized as follows. \S\ref{sec:relatedwork} reviews existing work and discusses the relationship between linear tape storage and existing scheduling problems. \S\ref{sec:problemdescription} formalizes the problem, and \S\ref{sec:deterministic} investigates the deterministic setting. \S\ref{sec:stochastic} studies the stochastic setting and introduces the fully polynomial time approximation scheme. \S\ref{sec:online} discusses the online approach and associated competitive ratios. \S\ref{sec:numerical} presents our numerical study. Finally, \S\ref{sec:conclusions} concludes and discusses future work. Proofs omitted in the main text are included in the electronic appendix.

\section{Related Work}
\label{sec:relatedwork}

Scheduling for sequential storage systems is a classical research stream in both the operations and computer science literature. To the best of our knowledge, \cite{Day1965letter} is the first to formalize a request problem over a multiple-file data storage system, proposing an integer programming (IP) formulation and heuristics considering potentially overlapping files. We note that model-based approaches such as IP typically do not scale due to the strict solution time adopted in industry and the limited processing power of tape hardware.

Early work has primarily focused on storage design, i.e., how to arrange files in a storage medium to minimize the long-run average response time. The study by \cite{cody1976record} was pioneering in this field, demonstrating that assigning files of the same size to different storage sectors is NP-hard if the file access distribution is known; other related problems are presented in Section A.4 of~\cite{GareyJ79}.

The simplest variant with no uncertainty or side constraint in tape design is the so-called \textit{optimal storage in tapes}, which can be efficiently solved by a greedy algorithm that sorts file in ascending size order \citep{parker1994creed}. It has since become a seminal example of greedy algorithms and fostered related fields, such as greedoid theory \citep{korte2012greedoids}. In this paper, we focus on the operational aspect of establishing file read orderings, assuming that (i) files have already been placed on the tape and (ii) access distributions may change. Further, we show that greedy approaches are suboptimal and investigate their theoretical performance ratios. 

Recent literature has focused on heuristic policies for tapes with varying organizational structures. Most notably, \cite{Hillyer96} and \cite{Sandsta99} investigated the problem of estimating tape velocity in serpentine tapes, proposing simple scheduling heuristics that orders files based on their physical position in the tape. \cite{hillyer1996random} developed a greedy heuristic based on an assymetric traveling salesperson problem (TSP) reduction for tertiary storage systems, simulating its performance against a pre-defined weave ordering for a serpentine tape. Similar sorting or greedy heuristics are found in other file systems, e.g., as discussed in \cite{Zhang06} and \cite{Schaeffer11}. 

Our problem is closely related to the traveling repairman problem (TRP) \citep{fischetti1993delivery,coene2011charlemagne,pinedo2016scheduling}. Particularly relevant to our work is the line-TRP, where vertices are distributed on a straight line. While the TRP is NP-complete in general \citep{AfratiCPPP86,simchi1991minimizing}, the line-TRP can be solved in polynomial time if the processing times are zero \citep{Bock15,PsaraftisSMK90}. The complexity of the line-TRP with general processing times is still unknown. A related problem is the dial-a-ride on the line (line-DR), which transports products between pairs of vertices using one or more capacitated vehicles. \cite{PaepeLSSS04} present a classification of dial-a-ride problems, noting that minimizing the total completion times for the line-DR is NP-hard even if vehicles have capacity one. We discuss the formal relationship between our problem and the TRP in \S\ref{sec:connection}.

Online variations of the problems above are also related to sequential storage problems. One example is the online TSP on the line, in which new vertices to visit appear during a tour \citep{jaillet2006online}. \cite{bjelde2017tight} present a 1.64-competitive algorithm for the online TSP on the line, which is the best-possible competitive factor for the problem~\citep{AusielloFLST01}.  

For linear storage systems, \cite{honore2022exact} showed that the deterministic problem is solvable with time complexity that is quartic in the number of files on the tape, a result which we also obtain independently from an alternative perspective. Moreover, in a preliminary version of this work,  \citet{cardonha2016online}  proposed heuristic strategies to minimize flow time for interleaved read and write operations. In contrast to this earlier work, we consider the more realistic cold-storage setting where the tape is locked (no writes are possible), incorporate the response-time criteria, and provide a study of the theoretical worst-case performance of approximate methodologies for deterministic and uncertain variants of the problem. 


\section{The Linear Tape Sequencing Problem}
\label{sec:problemdescription}

In this section, we formalize in \S \ref{sec:description} the base linear tape sequencing model to be investigated in this work. Next, in~\S\ref{sec:connection} we describe its connection to two classical problems in the scheduling and routing literature. Finally, in~\S\ref{sec:practicalConsiderations} we discuss  the practical assumptions underlying the model.

\subsection{Problem Description}
\label{sec:description}

A tape is a set of $\numfiles$ files $\fileset \equiv \{1,\dots, \numfiles\}$ distributed sequentially and contiguously on a line discretized by bit units. The files have a left-to-right storage orientation, i.e.,  each file $\file \in \fileset$ begins in its left-bit position $\leftfile{\file}$ and has a size of $\filesize{\file}$ bits. The right-bit position of file $\file$ is the first bit where the succeeding file starts, i.e., $\rightfile{\file} = \leftfile{\file} + \filesize{\file}$.  The first file begins at the initial position of the tape, $\leftfile{1} = 0$, and the last file ends at position $\tapelength = \rightfile{\numfiles}$, which coincides with the logical end of the tape. The tape length is defined by $\tapelength = \sum\limits_{\file \in \fileset} \filesize{\file}$. 

The tape is traversed by its head driver, which always start at the end of the tape at bit $\tapelength$. At each decision epoch, the data center receives a subset of files in $\fileset$ to be retrieved from the tape. We introduce a parameter $\fileweight{\file} = 1$ to indicate if file $\file \in \fileset$ is requested, and $\fileweight{\file} = 0$ otherwise.
We assume that the first file is always requested, $\fileweight{1} = 1$; otherwise, the left-position of the tape can be adjusted accordingly. The objective is to find a permutation $\ordertuple = (\order_1, \dots, \order_{\numfiles})$ of $\fileset$ that minimizes the total response time of requested files in the reading sequence implied by $\ordertuple$. More precisely, the response time of the $t$-th file, $\order_t \in \fileset$, is the time elapsed until the first bit of $\order_t$ is reached, i.e., 
\begin{align}
	\label{eq:responseTime}
	\policyval{t}{\ordertuple}
	\equiv	
		\sum_{t' = 1}^{t} \left(\filesize{\order_{t'-1}} + \dist{\order_{t'-1}}{\order_{t'}}\right) \nu,
\end{align}
where $\filesize{\order_{0}} = 0$, $\rightfile{\order_{0}}$ is the position at which the tape head begins, $\dist{\order_{t'-1}}{\order_{t'}} \equiv |\leftfile{\filep} - \rightfile{\file}|$ accounts for the time spent reading $\order_{t'-1}$ and repositioning the right-bit position of~$\order_{t'-1}$ to the left-bit position of~$\order_{t'}$, and $\nu$ is the velocity of the tape in time units per bit. We assume without loss of generality that $\nu = 1$. Thus, the deterministic linear tape sequencing problem (\ref{model:LTS}) solves
\begin{align}
	\tag{LTS} \label{model:LTS}
	\optpolicyval 
	\equiv 
	\min_{ \ordertuple } 
		\sum_{t = 1}^{\numfiles} \fileweight{\order_{t}} \policyval{t}{\ordertuple}.
\end{align}

Note that $\numfiles=3$, $\filesize{1} = \rightfile{1} = \leftfile{2} = 15$,  and $\rightfile{3} = \tapelength = 15+4+2 = 21$ for the example in Figure \ref{fig:exampleSchedule1}. For an additional illustration, we also include a larger example in \ref{sec:example} of the electronic companion comparing the response time of two different sequences.

\subsection{Connection to Scheduling and Routing}
\label{sec:connection}

The~\ref{model:LTS} shares connections with other fundamental optimization models from the scheduling and routing literature. In particular, note that, for $t > 1$, we have $\policyval{t}{\ordertuple} = \policyval{t-1}{\ordertuple} + \left(\filesize{\order_{t-1}} + \dist{\order_{t-1}}{\order_{t}}\right)$ in \eqref{eq:responseTime}. Expanding each term in the summation of the objective provides the reformulation
\begin{align}
	\optpolicyval 
	=
	\min_{ \ordertuple } 
	\;
	\sum_{t=1}^{\numfiles} \left( \numrequests - \sum_{t'=1}^{t-1} \fileweight{\order_{t'}} \right) \left( \filesize{\order_{t-1}} + \dist{\order_{t-1}}{\order_{t}} \right),
	\label{eq:objRewriting}		
\end{align}
where $m = \sum\limits_{\file \in \fileset} w_{\file}$ is the number of requested files. Thus, $\optpolicyval$ is equivalent to a latency (or time-dependent) cost function in scheduling. That is, prior to reading a file at the $t$-th position of the sequence, each bit the tape head traverses increases the total response time by the quantity $\left( \numrequests - \sum\limits_{t'=1}^{t-1} \fileweight{\order_{t'}} \right)$,  depicting how many requests are left to be serviced. Based on this perspective, we show below that the \ref{model:LTS} is a special case of the traveling repairperson problem (TRP) and the dial-a-ride problem. Thus, our results are applicable if the  structure of the \ref{model:LTS} is present.

\smallskip
\noindent \textit{Traveling Repairperson Problem (TRP).}
Given a set of points $\mathcal{V} \cup \{0\}$ and symmetric distances $D_{\file, \filep}$ for any pair $\file, \filep \in \mathcal{V} \cup \{0\}$, the TRP asks for a Hamiltonian tour starting at $0$ that minimizes the sum of distances traversed from point $0$ to each other point in the tour. We can cast the~\ref{model:LTS} as an asymmetric variant of the TRP as follows. The vertex $0$ is represented by an artificial zero-sized file $\numfiles+1$ located at the end of the tape, i.e., $\leftfile{\numfiles+1}=\rightfile{\numfiles+1}=\tapelength$, while the remaining points $\mathcal{V} = \{\file \in \fileset \colon \fileweight{\file} = 1 \}$ are mapped to requested files. The distances between any $\file, \filep \in \mathcal{V}$ are $D_{\file, \filep} = \filesize{\file} + \dist{\file}{\filep}$ and $D_{\filep, \file} = \filesize{\filep} + \dist{\filep}{\file}$. That is, during a tour, we arrive on the left $\leftfile{\file}$ of a file $\file$, and moving to the next file $\filep$ requires us to first traverse the length $\filesize{\file}$ of $\file$. 
Thus, 
the \ref{model:LTS} is also a special case of the time-dependent TSP \citep{abeledo2013time}.

\smallskip
\noindent \textit{Dial-a-ride problem (DARP).} Let $(o_1,q_1), (o_2, q_2), \dots, (o_n, q_n)$ be a set of pickup-and-delivery pairs, where each $o_t$ and $q_t$ is an origin point and a destination point, respectively. The DARP asks for vehicle routes to serve each pair while observing vehicle capacities and a quality metric associated with the distance traversed. The \ref{model:LTS} corresponds to a DARP with a single vehicle of unitary capacity. That is, the requests are distributed on a line and each mapped to a requested file $\file$, $\fileweight{\file} = 1$, with $o_t = \leftfile{\file}$ and $q_t = \rightfile{\file}$. Moreover, $o_t \leq q_t$, i.e., the vehicle always moves from the left to the right when delivering a request, and all requests are positioned on the left of the start point of the vehicle. The objective is to minimize the distances to reach each origin point. In general, the line variant of DARP with a single vehicle and capacity one is NP-complete \citep{PaepeLSSS04}.

\subsection{Practical Considerations}
\label{sec:practicalConsiderations}

Next, we discuss the modeling assumptions and reasoning underlying the \ref{model:LTS}. In our setting, all reading operations concern files located in the same data track, i.e., the tape only moves horizontally. While generally tapes may have other organizational structures, such as a serpentine arrangement, our single-track setting results in a linear positioning of the files, as is common and desirable by data center managers to preserve file locality \citep{Oracle11}. For reference, a single track of a modern tape (e.g., IBM TS1160) may store up to 400 GB of data.

Tape hardware consists of a single reader, so only one file is read at a time. To position the tape  head at a particular bit location, the tape medium is either rewound or fast-forwarded accordingly, which we refer to as tape head positioning in this work. We also note that files are read from left to right. Such a reading (or traversal) direction is due to an operational restriction of tape hardware, as data cannot be retrieved when the track is traversed backwards \citep{ISO20919}. 

The tape heads start at the last position $\tapelength$ because of the append-base nature of tapes. That is, tapes always store their table of contents (TOC) with the list of files, their sizes, and corresponding positions at the last bit $\tapelength$ of the customer-provided data. When loading a tape, the driver must always read the TOC first to locate the remaining files, which positions the head at $\tapelength$. 

We assume that the tape medium is rewound and fast-forwarded at a constant speed $\nu$. This is a standard modeling simplification in tape design and scheduling, as the physical components of a tape drive are mechanical and therefore require acceleration and deceleration. Finally, we note that  the tape speed is not affected by its traversal direction or the execution of a reading operation.


\section{The \ref{model:LTS} with Deterministic Service Requests}
\label{sec:deterministic}

This section investigates the offline \ref{model:LTS}, where we establish $\pi$ knowing all files that have been requested. We begin in \S\ref{sec:stagePartition} by describing a partition of a solution into stages, which will serve as the basis of our results. We then  analyze current policies in \S\ref{sec:basicpolicies}. Next, in~\S\ref{sec:constantRatio} we propose  two new constant-ratio policies that are also easy to justify and implement. Finally, in \S\ref{sec:exactapproach} we show a more technical result of a polynomial-time dynamic program based on the stage partioning constructs.

\subsection{Stage Partitioning}
\label{sec:stagePartition} 

The linear structure of tapes implies a partition of any sequence~$\ordertuple$ in stages of directional movement. Specifically, note that the tape head will always move to bit 0
to read the first file. We state this in Definition \ref{def:phases}, which is key for drawing policy intuition. 
\begin{definition}[Rewind and Forward Stages]
	\label{def:phases}
	The \textit{rewind stage} of $\ordertuple$ refers to the set of reading operations and translational movements prior to reaching bit 0 for the first time. The \textit{forward stage} refers to all movements and reading operations after the rewind stage. 
\end{definition}

Given a sequence $\ordertuple$, we denote by $\rewindphase_{\ordertuple}$ and $\forwardphase_{\ordertuple} \equiv \fileset \setminus \rewindphase_{\ordertuple}$ the set of files read during the rewind and forward stages, respectively.  In the example shown in Figure \ref{fig:exampleSchedule1}, $\rewindphase_{\ordertuple} = \{2,3\}$ and $\forwardphase_{\ordertuple} = \{1\}$; note that file 1 always belongs to the the forward stage. Proposition \ref{prop:allforward} shows that once the forward stage begins, the tape head only needs to perform a single left-to-right movement. Consequently, we can restrict our attention to cases where bit 0 is reached only once.
\begin{proposition}[\sc{Forward-Stage Ordering}]
	\label{prop:allforward}
	There exists an optimal sequence $\optordertuple$ such that the forward-stage files are read in ascending order, i.e., $\order^*_{t} < \order^*_{t+1}$ for all $\order^*_{t}, \order^*_{t+1} \in \forwardphase_{\ordertuple^*}$, $t < \numfiles$.
\end{proposition}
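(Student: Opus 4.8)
The plan is to argue by an exchange (interchange) argument: starting from an arbitrary optimal sequence $\optordertuple$, I show that any ``inversion'' among forward-stage files can be removed without increasing the objective, so that iterating yields an optimal sequence whose forward-stage portion is sorted in ascending order. First I would fix an optimal $\optordertuple$ and consider its forward stage, i.e.\ the suffix of the sequence consisting of the files in $\forwardphase_{\optordertuple}$, read after bit $0$ is reached for the last time. Recall from Definition~\ref{def:phases} and the discussion preceding the proposition that, once the forward stage begins, the tape head sits at (or to the left of) the leftmost file of that stage and only moves rightward; in particular, within the forward stage the head never needs to rewind. The key structural fact I would use is the reformulation~\eqref{eq:objRewriting}: each bit traversed before the $t$-th read contributes a weight equal to the number of still-unserviced requests, and this weight is nonincreasing along the sequence. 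So, relative to a fixed prefix (the rewind stage plus everything up to some point), reordering a contiguous block of forward-stage files only affects the within-block repositioning distances weighted by these nonincreasing coefficients.

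The heart of the argument is a local swap: suppose two files $\order^*_t$ and $\order^*_{t+1}$ are both in $\forwardphase_{\optordertuple}$ with $\order^*_t > \order^*_{t+1}$ (an inversion — the head moves right past $\order^*_{t+1}$ on its way to $\order^*_t$, then comes back). I would compare the original sequence with the one obtained by swapping these two files. Because all forward-stage files lie to the right of bit $0$ and the head is moving forward, reading them in the order $(\ldots, \order^*_{t+1}, \order^*_t, \ldots)$ is a monotone left-to-right sweep over exactly the same interval, whereas $(\ldots, \order^*_t, \order^*_{t+1}, \ldots)$ overshoots to $\rightfile{\order^*_t}$ and backtracks to $\leftfile{\order^*_{t+1}}$. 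A direct computation of $\dist{\cdot}{\cdot}$ on the two arrangements, plugged into~\eqref{eq:objRewriting}, shows the swapped sequence has total traversed-and-weighted distance no larger: the extra backtracking distance in the unsorted version is multiplied by coefficients that are at least as large as those the sorted version pays, and the file-reading terms $\filesize{\cdot}$ are unaffected since both files are still read exactly once. I should be slightly careful that the swap does not change which files belong to the rewind versus forward stage — but since bit $0$ has already been passed, swapping two files entirely inside the forward stage cannot move the ``bit $0$ reached'' event, so the partition $(\rewindphase, \forwardphase)$ is preserved.

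Finally, I would invoke a standard bubble-sort termination argument: each such swap strictly decreases the number of inversions among forward-stage files (or, to avoid cycling when objective values are tied, pick the swap that reduces inversions), and the objective never increases, so after finitely many swaps we reach an optimal sequence with no forward-stage inversions, i.e.\ $\order^*_t < \order^*_{t+1}$ for all consecutive forward-stage files, as claimed. The main obstacle I anticipate is bookkeeping in the swap inequality — precisely tracking how the repositioning distances $\dist{\order^*_{t-1}}{\order^*_t}$, $\dist{\order^*_t}{\order^*_{t+1}}$, and $\dist{\order^*_{t+1}}{\order^*_{t+2}}$ transform under the exchange, and verifying the coefficient monotonicity lines up in the right direction — rather than any conceptual difficulty; the linear geometry makes the inequality ``obviously true,'' but writing it cleanly requires care with the absolute values in the definition of $\dist{\cdot}{\cdot}$ and with boundary cases where $\order^*_t$ or $\order^*_{t+1}$ is the first or last file of the forward stage.
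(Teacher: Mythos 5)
There is a genuine gap: your swap inequality rests on the assertion that ``once the forward stage begins, the tape head \ldots only moves rightward,'' but for an arbitrary optimal sequence with forward-stage inversions this is exactly the property being proved, so the justification is circular. Concretely, the adjacent transposition you propose is \emph{not} always non-increasing. Suppose the forward-stage reading order contains $\ldots,\order^*_{t-1},\order^*_{t},\order^*_{t+1},\ldots$ with tape positions $\order^*_{t+1}<\order^*_{t-1}<\order^*_{t}$ (e.g.\ forward order $(1,3,5,2,\ldots)$ on the tape). After reading $\order^*_{t-1}$ the head sits at $\rightfile{\order^*_{t-1}}$, which lies to the \emph{right} of $\leftfile{\order^*_{t+1}}$. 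Swapping $\order^*_{t}$ and $\order^*_{t+1}$ then forces a backtrack from $\rightfile{\order^*_{t-1}}$ down to $\leftfile{\order^*_{t+1}}$ and back up past $\order^*_{t-1}$ again, which increases the response time of $\order^*_{t}$ by $2\bigl(\rightfile{\order^*_{t-1}}-\leftfile{\order^*_{t+1}}\bigr)$; the compensating decreases (for $\order^*_{t+1}$ and for later requested files) are only of order $\filesize{\order^*_{t}}$ per file, so when the two files are far apart, $\filesize{\order^*_{t}}$ is small, and few requests remain, the swap strictly increases the objective. Your picture of ``monotone sweep vs.\ overshoot-and-backtrack over the same interval'' is valid only when the head enters the pair from the left of both files, which you cannot assume mid-bubble-sort. (A minor additional slip: the forward stage is defined by the \emph{first} time bit $0$ is reached, not the last.)

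The paper avoids this by using a different local move. Given an inversion, i.e.\ forward-stage files with $t<t'$ and $\order^*_{t}>\order^*_{t'}$, it observes that since $1\in\forwardphase_{\optordertuple}$ the head must traverse $\order^*_{t'}$ left-to-right before reading $\order^*_{t}$; it locates the \emph{first} such traversal, occurring during the move between two consecutively read forward-stage files $\order^*_{\hat t},\order^*_{\hat t+1}$, and re-inserts the read of $\order^*_{t'}$ there. This relocation (not an adjacent swap) reads $\order^*_{t'}$ the first time the head passes over it, so its response time only decreases, files read earlier are untouched, and files read later can only improve because the head never has to rewind back to $\leftfile{\order^*_{t'}}$; iterating removes all inversions. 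If you want to salvage an exchange-style write-up, you would need to replace your transposition by this insertion move (or restrict attention to inversions where the head provably approaches from the left), and redo the distance bookkeeping accordingly.
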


\subsection{Analysis of Low-Complexity Policies}
\label{sec:basicpolicies} 

We now analyze three 
low-complexity policies that are adopted in practice. We first provide a general description and formalize their theoretical performance in Proposition \ref{prop:basepolicies}.

\smallskip
\noindent \textit{First-In, First-Out} (\fifo{}). The \fifo{} policy, featured in tape management software, sequences files according to the (typically arbitrary) order determined by the system that submitted the requests. \fifo{} is equivalent to a fully randomized policy and, thus, can deliver arbitrarily poor results.

\smallskip
\noindent \textit{First-File-First} (\fififi{}). The \fififi{} policy reads files in ascending order of their indices, i.e., it generates a sequence $\ordertuple$ such that $\order_i < \order_{i+1}$ for all $i = 1,\dots,\numfiles$.
Thus, the policy delays all files to the forward stage.
The practical intuition is that the response time would be low because the tape head makes only two  movements, one to rewind the tape to its bit 0 and another to read the files in a single pass. Indeed, \fififi{} delivers strong constant-factor approximations in line routing problems (e.g., \citealt{bhattacharya2008single}). 

\smallskip
\noindent \textit{Shortest Size First (\ssf)}. The \ssf{} reads files in ascending order of file size, i.e., it is a generalization of the classical shortest processing time first (SPT) policy and has time complexity of $\bigo(m \log m)$. The motivation follows from the latency-type objetive reformulation \eqref{eq:objRewriting}, where the SPT is optimal for related scheduling problems with similar criteria \citep{pinedo2016scheduling}. The \ssf{} also solves the optimal storage in tapes problem (\S\ref{sec:relatedwork}). 

\begin{proposition}[\sc{Performance of \fifo{}, \fififi{}, and \ssf{}}]
	\label{prop:basepolicies}
	The following statements hold:
	\begin{itemize}
		\item[(a)] \fifo{}, \fififi{}, and \ssf{} are  $\Omega(\numfiles)$-approximations for the \ref{model:LTS}.
		\item[(b)] \fififi{} is optimal if all the $\numfiles$ files in the tape are requested and they are of the same size.
	\end{itemize}
\end{proposition}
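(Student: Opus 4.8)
The plan is to prove each part separately, with part (a) being the substantive worst-case constructions and part (b) a direct optimality argument.

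For part (a), I would exhibit, for each of the three policies, a family of instances on which the policy's response time exceeds $\optpolicyval$ by a factor growing linearly in $\numfiles$. For \fifo{}, since the submitted order is arbitrary, I can take an instance where the adversarial order forces the head to oscillate: place $\numfiles$ requested files of roughly equal size and let the submitted permutation alternate between files near bit $0$ and files near bit $\tapelength$, so that \fifo{} traverses $\Theta(\tapelength)$ per request while the optimal solution (by Proposition~\ref{prop:allforward}, essentially a single rewind followed by an ascending pass) pays only $\Theta(\tapelength)$ in total over all requests; the ratio is then $\Theta(\numfiles)$. For \fififi{} and \ssf{}, the key is that delaying \emph{every} file to the forward stage (resp.\ reading in size order) is wasteful when there is one small file sitting far to the right: consider a tape with one large file $1$ (so $\fileweight{1}=1$) occupying $[0, \tapelength - \numfiles]$ and then $\numfiles-1$ tiny requested files of unit size packed into $[\tapelength-\numfiles, \tapelength]$. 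The optimal policy reads the tiny files on the way in (rewind stage) at near-zero marginal response cost and then reads file $1$ last, giving total response time $\Theta(\tapelength)$; \fififi{} instead rewinds all the way to $0$ first, incurring $\Theta(\tapelength)$ response time for \emph{each} of the $\Theta(\numfiles)$ tiny files, hence $\Theta(\numfiles \tapelength)$. The same instance works for \ssf{}, which likewise reads file $1$ first or interleaves badly; I would tune the sizes so \ssf{} is forced to make a long early traversal charged against many pending requests. Each case reduces to writing down $\policyval{t}{\ordertuple}$ from \eqref{eq:responseTime} for the policy's sequence, summing, and comparing to an explicit feasible (indeed optimal, via Proposition~\ref{prop:allforward}) sequence.

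For part (b), suppose all $\numfiles$ files are requested ($m = \numfiles$) and all have common size $s$, so $\tapelength = \numfiles s$ and $\leftfile{\file} = (\file-1)s$. Using the latency reformulation \eqref{eq:objRewriting}, with all $\fileweight{\order_{t'}}=1$ the objective becomes $\sum_{t=1}^{\numfiles} (\numfiles - (t-1))\bigl(\filesize{\order_{t-1}} + \dist{\order_{t-1}}{\order_t}\bigr)$; that is, the $t$-th repositioning-plus-read increment is weighted by the nonincreasing coefficient $\numfiles - t + 1$. I would argue that \fififi{}, which performs one rewind of length $\tapelength - s$ (from $\rightfile{\numfiles} = \tapelength$ to $\leftfile{1}=0$, passing through no reads since $\filesize{\order_0}=0$) followed by $\numfiles-1$ unit steps of length exactly $s$ each (moving from $\rightfile{\file}$ to $\leftfile{\file+1}$ costs $s + 0$), attains the minimum. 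The core of the argument: any sequence must eventually reach bit $0$ (file $1$ is requested), so it incurs at least a total leftward displacement bounded below by $\tapelength - s$ before file $1$, and by Proposition~\ref{prop:allforward} we may assume the forward stage is a single ascending pass; a short exchange/telescoping argument then shows that any reading in the rewind stage only moves "expensive" (high-coefficient) weight onto longer traversals, so the cost is minimized by emptying the rewind stage entirely — which is exactly \fififi{}. Formally I would compare an arbitrary optimal $\optordertuple$ (satisfying Proposition~\ref{prop:allforward}) against \fififi{} and show term-by-term, after accounting for the conserved total forward distance $\tapelength$, that \fififi{}'s weighting of increments by the coefficients $\numfiles, \numfiles-1, \dots, 1$ is the one that pairs the largest coefficients with the shortest increments.

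The main obstacle is part (b): making the exchange argument airtight requires careful bookkeeping of how the latency coefficients $\numfiles - \sum_{t'<t}\fileweight{\order_{t'}}$ interact with the physical distances when a file is moved from the rewind stage to the forward stage, since removing a file from the rewind stage changes both which increments exist and the total distance traversed before bit $0$. I would handle this by first invoking Proposition~\ref{prop:allforward} to fix the forward stage as an ascending run, then induct on the number of files in the rewind stage, showing each removal is non-increasing in objective value; the equal-size hypothesis is what makes every unit increment in the all-forward sequence have the same length $s$, which is what forces \fififi{} to be optimal rather than merely one of several optima.
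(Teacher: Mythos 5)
The genuine gap is in part (a), for \ssf{}. Your hard instance for \fififi{}/\ssf{} puts the single large \emph{requested} file at the left end and the $\numfiles-1$ unit requested files at the right end. That breaks \fififi{} (provided you also fix the scale, e.g.\ give the large file size $\Omega(\numfiles^2)$, since otherwise the optimum is dominated by the $\Theta(\numfiles^2)$ cost of the unit files and the ratio collapses to $O(1)$), but it does not break \ssf{}: \ssf{} reads requested files in \emph{ascending} size order, so in your instance it reads file $1$ \emph{last}, not first, contrary to what you assert. It first services the unit files, which sit next to the head's starting position $\tapelength$ (under the natural index tie-break this is a short rewind of length $\numfiles-1$ followed by one forward pass, total response-time cost $O(\numfiles^2)$), and only then rewinds for file $1$, giving $O(\numfiles^2+\tapelength)$ overall. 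Since every sequence pays at least $\tapelength$ for file $1$ (the head must reach bit $0$) and at least $\Omega(\numfiles^2)$ for the unit files (the $k$-th unit file serviced has response time at least $k-1$), the optimum is $\Omega(\numfiles^2+\tapelength)$, so \ssf{} is $O(1)$-approximate on your entire family; no tuning of sizes within the ``large requested file on the left, small requested files on the right'' template can fix this. What is needed is a structurally different instance, which is what the paper uses: file $1$ requested with size $1-\epsilon$, file $2$ \emph{unrequested} with size $\numfiles^2$, files $3,\dots,\numfiles$ requested with unit size. There the strictly smallest requested file is at the far left, so \ssf{} (and \fififi{}, and \fifo{} under the matching arrival order) must cross the huge unrequested file before servicing the $\Theta(\numfiles)$ right-side requests, each of which then has response time $\Theta(\numfiles^2)$, giving $\Theta(\numfiles^3)$ versus the $\Theta(\numfiles^2)$ achieved by $(3,4,\dots,\numfiles,1,2)$; a single family settles all three policies at once.

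Two smaller points. In your \fifo{} construction the accounting is off by a factor of $\numfiles$ on both sides: with all $\numfiles$ equal-size files requested and half of them in the left half of the tape, the optimum is $\Theta(\numfiles\tapelength)$, not $\Theta(\tapelength)$, and the alternating order costs $\Theta(\numfiles^2\tapelength)$ (the $k$-th serviced request has response time about $k\tapelength$), not $\Theta(\numfiles\tapelength)$; the ratio is still $\Theta(\numfiles)$, so the construction survives, but the stated magnitudes do not. For part (b), your plan (fix the forward stage via Proposition~\ref{prop:allforward}, then remove rewind-stage files one at a time by an exchange argument) is essentially the paper's proof; the paper additionally first argues that with equal sizes the rewind-stage files may be assumed read in descending order, and then verifies that moving the lowest-positioned rewind-stage file whose left tape-neighbor is a forward-stage file into the forward stage changes the objective by exactly zero: its own response time grows by $2\filesize{}$ times the number of forward-stage files now read before it, which is exactly offset because each of those files, all requested, has its response time reduced by $2\filesize{}$. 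In particular the exchange is cost-neutral rather than strictly improving, so \fififi{} is \emph{an} optimum among possibly many (the all-rewind sequence of \fifila{} is also optimal in this setting, cf.\ Proposition~\ref{prop:fifilaPerformance}); your closing remark that equal sizes force \fififi{} to be optimal ``rather than merely one of several optima'' is neither needed nor correct, though it does not affect the claim being proved.
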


Contrary to intuition, Proposition \ref{prop:basepolicies} shows that \fififi{} delivers arbitrarily poor solutions to the~\ref{model:LTS}, similar to \fifo{}. The worst-case scenario occurs in the presence of large requested files close to the end of the tape, as reading them  penalizes requested files positioned in the beginning of the tape; \ssf{} also has  arbitrarily poor performance in these cases. Nonetheless, \fififi{} is optimal for a special scenario where all file sizes are equal. This is observed in practice for certain application domains, such as images or videos with the same resolutions or frame rates.

\subsection{Constant-Ratio Approximation Policies}
\label{sec:constantRatio}

The 
policies presented in \S\ref{sec:basicpolicies} are common in practice because they are scalable and simple to justify. However, their approximation ratios can be significantly high, which is also reflected in our empirical results (\S\ref{sec:numerical}). Below, we introduce two alternative interpretable policies with  stronger constant-factor theoretical guarantees and better empirical performance. 

\smallskip
\noindent \textit{First-File-Last} (\fifila{}). The \fifila{} policy is the reverse of \fififi{}, in that it services all requested files, except the first, in the rewind stage. That is, it generates a sequence $\ordertuple$ in constant time such that the contiguous subsequence $(\order_1, \order_2, \dots, \order_{\numrequests-1},\order_{\numrequests})$ that prefixes $\ordertuple$ spans the $\numrequests$ requested files and $\order_1 > \order_2 > \dots > \order_{\numrequests} = 1$. 
In other words, requested files are read in descending order.

\fifila{} is akin to a myopic policy because files closest to the tape head are read first. We show in Proposition \ref{prop:fifilaPerformance} that it is a 3-approximation. Further, \fifila{} is also optimal for the case where  all files in the tape have equal size, but it does not require all tape files to be requested as in \fififi{}.
\begin{proposition}[\sc{\fifila{} Performance}]
	\label{prop:fifilaPerformance}
	The \fifila{} policy is a 3-approximation algorithm for the~\ref{model:LTS}. Moreover, it is optimal if all tape files are of the same size.
\end{proposition}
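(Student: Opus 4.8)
The plan is to bound the \fifila{} response time against a lower bound on $\optpolicyval$ derived from the stage-partitioning structure. The key observation is that any feasible sequence must, at minimum, reach bit $0$ once (since $\fileweight{1}=1$), and this forces a ``rewind excursion'' of length at least $\tapelength = \rightfile{\numfiles}$ from the starting position; moreover, after that, servicing every requested file requires traversing, in aggregate, at least the span from $0$ to the rightmost requested file. First I would derive a clean lower bound of the form $\optpolicyval \ge \sum_{t} \fileweight{\order_t}\cdot(\text{something})$ by exploiting the latency reformulation \eqref{eq:objRewriting}: each of the $\numrequests$ requested files contributes to the ``remaining requests'' multiplier, so the very first bit traversed is weighted by $\numrequests$, and in particular the optimal objective is at least $\numrequests \cdot (\text{distance to the first file serviced})$ plus the weighted cost of the remaining service. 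A convenient concrete bound is $\optpolicyval \ge \numrequests \cdot (\tapelength - \rightfile{\numrequests\text{-th requested file from the right}})$-type terms together with the sum of right-positions; I would isolate two quantities — the total ``depth'' $\sum \fileweight{\file} \leftfile{\file}$ that every solution must pay for in latency units, and the ``overshoot'' cost of any file read to the right of a still-pending file.

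Next I would compute the \fifila{} response time exactly. Under \fifila{}, starting at $\tapelength$, the head rewinds monotonically to bit $0$, reading the requested files in descending index order; so the response time of the $k$-th requested file (counting from the right) is $\tapelength - \leftfile{\file_k}$ where $\file_k$ is that file, i.e. response times are simply $\tapelength$ minus the left-positions, and the objective is $\sum_{k} (\tapelength - \leftfile{\file_k})$. The comparison then reduces to showing $\sum_k(\tapelength - \leftfile{\file_k}) \le 3\,\optpolicyval$. I would split $\tapelength - \leftfile{\file_k}$ into $(\tapelength - \rightfile{\numfiles\text{'s requested}}) + (\text{stuff})$ and match: the term $\numrequests\cdot(\tapelength-\leftfile{\text{rightmost requested}})$ is a rewind cost every solution pays; the term $\sum_k (\rightfile{\text{rightmost req.}} - \leftfile{\file_k})$ is bounded by the forward-traversal span, which the optimum also pays (up to constants) since after its rewind it must still reach each file. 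Careful bookkeeping of which multiples of which spans appear should land the factor $3$: one factor from the forced rewind, one from the fact that \fifila{} ``reads through'' files on its way, and one from bounding the remaining-request multiplier in \eqref{eq:objRewriting} from above by $\numrequests$.

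For the optimality claim when all $\numfiles$ tape files have a common size $s$, I would argue directly. With equal sizes, $\leftfile{\file} = (\file-1)s$ and the distance/read increments between consecutive-in-sequence files become affine in index gaps, so the objective \eqref{eq:objRewriting} becomes a function only of the multiset of ``jumps'' the sequence makes. Using Proposition~\ref{prop:allforward} (forward stage read in ascending order) and a simple exchange argument, I would show that placing as many requested files as possible in the rewind stage — read in descending order — is never worse: moving a file from the forward stage to the rewind stage trades a later high-multiplier traversal for an earlier one already ``paid'' during the mandatory rewind to bit $0$. Since \fifila{} puts all requested files except file $1$ in the rewind stage in descending order and file $1$ last, it realizes exactly this configuration, hence it is optimal.

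The main obstacle I anticipate is the $3$-approximation constant: getting a lower bound on $\optpolicyval$ that is simultaneously (i) simple enough to state in terms of left/right positions and $\numrequests$, and (ii) tight enough that the elementary upper bound on the \fifila{} cost is within a factor $3$. In particular the optimum may interleave rewind and forward reads cleverly, so I would need the lower bound to capture that \emph{both} the full rewind to bit $0$ \emph{and} a near-full forward sweep are essentially unavoidable; reconciling these two contributions without double-counting, while keeping the constant at $3$ rather than $4$, is the delicate step and likely requires treating the rightmost requested file (and file $1$) as pivots in the accounting.
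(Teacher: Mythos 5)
Your argument for the $3$-approximation breaks at the step where you ``compute the \fifila{} response time exactly.'' Under \fifila{} the head does \emph{not} rewind monotonically to bit $0$: reading a file is a left-to-right traversal, so after servicing a requested file $\filep$ the head sits at $\rightfile{\filep}$ and must re-traverse $\filep$ on its way to the next (smaller-indexed) request. The response time of the $k$-th requested file from the right, say $\file_k$, is therefore $(\tapelength - \leftfile{\file_k}) + 2\sum_{j<k}\filesize{\file_j}$, not $\tapelength - \leftfile{\file_k}$. Your claimed value coincides with the trivial per-file lower bound (the head starts at $\tapelength$ and must reach $\leftfile{\file_k}$ in any sequence), so if it were the true \fifila{} cost, \fifila{} would be optimal on every instance --- contradicting the fact that the factor $3$ is tight (e.g.\ one large requested file at the end of the tape followed, after reading it, by a long backtrack). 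This miscount also explains the difficulty you anticipate about reconciling rewind and forward contributions without double counting: the paper needs no such global lower bound. The correct and much shorter route is purely per-file: the backtracking overhead satisfies $2\sum_{j<k}\filesize{\file_j} \le 2(\tapelength - \rightfile{\file_k}) \le 2(\tapelength - \leftfile{\file_k})$, so each requested file's \fifila{} response is at most $3(\tapelength - \leftfile{\file_k})$, while in \emph{any} sequence its response is at least $\tapelength - \leftfile{\file_k}$; summing over requested files gives \fifila{} $\le 3\,\optpolicyval$. No use of the latency reformulation, pivots, or a ``near-full forward sweep'' bound is needed.

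The equal-size optimality part is in the right spirit --- it is essentially the paper's exchange argument --- but as written it is not a proof. You need (i) the fact that, with equal sizes, rewind-stage files may be assumed to be read in descending order (the paper proves this separately as an auxiliary lemma, alongside Proposition~\ref{prop:allforward} for the forward stage), and (ii) an explicit accounting of the exchange: moving a requested forward-stage file $\file \neq 1$ into the rewind stage, immediately after the nearest rewind-stage file to its right, decreases the response time of $\file$ by at least $2\filesize{}(\file-1)$ while increasing the response time of each of the files $1,\dots,\file-1$ by exactly $2\filesize{}$, so the total change is nonpositive (and the argument is iterated until the sequence coincides with the \fifila{} configuration). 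The phrase ``trades a later high-multiplier traversal for an earlier one already paid'' gestures at this but does not establish the inequality; note in particular that the claim does \emph{not} require all files to be requested, which is exactly why the per-file counting above, rather than a multiplier argument, is what closes it.
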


\noindent \textit{Large-Files-Last Policy (\lafila{})}.
The worst-case scenario of \fifila{} occurs if some files with requests are considerably larger than others, especially if they are located at the end of the tape. Servicing such large files in the rewind stage delays all subsequent read operations, thus increasing overall response times. The intuition of the~\lafila{} policy is to improve the solution of~\fifila{} by 
postponing ``large'' files to the forward stage. To this end, we show in Proposition \ref{prop:necessary_optimality_condition} a condition satisfied by any sequence at optimality, which we use to modify \fifila{}.

\begin{proposition}[\sc{Necessary Optimality Condition}]
	\label{prop:necessary_optimality_condition}
	A sequence $\order$ with $\numfiles > 3$ is optimal to the \ref{model:LTS} only if, for any $t \in \{2,\dots,\numfiles-2\}$ such that $\order_{t} \in \rewindphase_{\ordertuple}$, we have
	\begin{align}
		\label{eq:optcond}
		\sum\limits_{t'=t}^{ \neigh{t} } 
		\left( \filesize{\order_{t'}} + \dist{\order_{t'}}{\order_{t'+1}} \right)
		\geq
		2\filesize{\order_{t}}\left( \numrequests - \sum_{t' = 1}^{t} \fileweight{\order_{t'}} \right),
	\end{align}
	where $\neighs{}{t} \equiv \argmin_{t' > t} \left\{ \leftfile{\order_{t'}} < \leftfile{\order_{t}} < \leftfile{\order_{t'+1}} \right\}$		that is, $\order_{t}$ is traversed for the second time when the tape moves from 
	$\order_{\neighs{\ordertuple}{t}}$ to $\order_{\neighs{\ordertuple}{t}+1}$.
\end{proposition}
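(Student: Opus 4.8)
The plan is to use an exchange argument. Fix an optimal sequence $\ordertuple$ and a position $t \in \{2, \dots, \numfiles - 2\}$ with $\order_t \in \rewindphase_{\ordertuple}$, so that $\order_t$ is read on the rewind pass and is therefore traversed a second time later, on the way from $\order_{\neighs{\ordertuple}{t}}$ to $\order_{\neighs{\ordertuple}{t}+1}$. I would construct a modified sequence $\pordertuple$ that differs from $\ordertuple$ only by \emph{delaying} $\order_t$ to the forward stage — i.e., removing $\order_t$ from its position in the rewind stage and reinserting it at the point in the forward stage where the tape head first passes over $\leftfile{\order_t}$ going left to right. Since $\ordertuple$ is optimal, we must have the objective of $\pordertuple$ no smaller than that of $\ordertuple$, and the inequality \eqref{eq:optcond} should fall out of writing $\optpolicyval(\pordertuple) - \optpolicyval(\ordertuple) \ge 0$ using the latency reformulation \eqref{eq:objRewriting}.

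The key steps, in order, are as follows. First, identify precisely which translational movements change between $\ordertuple$ and $\pordertuple$: under $\pordertuple$, the tape head no longer detours to read $\order_t$ during the rewind pass, which \emph{saves} traversal distance, but it must now read $\order_t$ on the forward pass — however, because $\order_t$ lies between $\leftfile{\order_{\neighs{\ordertuple}{t}}}$ and $\leftfile{\order_{\neighs{\ordertuple}{t}+1}}$ on the line, the forward pass already traverses the bits of $\order_t$, so no extra distance is incurred in the forward stage except that $\order_t$'s own size $\filesize{\order_t}$ is now ``charged'' forward rather than backward. Second, quantify the two effects using \eqref{eq:objRewriting}: the term $\sum_{t'=t}^{\neigh{t}} (\filesize{\order_{t'}} + \dist{\order_{t'}}{\order_{t'+1}})$ is exactly (twice) the extra tape-head excursion that servicing $\order_t$ in the rewind stage forces the head to make before the forward pass resumes — more carefully, it is the distance the head travels from $\order_t$'s position down and back up to where it would otherwise have been — and each bit of that excursion is multiplied by the number of still-unserved requests, which at position $t$ is $\numrequests - \sum_{t'=1}^{t}\fileweight{\order_{t'}}$. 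Third, compare this cost against the cost of handling $\order_t$ in the forward stage, where the relevant multiplier is smaller (fewer requests remain), and collect terms so that the net change being nonnegative yields $\sum_{t'=t}^{\neigh{t}}(\filesize{\order_{t'}} + \dist{\order_{t'}}{\order_{t'+1}}) \ge 2\filesize{\order_t}(\numrequests - \sum_{t'=1}^{t}\fileweight{\order_{t'}})$.

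The main obstacle I anticipate is bookkeeping the effect of the exchange on the response times of the \emph{other} files, particularly those read between position $t$ and position $\neighs{\ordertuple}{t}$ in $\ordertuple$: removing $\order_t$ shifts their positions in the permutation and changes the running count $\sum_{t'}\fileweight{\order_{t'}}$ that weights their traversal costs, and one must verify that these shifts either cancel or only help (reduce the objective), so that the sole obstruction to improvement is the inequality \eqref{eq:optcond}. Getting the direction of the change in the residual-request multiplier right — and being careful that $t \le \numfiles - 2$ and $\numfiles > 3$ guarantee there is a genuine forward-stage file after $\order_t$ to absorb it, so the construction is well-defined — is where the argument needs the stated index restrictions. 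I would also need to confirm that $\neighs{\ordertuple}{t}$ is well-defined (the second traversal of $\order_t$ exists precisely because $\order_t \in \rewindphase_{\ordertuple}$ and file $1$ is always in the forward stage), which Proposition \ref{prop:allforward} and Definition \ref{def:phases} supply.
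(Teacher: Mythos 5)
Your plan is essentially the paper's proof: an exchange argument that removes $\order_{t}$ from its rewind-stage position and has it read at its next left-to-right traversal (the movement from $\order_{\neigh{t}}$ to $\order_{\neigh{t}+1}$), then compares the resulting increase in $\order_{t}$'s response time, which is bounded by the left-hand side of \eqref{eq:optcond}, against the decrease of exactly $2\filesize{\order_{t}}$ in the response time of every still-pending request (the deleted read-and-return detour), whose total is the right-hand side; optimality of $\ordertuple$ then forces \eqref{eq:optcond}. One caution: you describe the reinsertion as ``delaying $\order_{t}$ to the forward stage,'' but the traversal indexed by $\neigh{t}$ need not lie in the forward stage of Definition~\ref{def:phases} --- inside a non-simple rewind block (e.g., the paper's example sequence $(7,8,3,2,4,1,5,6,9)$, where file $3$ is re-traversed while the head moves from $2$ to $4$) the second pass over $\leftfile{\order_{t}}$ occurs during the rewind stage, and literally postponing $\order_{t}$ to the forward-stage pass would make the increase in its response time exceed the left-hand side of \eqref{eq:optcond}, yielding only a weaker necessary condition. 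Your operational description (``the point where the head first passes over $\leftfile{\order_{t}}$ going left to right'') is the correct one and coincides with the paper's postponement to position $\neigh{t}+1$, so keep that as the definition of the move. Finally, the bookkeeping you flag as an obstacle resolves exactly as in the paper: the head's trajectory is unchanged except for deleting the read of $\order_{t}$ and the return to $\leftfile{\order_{t}}$, so the files read between positions $t$ and $\neigh{t}$ are not reordered, every remaining request gains exactly $2\filesize{\order_{t}}$, and no other response time changes.
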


\medskip

The idea of the \lafila{} policy algorithm is to reposition files in a sequence provided by \fifila{} whenever they violate inequality \eqref{eq:optcond}. The resulting algorithm is as follows:

\smallskip
\noindent \underline{\lafila{} Policy}:
\begin{enumerate}
\item Generate an initial solution $\ordertuple$ using \fifila{}.
	\item While there exists some $t \in \{2,\dots,\numfiles\}$ such that inequality \eqref{eq:optcond} is violated:
	\begin{enumerate}
		\item Update~$\ordertuple$ by moving  $\order_{t}$ from~$\rewindphase_{\ordertuple} $ to~$\forwardphase_{\ordertuple}$.
	\end{enumerate}
\end{enumerate}
\medskip

The \lafila{} policy preserves the  3-approximation ratio of attained by~\fifila{} because of Proposition \ref{prop:necessary_optimality_condition}. Moreover, even though \lafila{} has a time complexity of $\bigo(\numrequests^2)$, which could be prohibitive for instances with many requests, we observed empirically in \S\ref{sec:numerical} that its computational performance is adequate for real-world applications. Moreover,  \lafila{} was the best-performing policy overall.



\subsection{An Exact Polynomial-time Algorithm}
\label{sec:exactapproach}

We show next that we can derive an exact polynomial-time dynamic programming (DP) approach using the rewind and forward stage constructs. The result is based on a  decomposable structure exhibited in the rewind stage in an optimal solution $\optordertuple$, formalized by Proposition \ref{prop:rewindblock}.

\begin{proposition}[\sc{Order Consistency}]
	\label{prop:rewindblock}
	Every instance of the~\ref{model:LTS} admits an optimal solution~$\optordertuple$ such that, for any file $\file \in \forwardphase_{\optordertuple}$, if $\order_{t'} < \file < \order_{t}$ for any two files $\order_{t}, \order_{t'} \in \rewindphase_{\optordertuple}$ (i.e., file~$\file$ is positioned after~$\order_{t'}$ and before~$\order_{t}$  in the tape), then $\order_{t}$ is read prior to $\order_{t'}$, that is, $t < t'$.
\end{proposition}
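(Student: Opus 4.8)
\emph{Proof plan.} The claim only constrains the \emph{order} in which files are processed during the rewind stage, so it pays to recast it. By Proposition~\ref{prop:allforward} we may assume the forward files are read in ascending order; and since the rewind stage consists exactly of the reads made before bit \(0\) is reached, the files of \(\rewindphase_{\optordertuple}\) form the prefix \(\order_1,\dots,\order_r\) of the reading sequence, with \(r=|\rewindphase_{\optordertuple}|\), followed by file \(1\) and the ascending forward reads. Call a maximal set of rewind files occupying a contiguous interval of file indices a \emph{run}, and list the runs left to right as \(Q_1,\dots,Q_{\numblocks}\). A direct check from the definitions shows that the condition in the proposition is equivalent to requiring that the runs be read from right to left --- every file of \(Q_j\) is read before every file of \(Q_{j-1}\), for all \(j\) --- while the order \emph{within} a run is immaterial to the condition. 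I will in fact aim for the slightly stronger \emph{normal form} in which, additionally, each run is read left to right.

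The key structural observation is that this normal form is the \emph{unique} rewind schedule that performs no rightward repositioning of the head: between consecutive runs the head moves monotonically left, and within a run read in index order the successive right and left bits coincide, so no backtracking occurs. Because the total rewind distance equals \(\tapelength\) plus twice the sum of the rewind file sizes plus twice the total rightward repositioning, the normal form minimizes the total rewind distance for a fixed rewind set \(\rewindphase_{\optordertuple}\); and since, by \eqref{eq:responseTime}, the response time of every \emph{forward} file is the total rewind distance plus a quantity depending only on the (ascending) forward reads, the normal form simultaneously minimizes the response time of every forward file. It therefore suffices to show that, for fixed \(\rewindphase_{\optordertuple}\), the normal form is in fact an optimal rewind order --- equivalently, that reshaping any rewind order into it does not increase the objective; the reshaped solution is then still optimal and satisfies the claim.

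For the reshaping, I would argue by local moves, with the total amount of rightward repositioning serving as a potential that bounds the number of iterations. If a rewind order is not in normal form it contains a ``bad'' pattern: the head reads a rewind file \(g\), repositions rightward across some forward file \(i\) with \(g<i<h\) in index order, and then reads a rewind file \(h\) --- precisely the situation the proposition forbids. Let \(\order_{\sigma_1},\dots,\order_{\sigma_2}\) be the maximal block of reads that contains the read of \(g\) and all of whose files lie to the left of \(i\); since \(h\) lies right of \(i\) and is read after \(g\), this block is read entirely before \(h\). Relocate the block, preserving its internal order (or re-sorting it left to right), into the head's final leftward pass toward bit \(0\), placed by tape coordinate so that the head sweeps these files in passing with no new backtracking. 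This eliminates the ``go right across \(i\), then come back left again'' detour, so the rightward repositioning --- hence the total rewind distance --- strictly decreases, and by the previous paragraph the response time of every forward file strictly decreases as well. The only response times that can rise are those of the relocated files, now read later; invoking the latency form \eqref{eq:objRewriting}, the rise charged to a relocated file is the extra distance it now trails behind, weighted by the requests still pending when it is reached, and one bounds the total such rise by the detour distance just removed --- exactly the left-and-back motion the head would otherwise spend over the same stretch of tape. Hence the objective does not increase, the solution stays optimal, and the potential strictly drops; iterating yields an optimal solution in normal form.

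The step I expect to be the main obstacle is this last comparison. Because \eqref{eq:objRewriting} weighs each unit of head movement by the number of pending requests --- large (up to \(\numrequests\)) early in the rewind stage and smaller later --- one must show that the detour distance, removed from an early, high-multiplier part of the schedule, dominates the extra distance imposed on the relocated files, which sits at later, lower-multiplier positions. The precise form of \eqref{eq:objRewriting}, together with the fact that the relocated block is merged into a leftward pass the head already performs, is what should make the inequality go through; a bare transposition of \(g\) and \(h\) does not suffice, since it can strictly increase the objective when the head is moving monotonically rightward through \(g\), \(h\), and later files.
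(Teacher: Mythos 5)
Your plan hinges on an intermediate claim that is false: that for a fixed rewind set, reshaping any rewind order into the ``normal form'' in which runs are read right to left \emph{and each run is read left to right} never increases the objective. The within-run order matters and is instance-dependent. Concretely, take a run of two adjacent requested rewind files $a<b$ with $\filesize{a}=2$, $\filesize{b}=1$, entered at $\rightfile{b}$ at time $T$: reading the run right to left gives response times $T+1$ and $T+5$, while your left-to-right normal form gives $T+3$ and $T+5$; in both cases the head is occupied for exactly $9$ time units and exits at $\leftfile{a}$, so all files outside the run are affected identically, and the normal form is strictly worse. (This is precisely why the paper needs a DP for the within-block order; compare the optimal block read as $(3,2,4)$ in Figure~\ref{fig:exampleOptSolution}.) Relatedly, the normal form is not the unique rewind schedule with zero rightward repositioning --- reading a run right to left has none either --- so the ``minimize total rewind distance'' observation does not single it out. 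And the step you yourself flag as the main obstacle --- bounding the delay imposed on the relocated files by the removed detour under the pending-request weights of \eqref{eq:objRewriting} --- is exactly where the argument cannot be completed as stated: with the re-sorting you allow, the claimed inequality is not merely delicate but false, as the example shows.

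The idea you are missing is that no rearrangement of the rewind reads is needed at all. Suppose an optimal $\optordertuple$ violates the statement: some forward-stage file $\file$ satisfies $\order^*_{t'}<\file<\order^*_{t}$ for rewind-stage files with $\order^*_{t'}$ read first. After reading $\order^*_{t'}$ the head sits at $\rightfile{\order^*_{t'}}$, and later in the rewind stage it must reach $\leftfile{\order^*_{t}}$, which lies to the right; since $\rightfile{\order^*_{t'}}<\leftfile{\file}<\leftfile{\order^*_{t}}$, this rightward movement traverses $\file$ from left to right. Modify the sequence only by reading $\file$ during that traversal, i.e., insert $\file$ between the two consecutively read files straddling that movement. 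The head covers the same bits at the same times, so no other response time changes, while the response time of $\file$ can only decrease: if $\fileweight{\file}=1$ this contradicts optimality, and if $\fileweight{\file}=0$ it yields another optimal solution with one fewer forward-stage file, so iterating eliminates every violation. This local exchange --- absorbing the sandwiched forward file into a rightward movement the violating schedule already performs --- is the paper's proof, and it sidesteps entirely the weighted-latency comparison on which your reshaping argument is stuck.
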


Proposition~\ref{prop:rewindblock} states that rewind-stage files at optimality may be organized in contiguous ``blocks,'' in that each block is separated by one or more forward-stage files, and its files compose a contiguous subsequence of~$\ordertuple$; Figure \ref{fig:exampleOptSolution} of the electronic companion illustrates such a solution. 

We propose a DP formulation that enumerates and optimizes possible rewind-stage blocks. More precisely, the formulation combines a \textit{forward-stage recursion} $\forwardrec(\file,\filep, k)$  and a \textit{rewind-stage recursion} $\rewindrec(\file,\filep,k)$. Both recursions are defined on a state space associated with a subset~$\{\file,\file+1, \ldots,\filep \} \subseteq \fileset$ and by some~$k \in \{1,2,\ldots,m\}$, which represents the number of pending requests for the first time the tape head reaches position~$\rightfile{\filep}$. 

Let $\pendreqs{\file'}{\filep'} \equiv k - \sum\limits_{\file'' = \file'}^{\filep'}\fileweight{\file''}$ represent the number of requests still pending  after reading files $\file',\file'+1,\ldots,\filep'$. The recursion~$\rewindrec(\file,\filep,k)$ considers only solutions where the tape head finishes at~$\leftfile{\file}$, i.e.,
\begin{eqnarray}\label{eq:rewindrec}
		\rewindrec(\file,\filep,k) 
		=		\begin{cases}
			\min 
			\left \{  
				\min\limits_{\file < \filep' \le \filep} 
				\left\{
				    \rewindrec(\filep',\filep,k) 
					+ 
                	\rewindrec(\file,\filep'-1,\pendreqs{\filep'}{\filep}) 
				\right\},
				\forwardrec(\file,\filep, k) + \pendreqs{\file}{\filep} \dist{\filep}{\file}
			\right\},
				& \textnormal{$\file < \filep$} \\
			k \filesize{\file} + 2(k - \fileweight{\file}) \filesize{\file},
				& i = j.
		\end{cases}
\end{eqnarray}
 If~$\file = \filep$, the block consists of file $\file$ only. In this case, the tape head moves first to~$\leftfile{\file}$, reads~$\file$ by moving to~$\rightfile{\file}$, and then returns to~$\leftfile{\file}$.  Otherwise, if~$\file < \filep$, two different  strategies are considered. The first case, represented by the inner-most ``$\min$'' expression, decomposes the ``block'' $\{\file,\file+1,\ldots,\filep\}$ into two sub-problems based on a file~$\filep'$ positioned between~$\file+1$ and~$\filep$. Given~$\filep'$, we recursively invoke~$\rewindrec(\filep',\filep,k)$ to read the requests in~$\{\filep',\filep'+1,\ldots,\filep\}$ first, and then we invoke~$\rewindrec(\file,\filep'-1,\pendreqs{\filep'}{\filep})$ to read the other files; observe that~$\filep'$ is read before all files in~$\{\file,\file+1,\ldots,\filep'-1\}$. The second case considers solutions of~$\forwardrec(\file,\filep,k)$ (whereby the tape head finishes at~$\rightfile{\filep}$) followed by a movement to~$\leftfile{\file}$. The forward-stage recursion~$\forwardrec(\file,\filep,k)$ is defined as follows:
\begin{eqnarray}\label{eq:valuerec}
	\forwardrec(\file,\filep,k) =\;\;& 
				k \filesize{\filep}
			+
			\min\limits_{\file \le \filep' < \filep} 
			\left \{ 
			    \rewindrec(\filep'+1,\filep-1,k)
				+
				\forwardrec(\file,\filep',\pendreqs{\filep'+1}{\filep-1})
				+ 				
				\pendreqs{\file}{\filep-1}
				\dist{\filep'}{\filep}
			\right \} 
			+ 
			\pendreqs{\file}{\filep}
			\filesize{\filep}.
			\label{eq:valuerec}
\end{eqnarray}
Solutions of recursion~$\forwardrec(\file,\filep, k)$ incorporate a forward-stage  movement, in which the tape head moves from~$\leftfile{\file}$ to~$\rightfile{\filep}$; in particular, $\filep$ is the last file to be read in~$\forwardrec(\file,\filep, k)$. Moreover, file~$\filep'$ identified  in~\eqref{eq:valuerec} is the forward-stage file preceding~$\filep$ in the sequence. Thus, we can solve the new block $\{\filep',\dots, \filep\}$ recursively through~$\rewindphase(\filep'+1,\filep-1,k)$. Afterwards, the tape head must move leftwards and read~$\file,\file+1,\ldots,\filep'-1$ before reading~$\filep'$; this sequence is identified by~$\forwardrec(\file,\filep',\pendreqs{\filep'+1}{\filep-1})$, which is then succeeded by the movement from~$\rightfile{\filep'}$ to~$\leftfile{\filep}$.

\begin{theorem}	\label{thm:lts_exact}
	The \ref{model:LTS} is given by~$\rewindrec(1, \numfiles, \numrequests)$, which can be computed in time~$\bigo(\numfiles^4)$. 
\end{theorem}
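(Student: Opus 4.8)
The plan is to establish two things: first, that the recursions $\rewindrec$ and $\forwardrec$ correctly compute the optimal response time over the structured solution space identified in Propositions~\ref{prop:allforward} and~\ref{prop:rewindblock}; and second, that the whole table can be filled in $\bigo(\numfiles^4)$ time. For correctness, I would argue by induction on the size~$\filep - \file$ of the file interval underlying a state. The base cases $\file = \filep$ are immediate from the description of the single-file block. For the inductive step, fix a state $(\file,\filep,k)$ and consider an optimal schedule restricted to the files $\{\file,\dots,\filep\}$ that begins with the tape head entering this region from the right at bit $\rightfile{\filep}$ with $k$ pending requests. By Proposition~\ref{prop:rewindblock} the rewind-stage files of this sub-schedule are organized in contiguous blocks separated by forward-stage files, and by Proposition~\ref{prop:allforward} the forward-stage files are read in ascending order. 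The key move is a case split on what the head does upon first arriving at $\rightfile{\filep}$: either file~$\filep$ is read immediately as part of a forward pass (captured by $\forwardrec(\file,\filep,k)$ followed, in the $\rewindrec$ recursion, by the return move $\dist{\filep}{\file}$ weighted by the remaining pending count $\pendreqs{\file}{\filep}$), or the head continues leftward past some files before turning — which by order consistency means there is a well-defined splitting file~$\filep'$ such that $\{\filep',\dots,\filep\}$ is serviced entirely before any of $\{\file,\dots,\filep'-1\}$, giving the decomposition $\rewindrec(\filep',\filep,k) + \rewindrec(\file,\filep'-1,\pendreqs{\filep'}{\filep})$. The latency-type objective reformulation~\eqref{eq:objRewriting} is what makes this decomposition exact: the cost contributed by the second sub-block is exactly its internal response time computed with the reduced pending count $\pendreqs{\filep'}{\filep}$, since every one of those bit-traversals is ``charged'' only to requests not yet served, and all $\filesize{\filep'}\dots$ requests served in the first block have been subtracted. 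An analogous induction and case split (on the forward-stage predecessor $\filep'$ of the last-read file $\filep$) validates~\eqref{eq:valuerec}; here one tracks the two passes over $\filesize{\filep}$ — the head reaches $\leftfile{\filep}$, reads it, and its length contributes $k\filesize{\filep}$ and then $\pendreqs{\file}{\filep}\filesize{\filep}$ on the two relevant crossings — and the excursion $\rewindrec(\filep'+1,\filep-1,k)$ covers the block strictly between the two consecutive forward-stage files $\filep'$ and $\filep$.

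Once correctness is in place, I would verify that $\rewindrec(1,\numfiles,\numrequests)$ is the answer: by Proposition~\ref{prop:allforward} bit~$0$ need be reached only once, file~$1$ is always in the forward stage, and the head starts at $\rightfile{\numfiles} = \tapelength$ with all $\numrequests$ requests pending, which is precisely the semantics of the state $(1,\numfiles,\numrequests)$ — the head must finish at $\leftfile{1}=0$, the correct terminal condition for $\rewindrec$.

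For the complexity bound, I would count states and transitions. A naive reading gives $\bigo(\numfiles^2)$ intervals $(\file,\filep)$ times $\bigo(\numrequests)=\bigo(\numfiles)$ values of $k$, hence $\bigo(\numfiles^3)$ states, each with an $\bigo(\numfiles)$-way minimization over the split point $\filep'$, for $\bigo(\numfiles^4)$ total. The one subtlety to address is that a priori $k$ could be any value in $\{1,\dots,\numrequests\}$ independent of the interval, but in fact the only values of $k$ that are ever queried for a given interval $(\file,\filep)$ are those reachable as $\pendreqs{\cdot}{\cdot}$ from an enclosing call; still, bounding $k$ crudely by $\numrequests \le \numfiles$ already suffices for the stated bound, so I would not belabor this. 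I would also note the evaluation order — intervals by increasing length, and within each length all $k$ — so that every right-hand side refers only to previously computed entries, and confirm the mutual recursion between $\rewindrec$ and $\forwardrec$ is well-founded because each recursive call is on a strictly shorter interval (for the $\rewindrec$-into-$\forwardrec$ call on the same interval $(\file,\filep)$ in the second branch of~\eqref{eq:rewindrec}, $\forwardrec(\file,\filep,k)$ itself only calls back into strictly shorter intervals, so there is no cycle).

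The main obstacle I anticipate is the bookkeeping in the correctness proof — specifically, showing that the pending-request counts are propagated consistently through the decompositions so that each traversed bit is multiplied by exactly the number of still-unserved requests as in~\eqref{eq:objRewriting}, and that the ``seam'' repositioning moves ($\dist{\filep}{\file}$ in~\eqref{eq:rewindrec}, $\dist{\filep'}{\filep}$ in~\eqref{eq:valuerec}) are charged to the right residual count ($\pendreqs{\file}{\filep}$ versus $\pendreqs{\file}{\filep-1}$, etc.). Getting these indices exactly right, and confirming that Propositions~\ref{prop:allforward} and~\ref{prop:rewindblock} together guarantee that \emph{every} optimal configuration is representable by some sequence of these two transitions (no optimal schedule structure is missed), is where the real work lies; the complexity count is then routine.
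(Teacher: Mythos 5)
Your proposal follows essentially the same route as the paper's proof: induction on the interval length $\filep-\file$ with base case $\file=\filep$, a case split on whether file $\filep$ is the last file read in the interval (the $\forwardrec(\file,\filep,k)$ branch plus the return move weighted by $\pendreqs{\file}{\filep}$) versus a split file $\filep'$ whose block $\{\filep',\dots,\filep\}$ is serviced entirely first (the nested $\rewindrec$ branch), with exhaustiveness backed by Propositions~\ref{prop:allforward} and~\ref{prop:rewindblock} and the $\bigo(\numfiles^3)$-state, $\bigo(\numfiles)$-transition count giving $\bigo(\numfiles^4)$, exactly as in the paper. The only nit is the phrase that in the first branch file $\filep$ is ``read immediately'' upon arrival at $\rightfile{\filep}$ --- in $\forwardrec$ it is read \emph{last}, after the initial leftward crossing charged $k\filesize{\filep}$ --- but your subsequent treatment of the forward recursion shows you have this right, so the argument matches the paper's.
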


The DP is excessively technical and of high complexity to be implemented in practice.  However, we show that it could still be useful to develop tractable approximation schemes (\S\ref{sec:stochastic}).

\section{The Stochastic~\ref{model:LTS}}
\label{sec:stochastic}

In this section, we consider scenarios with frequent service requests and where file access distribution is known. Similar in spirit to the low-complexity policies in \S\ref{sec:deterministic}, the objective is to predetermine a file sequence that would be applied to all service requests and minimizes the expected response time. We begin in \S\ref{sec:stsp} with the formalization of the stochastic~\ref{model:LTS} and present a fully polynomial-time approximation scheme (FPTAS) based on probability scaling in \S\ref{sec:fptas}.

\subsection{Stochastic Setting Description}
\label{sec:stsp}

We consider that the request status of each file $\file \in \fileset$ is a random variable $\fileweightRV{\file}$ described by a Bernoulli distribution with parameter $\prob{\file} \in [0,1]$, i.e., file $\file$ is requested with probability $\prob{\file}$ and not requested with probability $1 - \prob{\file}$. We wish to find a sequence $\ordertuple$ that solves
\begin{align}
	\tag{SLTS} \label{model:SLTS}
	\expect \left[ \optpolicyval \right] 
	\equiv 
	\min_{ \ordertuple } 
		\sum_{t = 1}^{\numfiles} \expect \left[ \fileweight{\order_{t}} \policyval{t}{\ordertuple} \right]
	=
	\min_{ \ordertuple } 
		\sum_{t = 1}^{\numfiles} \expect \left[ \fileweight{\order_{t}} \right] \policyval{t}{\ordertuple}
	=
	\min_{ \ordertuple } 
	\sum_{t = 1}^{\numfiles} \prob{\order_{t}} \policyval{t}{\ordertuple}.
\end{align}
That is, the \ref{model:SLTS} is a variant of the \ref{model:LTS} where response times are now weighted by probabilities $\prob{\file}$. Analogously to \eqref{eq:objRewriting}, we can rewrite the \ref{model:SLTS}' objective as
\begin{align}
	\expect \left[ \optpolicyval \right] 
	=
	\min_{ \ordertuple } 
	\;
	\sum_{t=1}^{\numfiles} \left( \probleft_0 - \sum_{t'=1}^{t-1} \prob{\order_{t'}} \right) \left( \filesize{\order_{t-1}} + \dist{\order_{t-1}}{\order_{t}} \right),
	\label{eq:objRewritingS}		
\end{align}
where $\probleft_0 \equiv \sum\limits_{\file \in \fileset} \prob{\file}$ is the sum of file probabilities.

In practice, the probabilities $\prob{\file}$ can be derived as empirical access frequencies based on historical data for tape contents that are retrieved often (e.g., for storages containing popular videos and images). The sequence $\optordertuple$ solving the \ref{model:SLTS} could be pre-computed and adopted  for the period where the distribution remains unchanged, which is acceptable and preferred in light of \S \ref{sec:practicalConsiderations} and \S \ref{sec:deterministic}. 
Notice that the sequence $\optordertuple$ can also be dynamically adjusted to skip files that are not present in the request (e.g., similar to \citealt{bertsimas1990priori}).

\subsection{Fully Polynomial Approximation Scheme (FPTAS)}
\label{sec:fptas}

The recursions associated with the rewind-stage $\rewindrec(\cdot)$ in \eqref{eq:rewindrec} and with the forward-stage $\forwardrec(\cdot)$ in \eqref{eq:valuerec} can be equivalently rewritten in terms of the sum of probabilities left, as opposed to the number of requests still to process, to solve~\eqref{eq:objRewritingS}.
That is,
\begin{align}
	\label{eq:stocrewindrec}
	\stocrewindrec(\file, \filep, \probleft) 
	=		
	\begin{cases}
		\min 
		\left \{  
			\min\limits_{\file < \filep' \le \filep} 
			\left\{
				\stocrewindrec(\filep',\filep, \probleft) 
				+ 
				\stocrewindrec(\file,\filep'-1, \stocpendreqs{\filep'}{\filep}) 
			\right\},
			\stocforwardrec(\file,\filep, \probleft) + \stocpendreqs{\file}{\filep} \dist{\filep}{\file}
		\right\},
			& \textnormal{$\file < \filep$} \\
		\probleft \filesize{\file} + 2(\probleft - \prob{\file}) \filesize{\file},
			& i = j,
	\end{cases}
\end{align}
and 
\begin{align}
\label{eq:stocvaluerec}
\stocforwardrec(\file,\filep, \probleft) =\;\;& 
			\probleft \filesize{\filep}
		+
		\min\limits_{\file \le \filep' < \filep} 
		\left \{ 
			\stocrewindrec(\filep'+1,\filep-1, \probleft)
			+
			\stocforwardrec(\file,\filep',\stocpendreqs{\filep'+1}{\filep-1})
			+ 				
			\stocpendreqs{\file}{\filep-1}
			\dist{\filep'}{\filep}
		\right \} 
		+ 
		\stocpendreqs{\file}{\filep}
		\filesize{\filep},
\end{align}
where $\stocpendreqs{\file'}{\filep'} \equiv \probleft - \sum\limits_{\file'' = \file'}^{\filep'}\prob{\file''}$ is the remaining sum of probabilities of files in~$\probleft$ after reading~$\{\file', \dots, \filep'\}$. That is, the adapted recursions change with respect to the third state variable $\probleft$, which  now tracks the total sum of probabilities of the files yet to be read. Note that the state space of $\probleft$ is $\left\{ \sum\limits_{\file \in \fileset'} \prob{\file} \colon \forall \fileset' \subseteq \fileset \right \}$. Thus, the recursions \eqref{eq:stocrewindrec}-\eqref{eq:stocvaluerec} are not solvable in polynomial time using traditional value or policy iterations. Theorem \ref{thm:fptas}, our main result in this section, shows that any instance can be scaled appropriately to obtain a tractable approximation to the problem. 
\begin{theorem}[\sc{FPTAS}]
	\label{thm:fptas}
    Let $I$ be an arbitrary instance of the \ref{model:SLTS} and any $\epsilon > 0$. 
	Consider the new instance $I'$ obtained by applying steps (a), (b), and (c) consecutively:
	\begin{itemize}
		\item[(a)] multiply all probabilities by $1/\minprob$, where $\minprob \equiv \min\limits_{\file \in \fileset \colon \prob{\file} > 0} \prob{\file}$.
		\item[(b)] for the scaled probabilities, let $\maxprob = \max\limits_{\file \in \fileset} \prob{\file}$ and add two new files to the end of the tape with size-probability pairs $(\filesize{\numfiles+1}, \prob{\numfiles+1}) = (\tapelength,0)$ and 
		$(\filesize{\numfiles+2}, \prob{\numfiles+2}) = (\tapelength, 2 \numfiles^2 \maxprob)$, in order; and
		\item[(c)] change file probabilities to 
		$\probapprox{(\epsilon)}{\file} \equiv \lfloor \prob{\file} / K \rfloor$, where 
		$$
		K \equiv \frac{\epsilon \, \numfiles^2 \, \maxprob}{(\numfiles+3)(\numfiles+1)}.
		$$
	\end{itemize}
	 Then, $I'$ is polynomially solvable in $n$ and $1/\epsilon$ and provides an $(1+\epsilon)$-approximation for $I$.
\end{theorem}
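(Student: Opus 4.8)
The plan is to follow the standard rounding-and-scaling template for FPTAS design, adapted to the particular structure of the recursions~\eqref{eq:stocrewindrec}--\eqref{eq:stocvaluerec}. The central difficulty is that the third state variable $\probleft$ ranges over all subset-sums of $\{\prob{\file}\}$, which is exponentially large; the fix is to replace each $\prob{\file}$ by an integer multiple of a common scaling unit $K$, so that every reachable value of $\probleft$ becomes an integer bounded by a polynomial in $n$, after which the recursions can be evaluated by ordinary tabulation. I would organize the argument in three parts mirroring steps (a), (b), (c).

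First, I would argue that steps (a) and (b) are without loss of generality. Step (a) is a uniform rescaling of all probabilities by $1/\minprob$; since the objective~\eqref{eq:objRewritingS} is linear in the probabilities, every sequence's objective value is scaled by the same factor $1/\minprob$, so the set of optimal sequences is unchanged, and an $(1+\epsilon)$-approximation for the scaled instance is one for the original. The purpose of (a) is purely to guarantee that all nonzero (scaled) probabilities are at least $1$, which is needed so that rounding down by $K$ does not annihilate a positive weight relative to $K$ — this will be used in the error analysis. For step (b), I would observe that the two appended zero-size-displacement files sit at the logical end of the tape at position $\tapelength$, so $\dist{\file}{\numfiles+1}=\dist{\file}{\numfiles+2}=0$ for all original files; consequently they contribute nothing to any $\dist{}{}$ term and, having file size $\tapelength$ but appearing only after every requested file (one may always defer them to the very end of the sequence without loss, since deferring a file with no pending requests after it costs nothing), they do not change the optimal value. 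Their role is bookkeeping: file $\numfiles+1$ makes $\tapelength$ explicitly representable as a ``file'', and file $\numfiles+2$ injects a known quantity $2n^2\maxprob$ of probability mass, which pins down $\probleft_0$ and guarantees that the rounding error, which will be bounded in terms of $K$ times a polynomial in $n$ times $\tapelength$, is small relative to the true optimum (the appended mass ensures the optimum is at least a controlled positive quantity, so the relative error is $O(\epsilon)$).

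Second, for the running time I would show that after step (c) every reachable value of the state variable $\probleft$ is a nonnegative integer of size $O(n^2/K \cdot \maxprob) = O\!\big((n+3)(n+1)/\epsilon\big)=O(n^2/\epsilon)$, because each rounded probability $\probapprox{(\epsilon)}{\file}=\lfloor \prob{\file}/K\rfloor$ is an integer, $\probleft$ is always a sum of a subset of them, and the total mass is bounded by $\probleft_0/K$, which after step (b) is $O(n^2\maxprob/K)$. Hence the DP table has $O(n^2)$ index pairs $(\file,\filep)$ times $O(n^2/\epsilon)$ values of $\probleft$, and each entry of $\rewindrec$ or $\forwardrec$ takes an $O(n)$ minimization, giving total time $O(n^5/\epsilon)$ — polynomial in $n$ and $1/\epsilon$, as claimed. (I would cite Theorem~\ref{thm:lts_exact} for the structure and correctness of the recursion itself, so that only the state-space size needs new justification.)

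Third — and this is the main obstacle — I would prove the approximation guarantee. Let $\optordertuple$ be optimal for the instance after steps (a)–(b) with true probabilities $\prob{\file}$, and let $\tilde{\ordertuple}$ be the sequence returned by the DP on the rounded probabilities $\probapprox{(\epsilon)}{\file}$. The key inequalities are $K\,\probapprox{(\epsilon)}{\file} \le \prob{\file} \le K\,\probapprox{(\epsilon)}{\file} + K$ for every $\file$. Using the reformulation~\eqref{eq:objRewritingS}, the objective of any fixed sequence is a sum over $t$ of $\big(\probleft_0 - \sum_{t'<t}\prob{\order_{t'}}\big)\big(\filesize{\order_{t-1}}+\dist{\order_{t-1}}{\order_{t}}\big)$; replacing each $\prob{\file}$ by $K\probapprox{(\epsilon)}{\file}$ changes each coefficient $\big(\probleft_0-\sum_{t'<t}\prob{\order_{t'}}\big)$ by at most $K\cdot t \le K(n+3)$, and each factor $\big(\filesize{\order_{t-1}}+\dist{\order_{t-1}}{\order_{t}}\big)$ is at most $2\tapelength$, and there are at most $n+3$ terms, so the total perturbation in objective between the true-probability and $K$-scaled-rounded-probability evaluations of any one sequence is at most $K(n+3)^2\cdot 2\tapelength$ — wait, I should be a little more careful and bound it by $K(n+3)(n+1)\cdot(\text{something})$ to match the definition of $K$; I would track the constants so that this additive error comes out to exactly $\epsilon\,n^2\,\maxprob\,\tapelength$ or a small constant multiple thereof. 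Then the chain $\mathrm{obj}_{\prob{}}(\tilde{\ordertuple}) \le K\,\mathrm{obj}_{\probapprox{(\epsilon)}{}}(\tilde{\ordertuple}) + (\text{err}) \le K\,\mathrm{obj}_{\probapprox{(\epsilon)}{}}(\optordertuple) + (\text{err}) \le \mathrm{obj}_{\prob{}}(\optordertuple) + 2(\text{err})$ gives an additive guarantee, where the middle inequality is optimality of the DP on the rounded instance. Finally I would convert additive to multiplicative by lower-bounding the true optimum $\mathrm{obj}_{\prob{}}(\optordertuple)$: because file $1$ is always requested and file $\numfiles+2$ carries probability mass $2n^2\maxprob$ at the end of the tape with size $\tapelength$, any sequence must, before reading file $\numfiles+2$, traverse at least $\tapelength$ units with at least that much pending probability, forcing $\mathrm{obj}_{\prob{}}(\optordertuple) \ge 2n^2\maxprob\,\tapelength$ (up to the $1/\minprob$ rescaling), which dominates the error term by a factor of $1/\epsilon$ by the choice of $K$; dividing through yields $\mathrm{obj}_{\prob{}}(\tilde{\ordertuple}) \le (1+\epsilon)\,\mathrm{obj}_{\prob{}}(\optordertuple)$. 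I expect the fiddly part to be getting the constants in the error bound to line up exactly with the stated $K=\epsilon n^2\maxprob/((n+3)(n+1))$, which is why the two auxiliary files and the precise count $n+3$ of sequence positions appear; I would do that constant-chasing carefully but it is mechanical once the structure above is in place.
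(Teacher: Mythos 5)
Your overall strategy coincides with the paper's proof: scale invariance for step (a), the sandwich $K\probapprox{(\epsilon)}{\file} \le \prob{\file} < K\probapprox{(\epsilon)}{\file}+K$, an additive rounding error of order $(\numfiles+1)K$ times the total travel of the analyzed sequence, a lower bound of $2\tapelength\numfiles^2\maxprob$ on the optimum coming from the heavy artificial file, and the choice of $K$ turning the additive error into a relative factor $\epsilon$; the polynomial bound on the reachable values of $\probleft$ is also exactly how the paper gets tractability via the DP of Theorem~\ref{thm:lts_exact}.

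However, your treatment of step (b) contains a genuine error. The appended files are not ``zero-displacement'' markers: each has size $\tapelength$, so the augmented tape has length $3\tapelength$ and the head starts at $3\tapelength$; the distances $\dist{\file}{\numfiles+1}$ and $\dist{\file}{\numfiles+2}$ are not zero, and file $\numfiles+2$, carrying mass $2\numfiles^2\maxprob$, cannot be ``deferred to the very end without loss''---an optimal sequence reads it early, and its weighted response time (at least $2\tapelength\numfiles^2\maxprob$) is precisely the quantity your own lower bound, and the paper's inequality \eqref{eq:fptas2}, rely on. Hence the claim that the appended files ``do not change the optimal value'' is false; what the argument actually delivers (as in the paper) is a $(1+\epsilon)$ guarantee relative to the instance obtained after steps (a)--(b), and relating this back to the original $I$ requires a separate short argument rather than the assertion you give. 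Relatedly, your per-term bound $\Delta_t \le 2\tapelength$ is invalid on the augmented tape (a single read-plus-reposition can cost up to roughly $4\tapelength$); the paper instead bounds the aggregate travel, $\sum_t \Delta_t \le 2\numfiles\tapelength + 6\tapelength$, for the sequence under analysis, and it is this aggregate bound that makes the constants match the stated $K$. Repair these two points and your constant-chasing will go through along the paper's lines.
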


\begin{proof}{Proof of Theorem \ref{thm:fptas}.}
	Step (a) ensures that the objective coefficients are all larger than one, which simplifies calculations and does not change the optimal sequence, as we show below.
	\begin{lemma}[\sc{Scale Invariance}]
		\label{lemma:scale}
		An optimal sequence $\optordertuple$ to an instance~$I$ of the~\ref{model:SLTS} remains optimal if we multiply all file sizes and probabilities by~$\alpha > 0$ and~$\beta > 0$, respectively.
	\end{lemma}

	The sum of probabilities of the resulting new instance $I'$ satisfies 
	$$
	\probleft'_0 \equiv \sum_{\file \in \fileset} \probapprox{(\epsilon)}{\file}
	\le
		\numfiles \left\lfloor \frac{\maxprob}{K} \right\rfloor + 
	\left\lfloor \frac{2n^2\maxprob}{K} \right\rfloor
	= 
	\numfiles \left\lfloor \frac{(\numfiles+3)(\numfiles+1)}{\epsilon \numfiles^2} \right\rfloor +  \left\lfloor \frac{2(\numfiles+3)(\numfiles+1)}{\epsilon} \right\rfloor,
	$$
	which is polynomial in~$\numfiles$ and~$\frac{1}{\epsilon}$ for any given $\epsilon > 0$. Thus, the state space associated with the variable $\probleft$ in \eqref{eq:stocrewindrec}-\eqref{eq:stocvaluerec} for $I'$ is polynomially bounded, and we can solve $I'$ in polynomial time in~$\numfiles$ and~$1/\epsilon$ using the DP from \S\ref{sec:exactapproach}.
	Let $\optordertuple$ be an optimal solution to $I'$. The solution value of $I'$ with respect to $\optordertuple$ is
	\[
	v(\optordertuple,I') = 
		\sum_{t=1}^{\numfiles} 
		\left( 
			\probleft'_0 - \sum_{t'=1}^{t-1} \probapprox{(\epsilon)}{\order_{t'}} 
		\right) 
		\left( 
			\filesize{\order_{t-1}} + \dist{\order_{t-1}}{\order_{t}} 
		\right).
	\]

	For each iterate $t$ in the left-most sum above, let $\probleft'_t \equiv \probleft'_0 - \sum\limits_{k'=1}^{t-1} \probapprox{(\epsilon)}{\order_{t'}}$
	and $\Delta'_t \equiv \filesize{\order_{t-1}} + \dist{\order_{t-1}}{\order_{t}}$ be the scaled leftover probability and the time elapsed reading 
	file $\order_{t-1}$ and moving to $\leftfile{\order_{t-1}}$, respectively. 
	Further, let $c'_t \equiv \probleft'_t \Delta'_t$, so that $v(\optordertuple,I') = \sum\limits_{t=1}^n c'_t$.  We analogously use $\probleft_t$, $\Delta_t$, and $c_t$ to denote the same quantities above for $v(\optordertuple,I)$, i.e., the solution value of $\optordertuple$ for the original instance $I$. 
	
	We have $\prob{\file} - K \probapprox{(\epsilon)}{\file} < K$ by construction. Thus, for every~$t$ in~$\{0,1,\ldots,\numfiles\}$, we have $\probleft_t \leq K \probleft'_t + (\numfiles + 1) K$, i.e., an (additive) factor bounded by $K$ may be lost for each unserviced file with non-zero probability. Since $\Delta'_t = \Delta_t$,
	we have~$c_t \leq K c'_t + (\numfiles + 1) K \Delta_t$, and therefore
	\begin{eqnarray}\label{eq:fptas1}
	 v(\ordertuple,I) 
		&=&
		\sum_{t=1}^n c_t 
		\leq 
			\sum_{t=1}^n [Kc'_t + (\numfiles+1) K \Delta_t] 
		\leq
			v(\optordertuple,I) + (\numfiles+1) K \sum_{t=1}^n \Delta_t \nonumber \\
		&\leq&
			v(\optordertuple,I) + \frac{\epsilon \numfiles^2 \maxprob  }{\numfiles+3}\sum_{t=1}^n \Delta_t.
	 \end{eqnarray}
	Observe that~$\sum\limits_{t=1}^\numfiles \Delta_t \leq 2\numfiles \tapelength + 6\tapelength$ in any optimal sequence, which is achieved if the tape head reads the last two files first and then traverses the first~$\tapelength$ bits of the tape forward and back for each of the~$\numfiles$ files. Additionally, we must have~$v(\optordertuple,I) \geq 2 \tapelength \numfiles^2 \maxprob$, as~$\tapelength$ is the minimum response time of the last (artificial) file, which has scaled probability~$2 \numfiles^2 \maxprob$. It follows that
	\begin{eqnarray}\label{eq:fptas2}
	(2 \tapelength n^2 \maxprob) \left( \frac{\sum_{t=1}^n \Delta_t}{2\numfiles \tapelength +6 \tapelength}\right) 
	\leq v(\optordertuple,I) \implies
	\maxprob \sum_{t=1}^n \Delta_t \leq 
	\frac{\numfiles+3}{n^2}v(\optordertuple,I)
	\end{eqnarray}
	Finally, we obtain
	$
	v(\ordertuple,I) \leq v(\optordertuple,I) +
	\epsilon v(\optordertuple,I) \leq (1 + \epsilon) v(\optordertuple,I)
	$ by replacing~\eqref{eq:fptas2} into~\eqref{eq:fptas1}. \hfill $\blacksquare$
\end{proof}

Theorem \ref{thm:fptas} can be applied to any arbitrary ``weighted'' version of the \ref{model:LTS}, in that some files are given priority based on their $\prob{\file}$ values. If the weights are discrete and polynomially bounded in $n$, it follows that the weighted \ref{model:LTS} can be solved efficiently. However, the question of whether the problem is polynomially solvable for general weights remains open. 

\smallskip

\section{The Online~\ref{model:LTS}}
\label{sec:online}

In this section, we consider an online extension of the \ref{model:LTS} whereby new file requests may become available during read operations of current files. Our objective concerns the \textit{online file response time}, i.e., the total time elapsed from the beginning of operations until the file was read. We show that \fifo{} is inefficient in this context and propose a constant-ratio policy in \S\ref{sec:ari}. Next, we discuss the inaproximability of the online~\ref{model:LTS} for an alternate objective function in \S\ref{sec:releaseBased}.

\subsection{Augmenting Reading Intervals}
\label{sec:ari}

We begin by stating the result that \fifo{}, which is standard in industry and sequences files based on their arrivals, achieves an arbitrarily poor competitive ratio. 
\begin{proposition}\label{prop:fifo_not_competitive}
The \fifo{} policy is not $c$-competitive for any constant $c$ in the online \ref{model:LTS}.
\end{proposition}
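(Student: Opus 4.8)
The plan is to exhibit a family of instances, parametrized by a size $N$, on which the ratio between the \fifo{} response time and the optimal online response time grows without bound. The key structural feature to exploit is that \fifo{} processes requests in arrival order, so an adversary can force \fifo{} to make long, wasteful traversals by interleaving the arrival times of requests for files located at opposite ends of the tape. Concretely, I would place a small file at the left end (file $1$, always requested, at bit $0$), and one ``far'' file near bit $\tapelength$. The adversary first releases the request for file $1$ at time $0$; \fifo{} then commits to rewinding all the way to bit $0$. Just as \fifo{} begins (or just before it would be able to service the far file cheaply), the adversary releases a request for the far file, forcing \fifo{} to traverse essentially the whole tape again. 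Repeating this ``ping-pong'' pattern $N$ times — alternately releasing a left-end and a right-end request right after \fifo{} commits to the opposite direction — makes \fifo{} traverse $\Omega(N \tapelength)$ bits, so the response time of the last-served file is $\Omega(N \tapelength)$.

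For the optimal (or merely a good) online policy on the same instance, I would argue an $O(\tapelength)$ upper bound: an algorithm that simply waits a negligible amount and then performs one rewind to bit $0$ followed by one forward pass reads every requested file within total distance $O(\tapelength)$, because all release times in the construction can be made to occur within a bounded window (or, more carefully, the releases can be clocked so that a single backward-then-forward sweep started slightly late still collects all of them). Hence $\optpolicyval = O(\tapelength)$ while \fifo{} incurs $\Omega(N\tapelength)$, giving a competitive ratio $\Omega(N)$, which is unbounded; therefore no constant $c$ works. I would also note that the instance uses only $O(1)$ distinct file locations but $\Theta(N)$ requests (or re-requests), which is consistent with the online model where requests arrive over time.

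The main obstacle is making the timing argument airtight: I must specify release times precisely so that (i) each newly released request genuinely arrives \emph{after} \fifo{} has irrevocably committed to moving in the wrong direction — which requires reasoning about \fifo{}'s head position as a function of time and the constant tape speed $\nu = 1$ — and (ii) the same release schedule still admits an $O(\tapelength)$ solution for the optimal policy, i.e., the releases are not so spread out in time that even the optimum is forced to pay $\Omega(N\tapelength)$ in waiting. Balancing these two requirements is the crux; I expect it is handled by clustering all releases within a time window of length $O(\tapelength)$ while still ensuring each one lands just after the corresponding \fifo{} commitment, using the fact that \fifo{}'s wasteful traversals already take time $\Theta(\tapelength)$ each so there is ``room'' to slot the releases in. A secondary, minor point is to confirm that the constructed instance respects the model's conventions ($\leftfile{1}=0$, $w_1 = 1$, head starts at $\tapelength$); this is immediate from the construction.
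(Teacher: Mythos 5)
Your plan aims at the right thing (a family of instances on which \fifo{}'s total response time exceeds the optimum by an unbounded factor), but as written it has three concrete problems. The crux you flag -- timing each release so that it lands just after \fifo{} has ``irrevocably committed'' to the wrong direction -- is a red herring, and leaving it unresolved leaves the proof unfinished: \fifo{} never reorders pending requests, so its schedule is completely determined by the arrival \emph{order}. The adversary can therefore release every request within the first time step in a pathological order, and no reasoning about head position versus release time is needed at all. This is exactly the paper's proof: all requests for $\numfiles$ unit-size files arrive in the first time step in the zig-zag order $(\numfiles/2,\, \numfiles/2+1,\, \numfiles/2-1,\, \numfiles/2+2,\, \ldots,\, \numfiles,\, 1)$, which forces \fifo{} to pay $\Theta(\numfiles^3)$ total response time while the sorted sequence $(1,2,\ldots,\numfiles)$ pays $O(\numfiles^2)$. (Note also that for the \emph{unadjusted} response-time objective used here, delaying releases can only inflate the optimum's cost as well, so spreading releases over time works against the adversary.)

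Two further gaps: your accounting conflates total head travel with the objective. The objective is the sum of response times over all requested files, so with $\Theta(N)$ requests an optimal backward-then-forward sweep has cost $\Theta(N\tapelength)$, not $O(\tapelength)$, and \fifo{} under the ping-pong pays $\Theta(N^2\tapelength)$; the ratio $\Omega(N)$ survives, but only after both sides are recomputed as sums. Finally, your construction uses $O(1)$ distinct file locations with $\Theta(N)$ re-requests, whereas the model attaches a single indicator $\fileweight{\file}\in\{0,1\}$ to each file and nothing in the online section licenses repeated requests of the same file; you should instead use $\Theta(N)$ distinct files (e.g., unit-size files alternating between the two ends of the tape, or the paper's middle-out zig-zag), which simultaneously fixes this and removes any need for delayed releases.
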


We now introduce a policy for the online~\ref{model:LTS} that combines ideas from \cite{AusielloFLST01} for the online dial-a-ride problem and \cite{baezayates1993searching} for point search in a plane. The algorithm partitions the tape into equally-sized contiguous intervals  and reads such intervals through incrementally larger right-to-left traversals. We name this policy augmenting reading interval (\zigzag{}), which is parameterized by a step size~$\alpha \geq 1$ and described as follows:

\noindent \underline{\zigzag{} Policy:}
	\begin{enumerate}
	    \item Let $\delta \equiv \gcd{ \{ \filesize{\file}:\file \in \fileset \}}$ and $\tapelength' \equiv \tapelength/\delta$ be the length and number of intervals, respectively;
		\item For $t = 1,\dots, \lceil\log( \tapelength')\rceil$:
		\begin{enumerate}
		    \item Move the tape head~$\min(\alpha^{t},\tapelength')$ intervals to the left of $\tapelength$.
		    \item Move the tape head back to~$\tapelength$.
		\end{enumerate}
	\end{enumerate}
\begin{theorem}\label{thm:onlinealg}
    \zigzag{} is 7-competitive for the online \ref{model:LTS} for~$\alpha = 2$, and the competitive factor is asymptotically tight. Moreover, no other value of $\alpha$ achieves a better competitive ratio.
\end{theorem}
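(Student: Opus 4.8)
\textbf{Proof proposal for Theorem~\ref{thm:onlinealg}.}

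The plan is to analyze the \zigzag{} policy against an adversary argument, separating the competitive-ratio upper bound from the tightness claims. For the upper bound, fix any input sequence of requests and let $\optpolicyval$ denote the cost of an optimal offline schedule. First I would observe that, since every file is traversed from left to right and bit $0$ lies to the left of file~$1$ (which is always requested), the optimal offline schedule must at some point move the tape head to the leftmost requested bit; denote by $\tau^*$ the latest release time among requested files and by $p^*$ the leftmost left-bit position $\leftfile{\file}$ over all requested files~$\file$. A lower bound of the form $\optpolicyval \geq \max\{\tau^*, 2(\tapelength - p^*)\}$ (up to the file-reading terms) follows because the head starts at $\tapelength$, cannot service the leftmost request before it has been released, and must travel out to $p^*$ and back to read files to its right in ascending order per Proposition~\ref{prop:allforward}. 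I would also use the latency reformulation \eqref{eq:objRewriting} to argue that the \emph{response-time} objective inherits a constant-factor relationship to the makespan-type quantities above, since there is at least one pending request until the deepest one is serviced.

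Next I would bound the cost incurred by \zigzag{} with $\alpha = 2$. The key structural fact is that in round~$t$ the head sweeps out to $\min(2^t, \tapelength')$ intervals left of $\tapelength$ and back, so the total distance traversed through the end of round~$T$ is $\sum_{t=1}^{T} 2\cdot 2^t \delta \leq 2^{T+2}\delta$, a geometric sum dominated by its last term. If the adversary forces a request at depth $d$ released at time $r$, then \zigzag{} services it no later than the end of the first round $t$ with $2^t \delta \geq d$ that also begins after time $r$; because the rounds grow geometrically, both the ``catch-up after release'' delay and the ``reach depth $d$'' delay cost at most a constant times $\max\{r, d\}$, and the backtracking overhead contributes another constant factor. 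Carefully accounting for the worst alignment — a request appearing just after the head has passed its depth, forcing a full extra doubling round — is what yields the constant $7$; I would set up the inequality $\text{cost}_{\zigzag{}}(\text{request}) \leq 7 \max\{r, 2d\}$ (schematically) and then sum/maximize over requests against the matching lower bound. This bookkeeping — pinning down exactly which round services a given request and summing the zig-zag overhead so the constant comes out to $7$ and not something larger — is the main obstacle, and it is where the choice $\alpha = 2$ must be used essentially.

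Finally, for the two tightness statements I would construct explicit adversarial instances. For asymptotic tightness of the factor $7$ at $\alpha = 2$, I would release a single deep request at the precise moment that makes \zigzag{} incur a full wasted doubling round before reaching it, while the offline optimum goes straight there; letting the tape length grow makes the ratio approach $7$. For the claim that no other $\alpha$ does better, I would carry out the bound above with a general $\alpha \geq 1$: the geometric-sum overhead becomes a function $f(\alpha) = \frac{\alpha}{\alpha - 1}$-type expression (reflecting total backtracking distance relative to the deepest reach) multiplied by a worst-case alignment penalty that grows like $\alpha$ (since a request can hide just beyond the previous sweep, forcing a factor-$\alpha$ overshoot). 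Minimizing the resulting competitive bound $g(\alpha)$ over $\alpha \geq 1$ and showing the minimum is attained at $\alpha = 2$ with value $7$ — together with a matching lower-bound family for each $\alpha$ — completes the proof. I expect $g$ to be something like $g(\alpha) = 1 + 2\alpha + \frac{2\alpha}{\alpha-1}$ or a close variant, whose integer-friendly minimizer is $\alpha = 2$; verifying this and exhibiting the per-$\alpha$ lower bound instances is routine once the structure is in place.
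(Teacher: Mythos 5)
Your high-level plan coincides with the paper's: lower-bound the optimum for each request by its distance from the head's starting position (and by its release time for late arrivals), upper-bound the time at which \zigzag{} first sweeps past each position using the geometric growth of the sweeps, identify the worst-aligned request (one sitting just beyond the previous sweep), and minimize the resulting ratio over $\alpha$; the tightness instance you describe is also the paper's. However, the quantitative core is missing, and the placeholders you supply would not yield the stated constant. The paper proves by induction that the position at distance $\alpha^k$ from the head's start is first traversed at time $2(\alpha+\alpha^2+\cdots+\alpha^{k-1})+\alpha^k$, hence the worst position, at distance $\alpha^k+1$, is first reached at time $2(\alpha+\cdots+\alpha^{k+1})-(\alpha^k+1)$; against the per-request lower bound $\alpha^k+1$ this gives a ratio below $\frac{2\alpha^2}{\alpha-1}-1$, which is minimized at $\alpha=2$ with value $7$. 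Your candidate trade-off $g(\alpha)=1+2\alpha+\frac{2\alpha}{\alpha-1}$ evaluates to $9$ at $\alpha=2$, so as written it cannot establish $7$-competitiveness, and your schematic bound $\mathrm{cost}\le 7\max\{r,2d\}$ is measured against $2d$ while the only valid per-request lower bound is $d$ itself (a response time requires reaching the left end of the file once, not a round trip), so that route degrades to roughly $14$ unless repaired. For the same reason the offline bound $\optpolicyval\ge 2(\tapelength-p^*)$ is not valid for the response-time objective.

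A second gap is the treatment of requests released after \zigzag{} has already swept past their position: these need a separate argument, and the paper supplies it by showing that consecutive visit times of \zigzag{} to any fixed position at most double, so such requests experience a ratio of at most $2$; your one-sentence ``catch-up after release'' remark does not contain this case analysis. Finally, the claim that no other $\alpha$ does better follows, as you anticipate, from minimizing the correct ratio function together with the tight family (a request at distance $\alpha^k+1$ appearing just after the head has passed), but this conclusion is only as good as the ratio function itself, which is exactly the piece your proposal leaves open.
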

\begin{proof}{Proof of Theorem~\ref{thm:onlinealg}:}
We assume w.l.o.g. that $\delta = 
\gcd(\{\filesize{\file}:\file \in \fileset \}) = 1$ and, thus, $\tapelength' = \tapelength$. We use~$\{0, 1,2,\ldots,\tapelength\}$ in reversed order to simplify the notation used in this proof, i.e., positions 0 and~$\tapelength$ are the last and first positions of the tape, respectively. We divide the analysis in two cases.

\paragraph{First reading:} Let~$p: \{1,2,\ldots,\tapelength\} \rightarrow \mathbb{N}$ denote the moment when each position of the tape is traversed rightwards for the first time. First, for every position~$\alpha^k$, $k \in \mathbb{N}$, we show that~$p(\alpha^k) = 2(\alpha + \alpha^2 + \ldots + \alpha^{k-1}) + \alpha^k$ by induction in~$k$. For~$k = 1$, we have $p(\alpha^1) = \alpha$, which is the time the tape head needs to move from 0 to $\alpha$, so the base holds. Let us assume that the formula holds for position~$\alpha^{k-1}$. Position~$\alpha^k$ is visited for the first time at time
\begin{eqnarray*}
p(\alpha^k) &=& 
\underbrace{p(\alpha^{k-1})}_{\text{Time to reach~$\alpha^{k-1}$}} 
+ 
\underbrace{\alpha^{k-1}}_{\text{Moving from~$\alpha^{k-1}$ to~$0$}} 
+ 
\underbrace{\alpha^{k}}_{\text{Moving from~$0$ to~$\alpha^{k}$ }} \\
&=&
    \underbrace{     2(\alpha + \alpha^2 + \ldots + \alpha^{k-2}) + \alpha^{k-1}}_{\text{Induction hypothesis}}
    +
    \alpha^{k-1}
    +
    \alpha^k
\\
&=&
    2(\alpha + \alpha^2 + \ldots + \alpha^{k-2} + \alpha^{k-1}) +
    \alpha^k,
\end{eqnarray*}
so the result holds. Next, observe that the first visit to files~$\alpha^{k}-1,\alpha^{k}-2,\ldots,\alpha^{k-1}+1$ take place after the first visit to~$\alpha^k$. In particular, observe that
position~$\alpha^{k-1}+1$ is the last among~$1,2,\ldots,\alpha^k$ to be traversed leftwards, an operation that takes place at time
\begin{eqnarray*}
p(\alpha^{k}+1) &=& \underbrace{p(\alpha^{k+1})}_{\text{Time to reach~$\alpha^{k+1}$}} + \underbrace{[\alpha^{k+1} - (\alpha^{k} + 1)]}_{\text{Moving from~$\alpha^{k+1}$ to~$\alpha^{k} + 1$ }} \\
&=& 
[2(\alpha + \alpha^2 + \ldots + \alpha^{k-1} + \alpha^{k}) +
    \alpha^{k+1}] + [\alpha^{k+1} - (\alpha^{k} + 1)] \\
&=& 
2(\alpha + \alpha^2 + \ldots + \alpha^{k} + \alpha^{k+1}) -(  \alpha^{k}+1)     
\end{eqnarray*}
The position of the file defines a lower bound on its response time, i.e., if a file starts at position~$\alpha^k+i$, the minimum response time is also~$\alpha^k+i$. Thus, the ratio~$c$ between the response time given by our policy for any file starting from~$1,2,\ldots,\alpha^k$ and the minimum response time for any request for~$\file$ released within the first~$\alpha^k$ time steps is maximum at position~$\alpha^{k-1} + 1$ and is given by
\begin{eqnarray*}
c 
	&\leq& 
	\frac{p(\alpha^{k}+1)}{\alpha^{k}+1} 
    = 
    \frac{ 2(\alpha + \alpha^2 + \ldots +  \alpha^{k+1} ) }{ \alpha^{k}+1 } -1    
    =
	\frac{ 2\alpha^{k+1}(1 + 1/\alpha + \ldots + 1/\alpha^{k} ) }{ \alpha^{k}+1 } -1
	\\
    &<&
    \frac{ 2\alpha^{k+2} }{ (\alpha^{k}+1)(\alpha-1) } -1 \\
    &<&
    \frac{ 2\alpha^{2} }{ \alpha-1 } -1,
\end{eqnarray*}
which is minimum at~$\alpha = 2$ and results in $c < 7$. Thus, \zigzag{} takes at most 7 times longer to reach and traverse any file leftwards for the first time than any other policy if~$\alpha = 2$. 

\paragraph{Other readings:}
Let us consider now the~$v$-th read operation of each file. First, observe that after reading~$2^k$ for the first time, the policy moves rightwards to~$\tapelength$ first and then left to~$2^{k+1}$, from where it moves back to read~$2^k$ for the second time. The same procedure is repeated for the other visits; we show that the~$v$-th visit of position~$2^k$ takes place at time $p(2^k,v) = 2\sum\limits_{q = 0}^{k+v-1}2^{q} - 2^{k}$ by induction in~$v$. For~$v = 1$, we have~$p(2^k,1) = p(2^k)$, so the base case holds. If we assume that the result holds for the~$(v-1)$-th visit, we have
\begin{eqnarray*}
p(2^k,v) 
    &=& 
        \underbrace{p(2^{k},v-1)}_{(v-1)\text{-th visit}} 
        + 
        \underbrace{2^{k}}_{\text{Moving from~$2^k$ to~$0$ }} 
        + 
        \underbrace{2^{k+v-1}}_{\text{Moving from~$0$ to~$2^{k+v-1}$}}
        +
        \underbrace{2^{k+v-1} - 2^k}_{\text{Moving from~$2^{k+v-1}$ to~$2^{k}$}}  
\\
    &=& 
        2\sum\limits_{q = 0}^{k+v-2}2^{q} 
        +
        2 \cdot 2^{k+v-1}
        - 
        2^k
        =
        2\sum\limits_{q = 0}^{k+v-1}2^{q} - 2^k.
\end{eqnarray*}
For arbitrary files~$2^{k} - q$, $0 \leq q < 2^{k-1}$, the~$v$-th visit takes place at time
$
p(2^{k} - q,v) = p(2^k,v) + q = 2\sum\limits_{q = 0}^{k+v-1}2^{q} - 2^k + q.
$
Thus, the difference between the~$(v+1)$-th and the~$v$-th visit for~$v \geq 1$ is
\begin{eqnarray*}
p(2^k - q,v+1) - p(2^k - q,v) 
    &=& 
        \underbrace{2^k - q}_{\text{Moving from~$2^k - q$ to 0}}
        +
        \underbrace{2^{k+v-1}}_{\text{Moving from~$0$ to~$2^{k+v-1}$}}
        +
        \underbrace{2^{k+v-1} - (2^k-q)}_{\text{Moving from~$2^{k+v-1}$ to~$2^{k}-q$}} \\
    &=&
        2^{k+v}.
\end{eqnarray*}
This implies that the competitive ratio~$c'$ for the~$v$-th visit of an arbitrary position~$2^k - q$, $v \geq 2$, which covers requests released after the~$(v-1)$-th visit, is \begin{eqnarray*}
c' &=& 
        \frac{p(2^k - q,v+1)}{p(2^k - q,v)}
    =
        1 + \frac{2^{k+v}}{p(2^k - q,v)}
    \leq
        1 + \frac{2^{k+v}}{2\sum\limits_{q = 0}^{k+v-1}2^{q} - 2^k + q}
            \leq
        1 + \frac{2^{k+v}}{2\cdot 2^{k+v} - 2^k + q}
        \leq
        2.
\end{eqnarray*}
It follows that~\zigzag{}  is 7-competitive for the online~\ref{model:LTS}. Finally, even though the inequality bounding the competitive factor for the first visit is strict, the factor 7 is asymptotically tight. More precisely, \zigzag{} consumes time~$2(2^1 + 2^2 + \ldots + 2^{k-1}) + 2^K \approx 3\cdot 2^k$ to reach position~$2^k$ for the first time. Moreover, once position~$2^k$ is reached, the tape head moves back to the end of the tape and then to~$2^{k+1}$ before going back to~$2^k+1$; in total, these movements require time~$4 \cdot 2^k -1$, so the visit time of~$2^k+1$ is approximately~$7 \cdot 2^k -1$. 
\hfill $\blacksquare$
\end{proof}	


The proof of Theorem~\ref{thm:onlinealg} shows that the first visit is the most critical for the competitive factor of~\zigzag{}; this is not surprising, as late releases naturally have higher minimum response times. 

\subsection{Adjusted Response Times}
\label{sec:releaseBased}

Next, we discuss an online version of the~\ref{model:LTS} with alternative objective criteria. Specifically, we consider the setting where the response time is adjusted to incorporate the time at which a request arrived. For example, if a request is released at time 7 and serviced at time 10, its adjusted response time is $3$, while the actual response time (used in Proposition~\ref{prop:fifo_not_competitive} and Theorem~\ref{thm:onlinealg}) is~$10$. This relates to the concept of flow time (or time-in-system) objective criteria in the machine scheduling literature \citep{kanet1981minimizing}. We show that there is no policy with bounded-factor guarantees for the online~\ref{model:LTS} if the goal is to minimize the sum of the adjusted response times.
\begin{proposition}\label{prop:onlinebad}
	There is no $f$-competitive policy that minimizes the sum of adjusted response times  in the online~\ref{model:LTS} for any bounded function $f$.
\end{proposition}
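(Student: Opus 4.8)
The plan is a pursuit (or ``ping\nobreakdash-pong'') argument of the kind that rules out competitive flow\nobreakdash-time minimization in online routing problems: against any fixed online policy I would construct an adversarial instance on which the ratio of the policy's total adjusted response time to the offline optimum is arbitrarily large. Fix an integer $N\ge 1$ and let $\tapelength=4N$. Put $N$ unit-size ``left'' files filling the interval $[0,N]$ of the tape (so file $1$ occupies $[0,1]$), one large never-requested ``filler'' file spanning $[N,\tapelength-N]$, and $N$ unit-size ``right'' files filling $[\tapelength-N,\tapelength]$; thus every left file has left-bit position below $\tapelength/2$ and every right file has left-bit position above $\tapelength/2$. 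The adversary first releases file $1$ at time $0$ (legitimate, since $\fileweight{1}=1$) and then, at release times $\sigma_1<\sigma_2<\dots<\sigma_{N-1}$ chosen so that $\sigma_1>2\tapelength$ and every gap $\sigma_{i+1}-\sigma_i$ exceeds $\tapelength$, it releases at time $\sigma_i$ the next still-unrequested \emph{right} file if the online head lies at a position $\le\tapelength/2$ at that instant, and the next still-unrequested \emph{left} file otherwise. At most $N$ requests of either type ever occur, so the supply of files suffices.

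The argument then rests on two estimates. First, the online policy is slow: since file $1$ starts at bit $0$ while the head starts at $\tapelength$, the policy cannot begin reading file $1$ before time $\tapelength$, so file $1$ alone contributes at least $\tapelength$ to its cost; and for each later request the online head is, by construction, in the half of the tape opposite the requested file, whose left bit is therefore at distance at least $\tapelength/2-N$, making that request's adjusted response time at least $\tapelength/2-N$. Hence the online cost is at least $\tapelength+(N-1)(\tapelength/2-N)=\Theta(N^2)$. Second, the offline optimum is fast: it too must pay $\tapelength$ for file $1$ (unavoidable for any solution), but it can serve every subsequent request with adjusted response time $0$, because after reading file $1$ it travels to the left bit of the next requested file and waits there until the release --- feasible since consecutive releases are more than $\tapelength$ apart (enough to cross the entire tape) and $\sigma_1>2\tapelength$ leaves time for the forced initial trip to bit $0$. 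So the offline optimum equals $\tapelength=\Theta(N)$. The ratio is therefore $\Theta(N)$, and since $N$ is arbitrary while the number of files is $2N+1$, this is unbounded; more strongly, for any fixed bound $F$ the difference between the online cost and $F$ times the optimum is at least $\tapelength+(N-1)(\tapelength/2-N)-F\tapelength\to\infty$, which rules out $f$-competitiveness even with an additive slack, for every bounded $f$.

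The step I expect to need the most care is reconciling the forced left-traversal for file $1$ with the adaptive releases: right after time $\tapelength$ the offline server sits near bit $0$, so the choice $\sigma_1>2\tapelength$ is precisely what guarantees it can still reach the first adaptive request on time, after which gaps exceeding $\tapelength$ let it stay poised for all remaining requests (visiting them in release order). Two minor points also deserve a line: the reading-direction restriction matters only in that the online head must reach the \emph{left} bit of the far file, which is still at distance $\ge\tapelength/2-N$, so the bound is unaffected; and the construction is adaptive to the policy, which is legitimate for a lower bound, since for each fixed policy the instance is fully determined and the bound holds uniformly.
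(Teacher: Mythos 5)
Your construction is correct, but it is a genuinely different argument from the paper's. The paper's proof is a local, three-unit-file adversarial move: it waits until the head is at $\leftfile{2}$, branches on the policy's next step (move right, move left, or stay), and releases one request that the clairvoyant offline server could have served with adjusted response time \emph{zero} by simply waiting in place; since the online cost is strictly positive, the ratio is immediately infinite on a fixed, constant-size instance. Your proof is a global pursuit/ping-pong construction in which both costs are positive but the ratio grows like $\Theta(N)$ with the instance size: the online head is always caught in the wrong half of the tape (cost $\Omega(N)$ per adaptive request, $\Omega(N^2)$ total), while the offline server, knowing the release times, pre-positions and pays only the unavoidable $\Theta(\tapelength)$ for file $1$ plus $O(1)$ per request. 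What your route buys is robustness: it does not hinge on the optimum being exactly zero, and, as you note, it also rules out competitiveness definitions with an additive slack, which the paper's OPT-equals-zero trick does not address without iteration; what it costs is length and bookkeeping, and one small quantitative point needs tightening --- for the subsequent requests to have adjusted response time exactly $0$ the gaps must exceed $\tapelength+1$ (travel plus the unit reading time of the previous request), and $\sigma_1$ must exceed $2\tapelength+1$; alternatively, just note that even with $O(1)$ slack per request the offline cost stays $O(N)$, so the $\Theta(N)$ ratio is unaffected. Both your argument and the paper's implicitly use the same modeling convention that a request can only be served by a read starting at or after its release, so you are consistent with the paper there.
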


\section{Numerical Study}
\label{sec:numerical}



In this section, we perform a numerical study of the sequencing policies investigated in this work. We begin in \S\ref{sec:policyperformance} with a sensitivity analysis on artificial instances based on empirical surveys of file distributions. In \S\ref{sec:landsat}, we provide a study on real tapes storing large-scale satellite imagery. 
Tables with detailed results are included in \S\ref{sec:appce} of the electronic companion.

The response times are calculated in seconds based on Linear Tape-Open 8 (LTO-8) magnetic tape storage technology with an average read speed of $\nu =$360 MB/s for uncompressed data~\citep{quantum2021}. The experiments are run on an Intel(R) Skylake CPU at 2.40GHz for the purposes of comparative analysis. For reference, commercial tape drives use low-power embedded processors (e.g., PowerPC and ARM), which are approximately two to three orders of magnitude slower. Sequencing algorithms in tapes are limited to less than a second in total runtime.

\subsection{Synthetic Instances}
\label{sec:policyperformance}

We compare the performance of the deterministic policies in terms of solution quality and time, denoting the exact approach \eqref{eq:rewindrec}\textemdash\eqref{eq:valuerec} by \exact{}. We omit~\ssf{} because the policy has no constant-factor approximation guarantees and is not adopted in practice
to the best of our knowledge. Parameters $\fileweight{\file}$ are drawn at random from  a Bernoulli distribution with a fixed probability $p$ per file, where $p \in \{0.25, 0.50, 0.75, 1.0\}$. File sizes are based on the study by \cite{douceur1999large}, which at the time reported a log-normal distribution with parameters $\mu = 8.46$ and $\sigma = 2.38$. We adjust these parameters based on  trends in storage requirements (e.g., \citealt{agrawal2007five}) and draw file sizes from a log-normal distribution with parameters $\mu = 13.04$ and $\sigma \in \{1.50, 2.00, 2.38, 2.5, 3.0\}$. For instance, $\mu = 13.04$ and $\sigma = 2.38$ correspond to an average of $\sim$7.8 megabytes (MB) and a standard deviation of $\sim$13.2 MB. To control for numerical issues, we truncate file sizes to the 90\% quantile of the corresponding distribution and divide them by 1,000, i.e., our unit of reference is a kilobyte. For \fifo{}, each run corresponds to the average of 1,000 sequences generated uniformly at random since the file ordering is arbitrary. 

We consider two ranges of tape sizes. Small tapes are composed of $\numfiles \in \{100, 200, 300, 400\}$ files; the limit of 400 is due to the memory requirements of~\exact{}, where $n=400$ matches our computational limits. For larger cases, tracks in a modern tape store up to 400GB and may contain tens of thousands of files. Thus, we also evaluate large-scale instances of size $\numfiles \in \{800, 1600, 3200, 6400, 12800, 25600, 51200, 76800, 102400\}$.  We generate 10 instances per $(\numfiles, p, \mu, \sigma)$.

\paragraph{Policy Performance and Quality of Solutions.} Figure \ref{fig:filesystem_approx}-(a) depicts the average response time per file, i.e., the total response time divided by the number of requests. The results are aggregated by the number of files~$\numfiles$, and shaded areas correspond to the 95\% confidence interval. As expected, the average response times increase with~$\numfiles$ for all policies, but there are clear differences in terms of the quality of their solutions. The average response time for~\fifo{} reaches up to $4$ seconds for $\numfiles=400$, which is approximately twice the (optimal) response time of~\exact{}. \fififi{} improves on \fifo{} but underperforms with respect to \fifila{}, which is consistent with their relative theoretical performance. \lafila{} matched the optimal solution in this data set, so its curve coincides with~\exact{}. 
\begin{figure}[ht!]	
	\centering
	\captionsetup{justification=centering}
	\scalebox{1.0}{
		\begin{subfigure}[b]{0.5\textwidth}
			\includegraphics[width=0.95\textwidth]{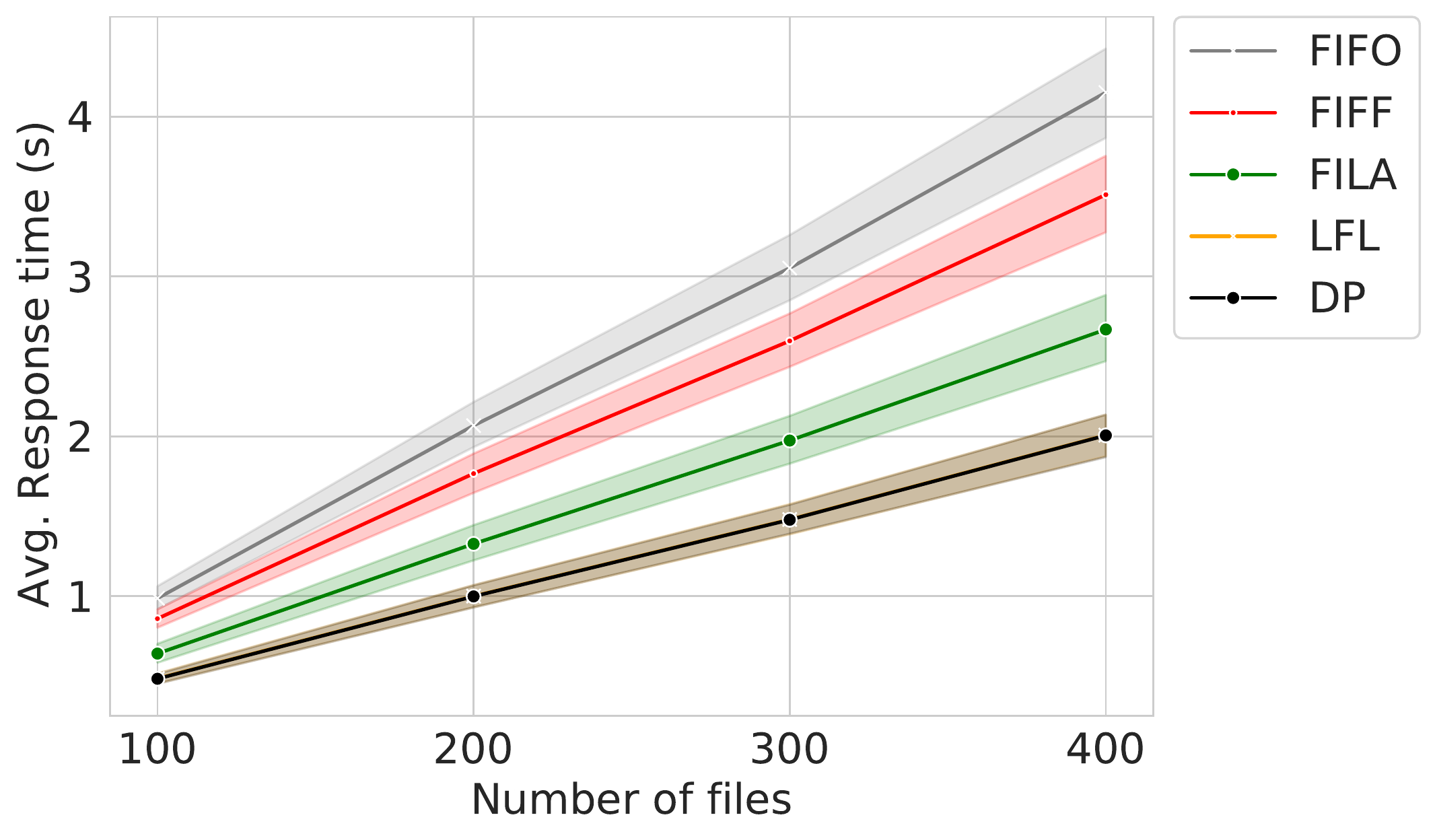}
			\caption{Average response time per file.}
		\end{subfigure}
		\begin{subfigure}[b]{0.5\textwidth}
			\includegraphics[width=1.0\textwidth]{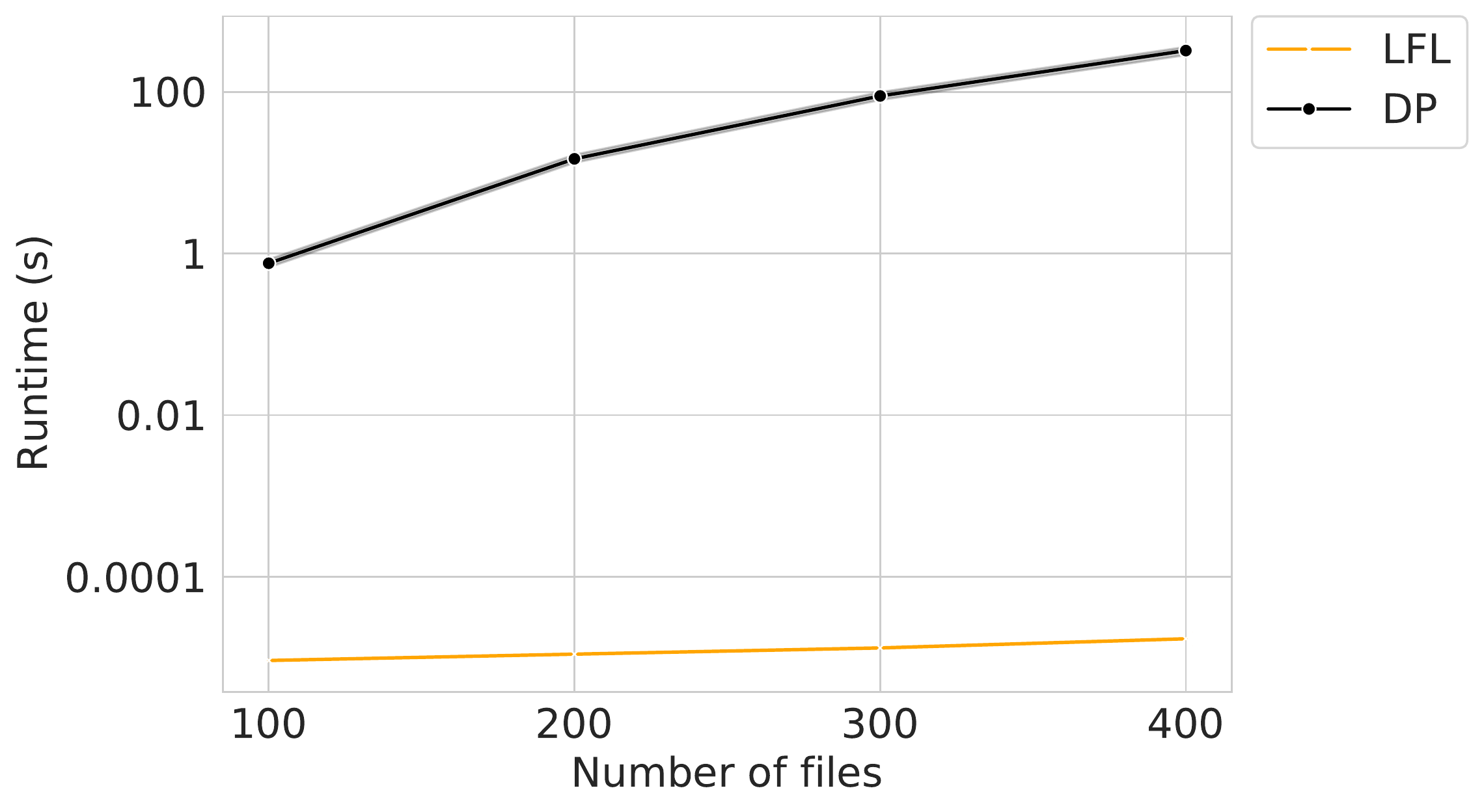}
			\caption{Runtime (in seconds, logarithmic scale).}
		\end{subfigure}
	}
	\caption{General performance on small synthetic instances (in color).}
	\label{fig:filesystem_approx}
\end{figure}

Figure~\ref{fig:filesystem_approx}-(b) illustrates the runtime of \lafila{} and \exact{}. Note that the other algorithms run in constant time and thereby are omitted from the plot. Times for \exact{} are several orders of magnitude higher than that of \lafila{}, requiring on average 107 seconds over all small instances. Thus, \exact{} is not adequate in scenarios where the number of files grow considerably larger than 100 (see also~\S\ref{sec:landsat}).

For scalability purposes, we next investigate the performance on large-scale tape sizes in Figures~\ref{fig:filesystem_approx_Large}-(a) and~\ref{fig:filesystem_approx_Large}-(b). 
\begin{figure}[ht!]	
	\centering
	\captionsetup{justification=centering}
	\scalebox{1.0}{
		\begin{subfigure}[b]{0.5\textwidth}
			\includegraphics[width=1.0\textwidth]{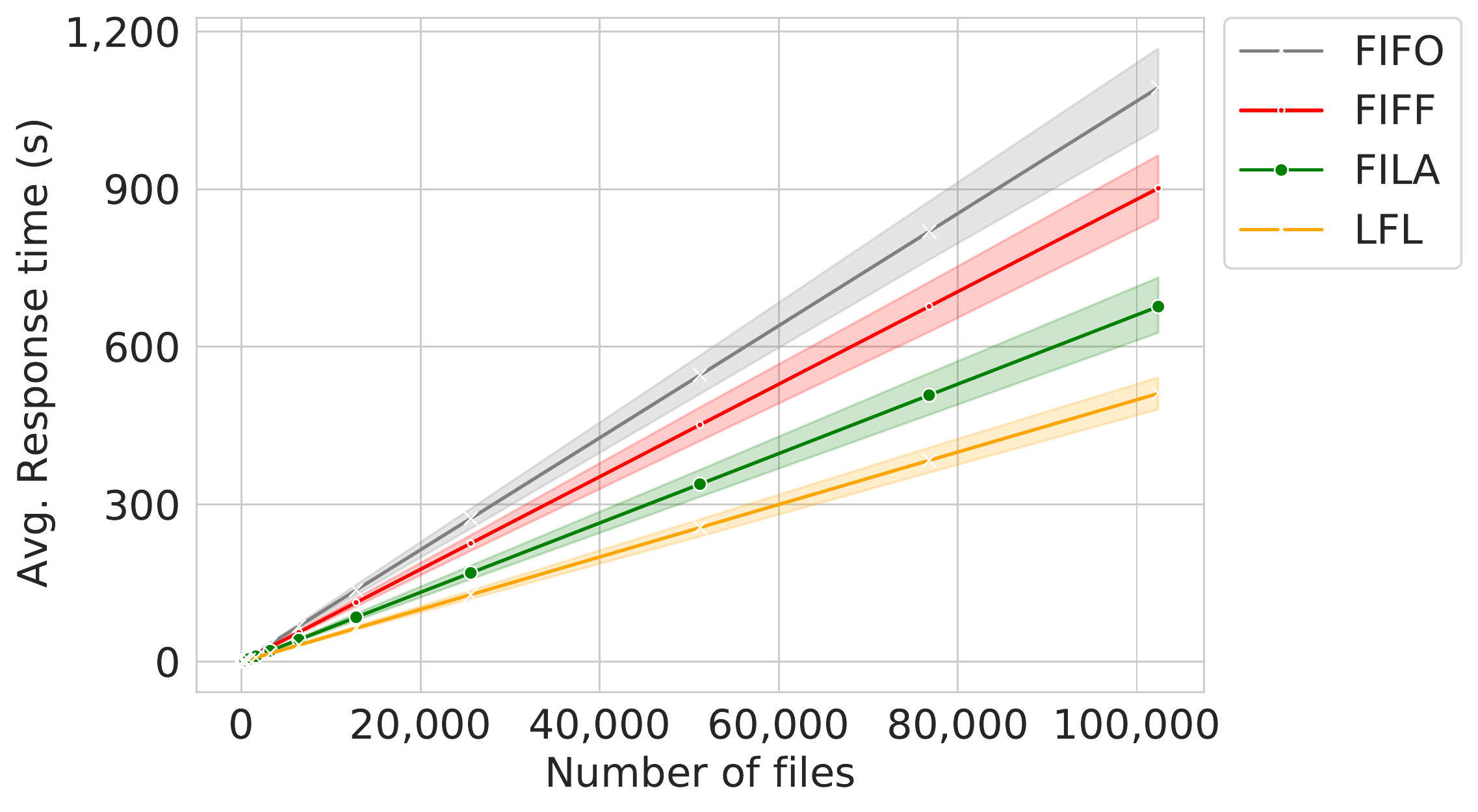}
			\caption{Average response time per file.}
		\end{subfigure}
		\begin{subfigure}[b]{0.5\textwidth}
			\includegraphics[width=0.93\textwidth]{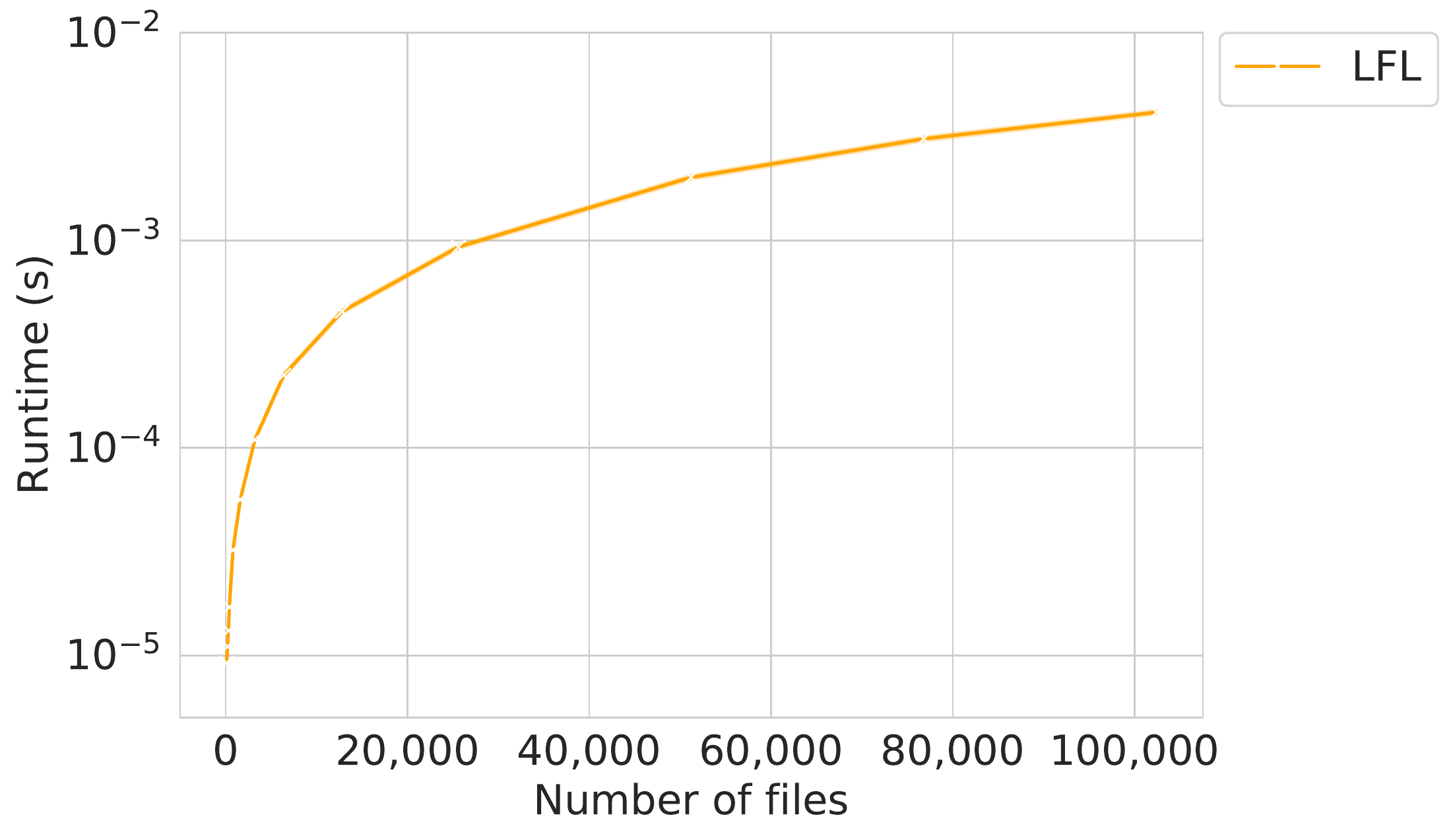}
			\caption{Runtime (in seconds, logarithmic scale).}
		\end{subfigure}
	}
	\caption{General performance on large synthetic instances (in color).}
	\label{fig:filesystem_approx_Large}
\end{figure}
The results in Figure~\ref{fig:filesystem_approx_Large}-(a) show that the performance trend of the approximate policies remains the same as the number of files grow. \lafila{} is consistently superior to the other policies, and its runtime (below 0.001 seconds) suggests that it is also practical. 

\paragraph{Empirical Approximation.} Figures \ref{fig:filesystem_ratio}-(a) and \ref{fig:filesystem_ratio}-(b) depict the average approximation ratio as a function of the request probability $p$ and the standard deviation parameter $\sigma$ of the log-normal distribution, respectively, on the small synthetic instances. The shaded areas represent the 95\% confidence intervals.
\begin{figure}[ht!]	
	\centering
	\captionsetup{justification=centering}
	\scalebox{1.0}{
		\begin{subfigure}[b]{0.5\textwidth}
			\includegraphics[width=1.0\textwidth]{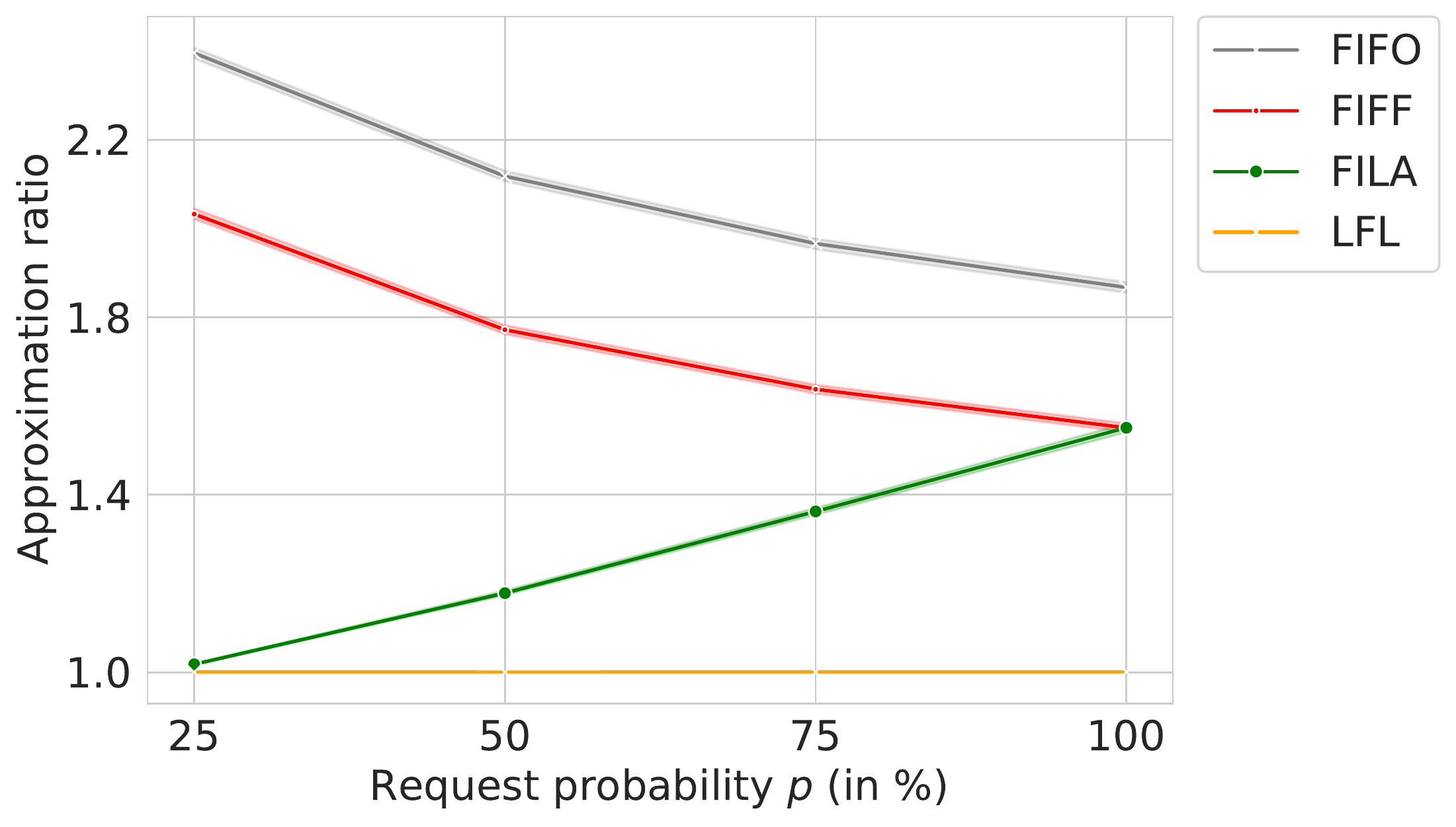}
			\caption{Performance per  request probability.}
		\end{subfigure}
		\begin{subfigure}[b]{0.5\textwidth}
			\includegraphics[width=1.0\textwidth]{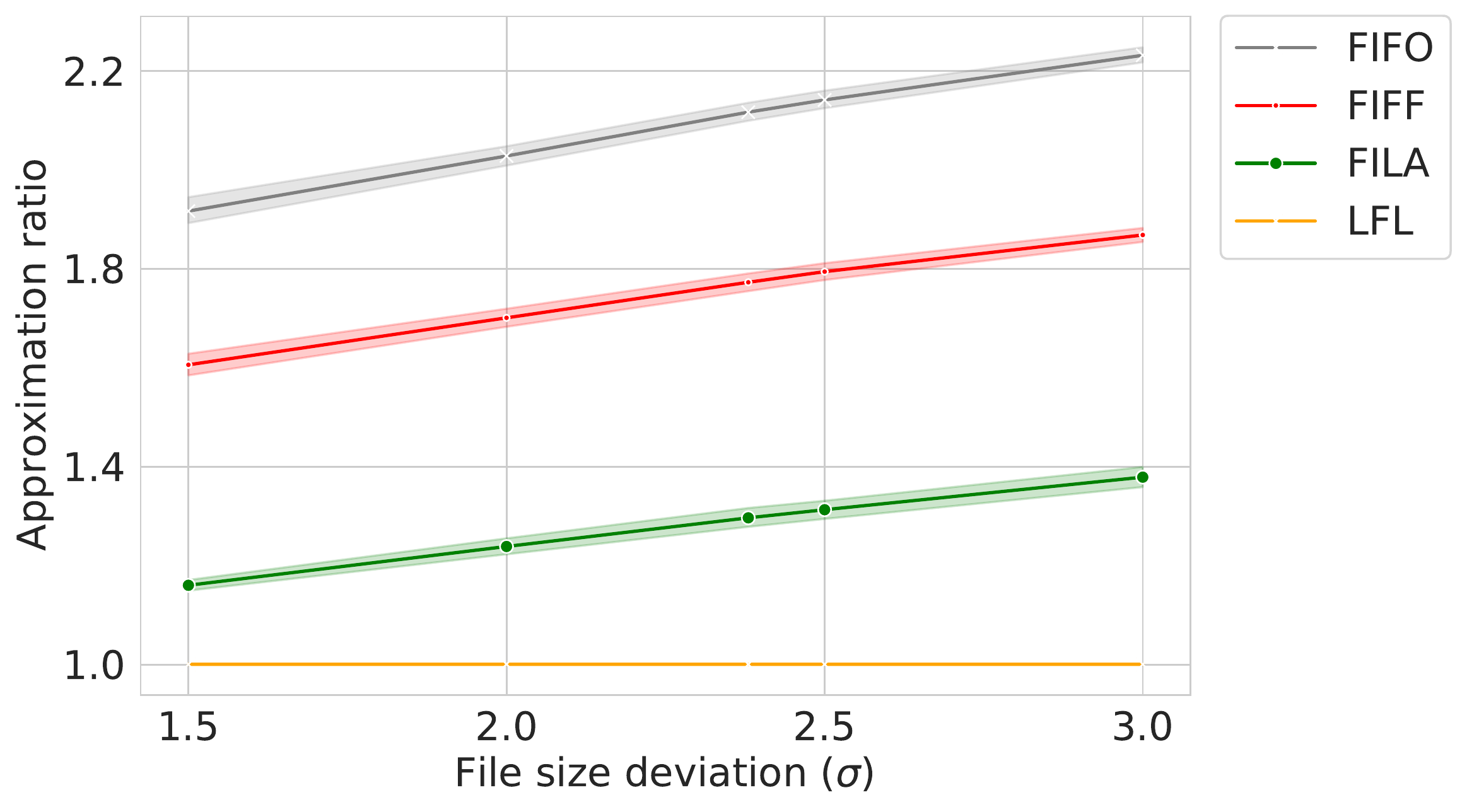}
			\caption{Performance per  file size deviation.}
		\end{subfigure}
	}
	\caption{Approximation ratio under different request probabilities $p$ and file size deviations $\sigma$ regimes (in color).}
	\label{fig:filesystem_ratio}
\end{figure}
 The values are relative to the optimal solutions obtained by \exact{}.  \fififi{} is  worse on instances with fewer requests, producing sequences having twice the optimal total response time for $p \le 0.4$. This occurs because \fififi{} postpones files to the forward stage, and thus it may take a significant amount of time to read them after reaching position $0$. \fifila{}, due to its symmetry to \fififi{}, has the opposite behavior and is preferable when there are few requests. The approximation ratio of all policies increases with the file deviation~$\sigma$. \fifila{} and \lafila{} exhibit significantly better empirical performance than their theoretical worst case. In particular, \lafila{}  has a maximum approximation ratio of approximately $1.029$ over all instances. \fifo{} is outperformed by all approximate policies.

\paragraph{Value of File Access Frequency Data.} We next investigate the value of pre-fixing sequences in the presence of distributional information captured in the \ref{model:SLTS}. We consider a single tape configuration with~$\numfiles = 250$ and file sizes drawn from a log-normal distribution with parameters $\mu = 13.04$ and $\sigma = 2.38$. For each $p \in \{0.2, 0.4, 0.6, 0.8\}$, we solve the \ref{model:SLTS} optimally assuming $\prob{\file} = p$ for all $\file \in \fileset$ and evaluate the performance of the resulting sequence $\ordertuple_p$ for 100 runs, where each file is requested at random with probability $p$. In particular, we assume that files are skipped in $\ordertuple_p$ if they are not requested in the run. Figure \ref{fig:filesystem_stochastic} compares the performance of the resulting method, dubbed \texttt{SLTS}, with~\fifo{}, \fififi{}, \fifila{}, and \lafila{} for each run. The results are aggregated by~$p$ and normalized by the optimal solution obtained using~\exact{}.
\begin{figure}[ht!]	
	\centering
	\captionsetup{justification=centering}
	\scalebox{1.02}{
		\begin{subfigure}[b]{0.475\textwidth}
			\includegraphics[width=1.0\textwidth]{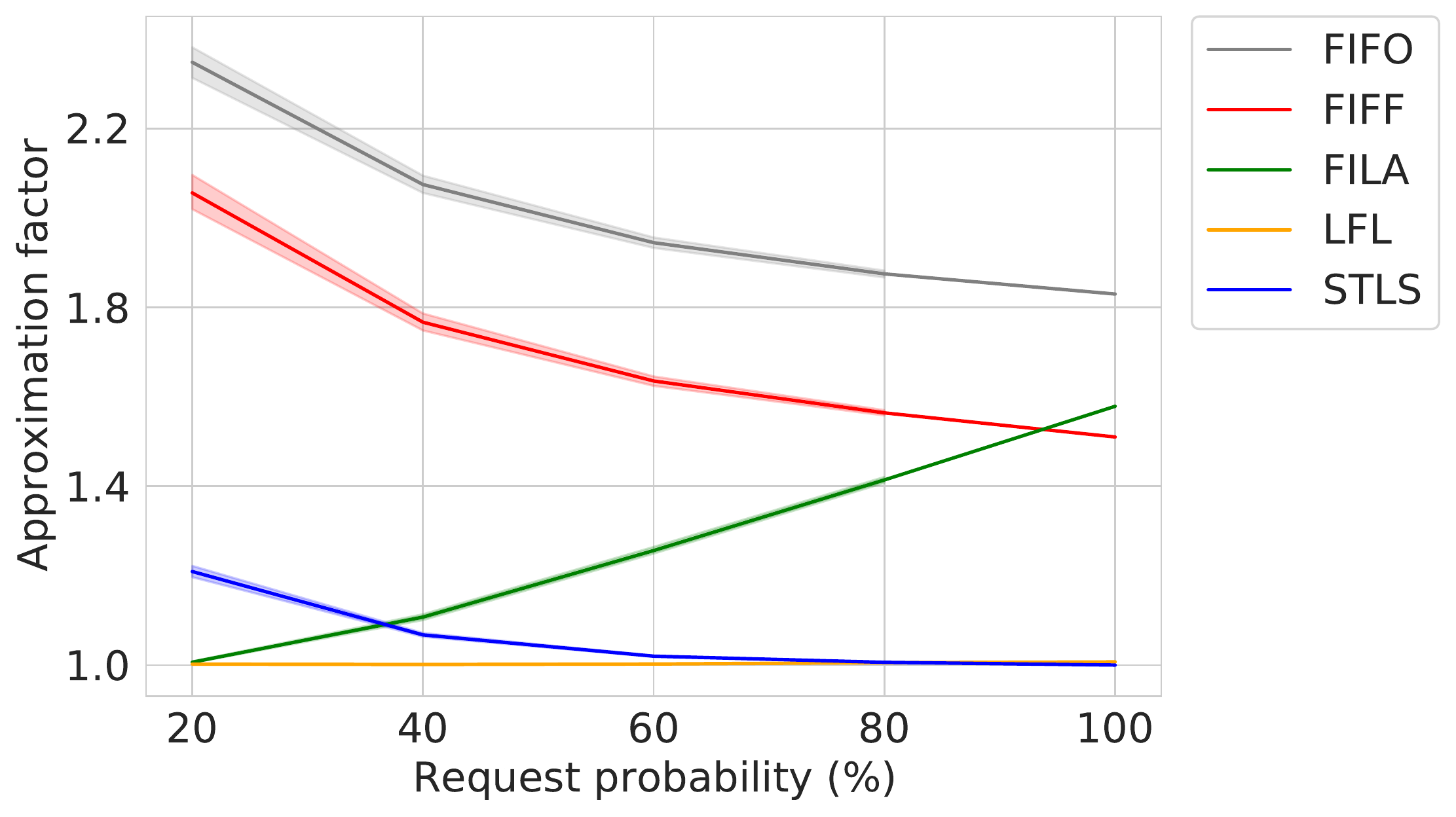}
			\caption{Performance per request probability.}
		\end{subfigure}
		\begin{subfigure}[b]{0.475\textwidth}
			\includegraphics[width=1.0\textwidth]{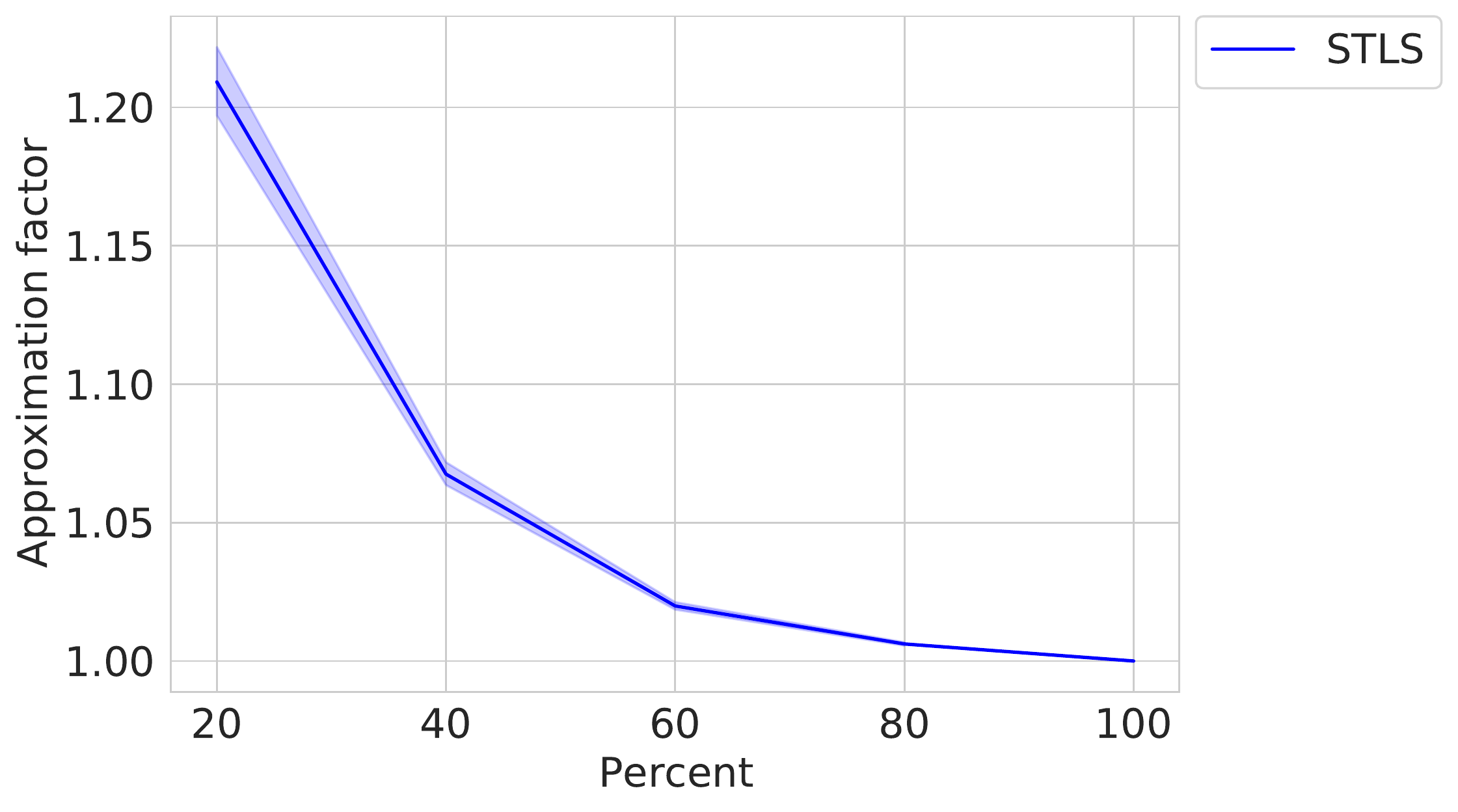}
			\caption{Performance of the \texttt{SLTS}.}
		\end{subfigure}
	}
	\caption{Approximation ratio under different request probabilities $p$ (in color).}
	\label{fig:filesystem_stochastic}
\end{figure}
The average performance of \texttt{SLTS} is superior to all existing policies at $p \ge 0.4$, where the number of requests is medium to high. For smaller values of $p$, \lafila{} outperforms \texttt{SLTS} by at most 20\%. However, Figure \ref{fig:filesystem_stochastic}-(b) suggests that larger gaps only occur for small values of $p$; otherwise, \texttt{SLTS} has strong overall performance. Notably, once $\ordertuple_p$ has been computed for the \ref{model:SLTS}, adjusting the sequence for the requested files takes negligible time, thus making it more interesting in scenarios where the computational resources are severely restricted. 

\subsection{Landsat Instances}
\label{sec:landsat}

The Landsat program is a space mission that images the Earth's surface every 16 days. Landsat 8, developed as a collaboration between NASA and the U.S. Geological Survey (USGS), is the eighth of the series of satellites launched by the mission. The images measure different ranges of frequencies (called ``bands'') along the electromagnetic spectrum. Because the sensing process generates massive amounts of data, Landsat 8 data sets are split into a collection of tiles (or satellite scenes). Each tile is stored in a single file with approximately 3.5 GB of data and features 12 bands encoded by numerical pixel matrices. 
The files are used by machine learning algorithms to classify the health conditions of vineyards for precision viticulture purposes and include vineyards in the Atacama desert in Chile, the Serra Gaucha region in Brazil, and the Manduria region in Italy.

Our benchmark contains 100 instances with different configurations of the Landsat dataset. Every instance consists of 15 Landsat tiles, each composed of 12 files (and hence one file per band). The average file size is approximately 280 MB, with a small standard deviation (less than 1 MB).

\begin{figure}[ht!]
	\centering
	\begin{subfigure}[b]{0.495\textwidth}
    	\includegraphics[width=1.0\textwidth]{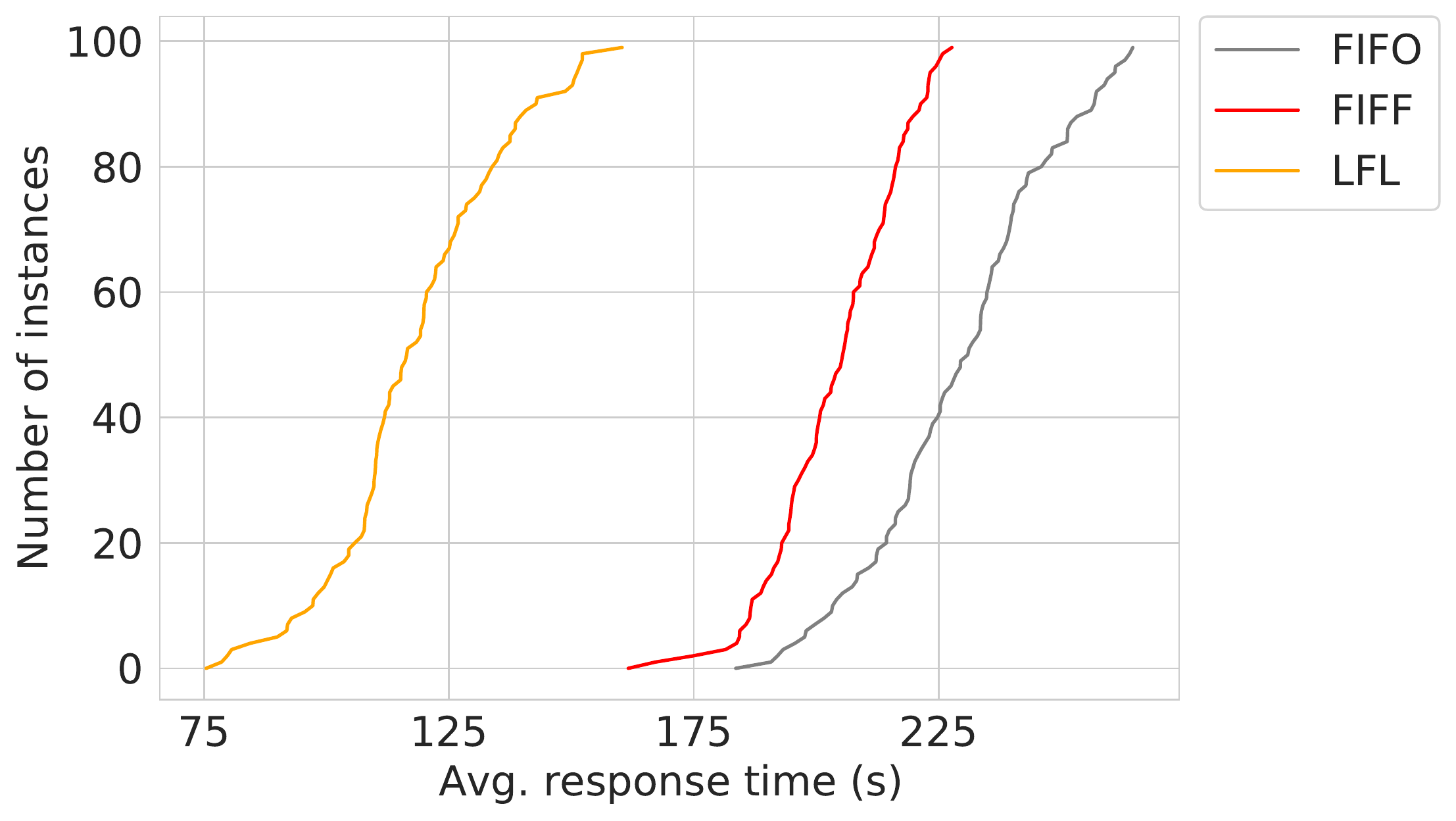}
	\end{subfigure}
	\hfill
	\begin{subfigure}[b]{0.495\textwidth}
	  	\includegraphics[width=1.0\textwidth]{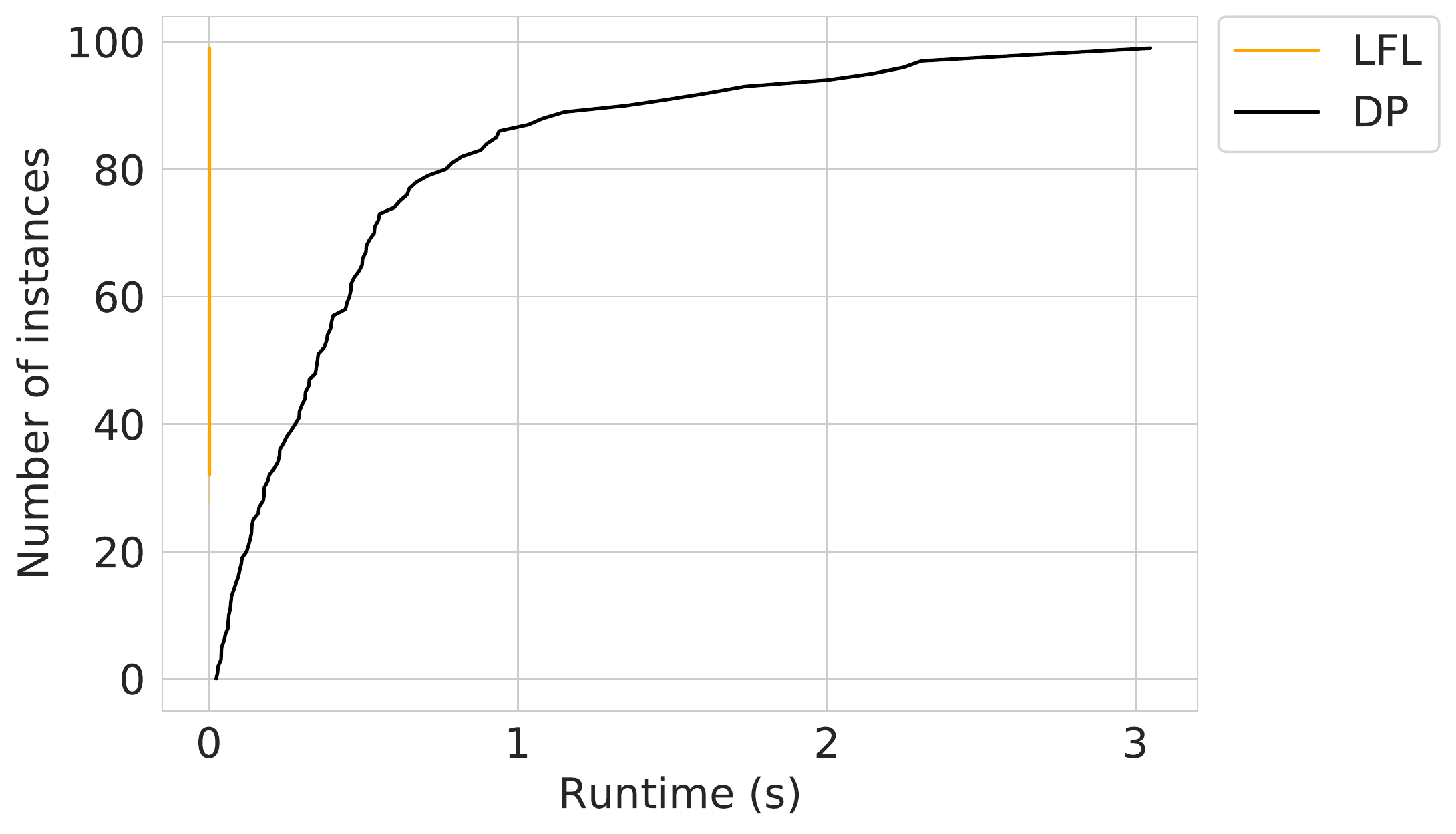}
	\end{subfigure}
	\caption{Performance on different instances of the Landsat dataset (in color).}
    \label{fig:landsat_performance}
\end{figure}

Figures \ref{fig:landsat_performance}-(a) and \ref{fig:landsat_performance}-(b) depict a cumulative performance profile of the approximation algorithms and \exact{}, respectively, on the Landsat dataset. We omit the performance plots of~\fifila{} and \exact{} in Figure \ref{fig:landsat_performance}-(a) because they are similar to~\lafila{}. In terms of the average response time, the policies can be divided into three different categories.  \fifo{} delivers the worst results, with an average response time of 230 seconds (standard deviation of 19). \fififi{} is approximately 13\% better than~\fifo{}, with an average response time of 204 seconds (standard deviation of 13). The remaining policies reduce the objective values by almost 50\%, with average response times of 118 seconds (standard deviation of 18). \exact{} improves upon the other algorithms by hundredths of seconds. The solution time curves in Figure~\ref{fig:landsat_performance}-(b) show that~\exact{} requires up to three seconds to solve these instances. The approximate  policies are fast enough to be used in practice and provide sequences under $0.02$ seconds.


\section{Conclusions}
\label{sec:conclusions}

This paper investigates policies to retrieve files stored in tapes to minimize the total response time. We consider three versions of the problem.
For the deterministic case, we present low-complexity policies with constant-factor approximation guarantees
and introduce a polynomial-time algorithm based on dynamic programming. We also develop a fully polynomial approximation scheme for a stochastic version of the problem where request probabilities are available. Finally, we study an online variant where requests arrive in real time, presenting the first constant-factor competitive algorithm. Our numerical analysis of both synthetic and real-world tape settings suggests that~\fifo{}, the standard policy in industry, is outperformed by other low-complexity methods and that policy decisions benefit when considering file access probabilities.

\bibliographystyle{plainnat}
\bibliography{references}

\newpage

\appendix

\section{Example}
\label{sec:example}


Consider a tape with five files, sizes $(\filesize{1}, \dots,  \filesize{5}) = (2,2,8,2,1)$, and tape velocity of $\nu = 1$ bit per time unit. 
The tape in Figure \ref{fig:exampleECScheduleA} depicts the sequence $\pi = (5,4,1,2,3)$, where solid (colored) arcs denote tape repositioning  and dashed (black) arcs denote file reading operations as in Figure \ref{fig:exampleSchedule1}. For instance, to read file 4 (the second in the ordering), the tape header finishes reading file 1 at time 2, repositions itself to the left of file 4 at time 9 (since it must traverse boths files 4 and 5), and finishes reading the file at time 15. The response time of file $4$ is the  time at which the tape head first started reading it, i.e., $9$. Note also that there is no need to reposition the tape header when reading file $2$ after file $1$. The total response time of $\pi$ is $\response_1(\pi) + \dots + \response_5(\pi) = 1+5+21+23+25 = 75$.

\begin{figure}[h!]
	\centering
	\scalebox{0.9}{
		\usetikzlibrary{arrows,backgrounds,snakes}
		\begin{tikzpicture}
			[
				rewindfile/.style={rectangle, draw=black, fill=black!2, thick, minimum size=5mm},
				forwardfile/.style={rectangle, draw=purple, fill=purple!2, thick, minimum size=5mm, text=purple},
				file/.style={rectangle, draw=black, fill=gray!5, thick, minimum size=5mm, text=black},
			]

			\node[file, minimum width=22mm] (f1) {1};
			\node[file, minimum width=22mm, right=0.5mm of f1] (f2) {2};
			\node[file, minimum width=50mm, right=0.5mm of f2, node distance=50pt] (f3) {3};
			\node[file, minimum width=26mm, right=0.5mm of f3, node distance=50pt] (f4) {4};
			\node[file, minimum width=14mm, right=0.5mm of f4, node distance=50pt] (f5) {5};
			
			\newcommand\drawsize[2] {
				\coordinate[above = 6mm of #1.west] (c1a#1);
				\coordinate[above = 6mm of #1.east] (c1b#1);
				\draw[<->] (c1a#1) -- (c1b#1) node[midway, fill=white, font=\scriptsize] {#2};
			}
			\drawsize{f1}{$\filesize{1} = 2$};
			\drawsize{f2}{$\filesize{2} = 2$};
			\drawsize{f3}{$\filesize{3} = 8$};
			\drawsize{f4}{$\filesize{4} = 2$};
			\drawsize{f5}{$\filesize{5} = 1$};

			\newcommand\drawsequence[6] {
				\coordinate[below = #4mm of #1.east] (d1a#1#2);
				\coordinate[below = #4mm of #2.west] (d1b#1#2);
				\draw[->, line width=0.3mm,color=orange] (d1a#1#2) -- (d1b#1#2);
				\node[font=\normalsize, color=orange, right=1mm of d1a#1#2.west] {#6};
				\coordinate[below = #5mm of #2.west] (d2a#1#2);
				\coordinate[below = #5mm of #2.east] (d2b#1#2);
				\draw[->, dashed, line width=0.5mm] (d2a#1#2) -- (d2b#1#2);
				\node[font=\normalsize, right=1mm of d2b#1#2.west] {#3};
			}
			\newcommand\drawsequenceforwardfile[6] {
				\coordinate[below = #5mm of #2.west] (d2a#1#2);
				\coordinate[below = #5mm of #2.east] (d2b#1#2);
				\draw[->, dashed, line width=0.5mm] (d2a#1#2) -- (d2b#1#2);
				\node[font=\normalsize, right=1mm of d2b#1#2.west] {#3};
			}

			\drawsequence{f5}{f5}{2}{6}{12}{1};
			\drawsequence{f5}{f4}{7}{18}{24}{5};
			\drawsequence{f4}{f1}{23}{30}{36}{21};
			\drawsequenceforwardfile{f1}{f2}{25}{42}{42}{-1};
			\drawsequenceforwardfile{f2}{f3}{33}{48}{48}{-1};

			\coordinate[below = 5mm of f5.east, xshift=+12mm] (p1);
			\coordinate[below = 50mm of f5.east, xshift=+12mm] (p2);
			\draw[->] (p1) -- (p2) node[midway,above,rotate=270] {Reading sequence};

		\end{tikzpicture}
	}
	\caption{Example of tape movement for the sequence $\ordertuple = (5,4,1,2,3)$. 
	}
	\label{fig:exampleECScheduleA}	
\end{figure}
Figure \ref{fig:exampleECScheduleB} illustrates a simpler policy (First-File-First, or \fififi{}) that moves the tape head to the beginning of the tape and reads files in sequence, i.e., $\ordertuple' = (1,2,3,4,5)$. Its intuition is that the tape only requires to do a single left-to-right movement. The total response time of $\pi'$ is $15+17+19+27+29 = 107$, i.e., 42\% higher than from the sequence above. The difference is due to the presence of a large file with size 8, which must be traversed twice prior to reading files 4 and 5 in $\ordertuple'$, increasing the response time. This is the basis of worst-case scenarios investigated in \S\ref{sec:deterministic}.
\begin{figure}[h!]
	\centering
	\scalebox{0.9}{
		\usetikzlibrary{arrows,backgrounds,snakes}
		\begin{tikzpicture}
			[
				rewindfile/.style={rectangle, draw=black, fill=black!2, thick, minimum size=5mm},
				forwardfile/.style={rectangle, draw=purple, fill=purple!2, thick, minimum size=5mm, text=purple},
				file/.style={rectangle, draw=black, fill=gray!5, thick, minimum size=5mm, text=black},
			]
			\node[file, minimum width=22mm] (f1) {1};
			\node[file, minimum width=22mm, right=0.5mm of f1] (f2) {2};
			\node[file, minimum width=50mm, right=0.5mm of f2, node distance=50pt] (f3) {3};
			\node[file, minimum width=26mm, right=0.5mm of f3, node distance=50pt] (f4) {4};
			\node[file, minimum width=14mm, right=0.5mm of f4, node distance=50pt] (f5) {5};
			
			\newcommand\drawsize[2] {
				\coordinate[above = 6mm of #1.west] (c1a#1);
				\coordinate[above = 6mm of #1.east] (c1b#1);
				\draw[<->] (c1a#1) -- (c1b#1) node[midway, fill=white, font=\scriptsize] {#2};
			}
			\drawsize{f1}{$\filesize{1} = 2$};
			\drawsize{f2}{$\filesize{2} = 2$};
			\drawsize{f3}{$\filesize{3} = 8$};
			\drawsize{f4}{$\filesize{4} = 2$};
			\drawsize{f5}{$\filesize{5} = 1$};

			\newcommand\drawsequence[6] {
				\coordinate[below = #4mm of #1.east] (d1a#1#2);
				\coordinate[below = #4mm of #2.west] (d1b#1#2);
				\draw[->, line width=0.3mm,color=orange] (d1a#1#2) -- (d1b#1#2);
				\node[font=\normalsize, color=orange, right=1mm of d1a#1#2.west] {#6};
				\coordinate[below = #5mm of #2.west] (d2a#1#2);
				\coordinate[below = #5mm of #2.east] (d2b#1#2);
				\draw[->, dashed, line width=0.5mm] (d2a#1#2) -- (d2b#1#2);
				\node[font=\normalsize, right=1mm of d2b#1#2.west] {#3};
			}
			\newcommand\drawsequenceforwardfile[6] {
				\coordinate[below = #5mm of #2.west] (d2a#1#2);
				\coordinate[below = #5mm of #2.east] (d2b#1#2);
				\draw[->, dashed, line width=0.5mm] (d2a#1#2) -- (d2b#1#2);
				\node[font=\normalsize, right=1mm of d2b#1#2.west] {#3};
			}

			\drawsequence{f5}{f1}{17}{6}{12}{15};
			\drawsequenceforwardfile{f1}{f2}{19}{-1}{18}{-1};
			\drawsequenceforwardfile{f2}{f3}{27}{-1}{24}{-1};
			\drawsequenceforwardfile{f3}{f4}{29}{-1}{32}{-1};

			\coordinate[below = 5mm of f5.east, xshift=+12mm] (p1);
			\coordinate[below = 40mm of f5.east, xshift=+12mm] (p2);
			\draw[->] (p1) -- (p2) node[midway,above,rotate=270] {Reading sequence};

		\end{tikzpicture}
	}
	\caption{Example of tape movement for the sequence $\ordertuple' = (1,2,3,4,5)$. 
	}
	\label{fig:exampleECScheduleB}
\end{figure}

\section{Tables and Complementary Numerical Results}
\label{sec:appce}

\subsection{Synthethic Instances}\label{sec:tables_filesystem}

Tables~\ref{tab:fifo}, \ref{tab:fififi}, \ref{tab:fifila}, \ref{tab:lafila}, and~\ref{tab:exact} 
present the results of~\fifo{}, \fififi{}, \fifila{}, \lafila{}, and~\exact{}, respectively, for synthethic instances.  An entry corresponds to the average response time (in seconds, truncated to two decimal places) and the standard deviation (truncated to one decimal place) based on each configuration $(p,\sigma)$, where~$p$ is the probability that a file in the tape is requested and~$\sigma$ is the parameter of the log-normal distribution for file sizes (with $\mu = 13.04$ for all cases). Each value is the mean of the 10 instances for the associated configuration. Due to memory and time limitations, we conducted experiments with~\exact{} using only small instances (with up to 400 files), so  Table~\ref{tab:exact} has fewer columns than the others.

\begin{landscape}
\begin{table}[]
\centering
\caption{Average response times - \fifo{} (in seconds).}
\label{tab:fifo}
\tiny
\resizebox{\columnwidth}{!}{%
\begin{tabular}{ll|rrrr|rrrrrrrrr}
\toprule
    \multicolumn{ 2}{c|}{\textbf{Parameters}} &               \multicolumn{ 4}{c|}{{\bf Small Instances }}  &
    \multicolumn{ 9}{c}{{\bf Large Instances }}
    \\
    \hline
$p (\%)$  & $\sigma$ &  100    &  200    &  300    &  400    &  800    &  1,600   &  3,200   &  6,400   &  12,800  &  25,600  &  51,200  &  76,800 &  102,400 \\
\midrule
25 &  1.50 &  0.4 (0.0) & 0.84 (0.1) & 1.29 (0.1) & 1.74 (0.1) &  3.57 (0.2) &  7.13 (0.3) & 14.69 (0.3) &  28.76 (0.8) &  57.98 (0.9) &  116.48 (1.3) &   348.36 (2.9) &   234.08 (3.3) &   465.83 (6.1) \\
     25 &  2.00 & 0.64 (0.1) &  1.3 (0.2) &  1.9 (0.1) &  2.6 (0.3) &  5.68 (0.2) & 11.04 (0.2) & 22.69 (0.7) &  44.57 (1.6) &  90.25 (1.2) &  180.28 (3.2) &    541.9 (6.8) &    361.5 (4.0) &   725.54 (6.0) \\
     25 &  2.38 & 0.87 (0.2) & 1.85 (0.3) & 3.03 (0.3) & 3.94 (0.4) &  7.92 (0.4) & 16.25 (0.3) & 32.59 (0.8) &  64.99 (0.9) & 132.47 (2.6) &  266.76 (4.1) &   787.18 (7.5) &   525.47 (7.7) & 1050.55 (19.8) \\
     25 &  2.50 &  1.0 (0.2) & 2.26 (0.3) & 3.25 (0.3) & 4.38 (0.4) &  9.18 (0.8) &  18.5 (1.1) & 36.96 (1.7) &  73.64 (1.6) & 148.52 (2.2) &  296.44 (3.3) &  886.47 (10.5) &   594.51 (7.0) &  1191.59 (9.4) \\
     25 &  3.00 & 1.69 (0.2) &  3.7 (0.5) & 5.42 (0.5) & 7.38 (0.5) & 15.47 (0.7) & 30.81 (1.8) & 63.74 (2.5) & 125.09 (3.1) & 251.81 (6.1) & 505.91 (10.0) & 1522.17 (22.2) & 1016.58 (16.2) & 2040.57 (23.3) \\
     50 &  1.50 & 0.43 (0.0) & 0.86 (0.1) &  1.3 (0.1) & 1.83 (0.1) &  3.66 (0.2) &  7.35 (0.3) & 14.72 (0.4) &  29.01 (0.4) &  58.36 (1.0) &  116.16 (2.1) &   349.69 (3.3) &   232.06 (1.7) &   463.81 (6.1) \\
     50 &  2.00 & 0.66 (0.1) & 1.39 (0.1) & 2.11 (0.3) &  2.8 (0.2) &  5.56 (0.3) & 11.25 (0.4) & 22.59 (0.5) &  45.16 (0.8) &  90.28 (1.5) &   180.5 (2.7) &   541.76 (6.8) &   363.03 (4.5) &   728.73 (5.7) \\
     50 &  2.38 & 0.93 (0.2) & 2.02 (0.2) & 2.95 (0.3) & 3.97 (0.4) &  8.01 (0.5) & 15.58 (0.7) & 32.58 (0.5) &  66.04 (0.6) & 132.32 (2.2) &  261.59 (6.0) &   796.13 (6.5) &   529.41 (4.7) &  1060.76 (7.4) \\
     50 &  2.50 & 1.11 (0.1) & 2.11 (0.2) & 3.57 (0.1) & 4.55 (0.3) &  9.12 (0.4) & 18.54 (0.7) &  36.9 (1.4) &  74.36 (1.4) & 149.45 (2.2) &  297.37 (4.8) &  892.11 (10.1) &    592.5 (8.2) &  1179.89 (9.6) \\
     50 &  3.00 & 1.77 (0.3) & 3.77 (0.5) & 5.33 (0.5) & 7.54 (0.7) & 15.48 (1.0) &  30.5 (1.6) & 62.69 (2.8) & 125.52 (3.6) & 253.55 (6.6) &  508.52 (7.4) & 1526.99 (13.7) & 1007.36 (13.6) & 2031.28 (17.9) \\
     75 &  1.50 & 0.43 (0.1) & 0.91 (0.0) & 1.37 (0.1) & 1.79 (0.1) &  3.57 (0.1) &  7.32 (0.3) & 14.23 (0.4) &  29.27 (0.4) &   58.3 (0.7) &   117.1 (1.7) &   348.38 (3.7) &   232.02 (3.0) &   468.05 (4.5) \\
     75 &  2.00 & 0.67 (0.1) & 1.33 (0.1) & 2.08 (0.2) & 2.84 (0.2) &  5.65 (0.3) & 11.17 (0.4) & 22.47 (0.4) &  44.85 (0.6) &  90.49 (1.5) &  180.61 (1.8) &   542.67 (7.4) &   362.46 (5.1) &   725.88 (4.4) \\
     75 &  2.38 & 1.02 (0.2) & 2.06 (0.3) & 2.95 (0.2) & 4.07 (0.3) &  8.01 (0.4) & 16.11 (0.3) & 32.53 (1.2) &  65.49 (1.5) & 131.51 (1.8) &  262.44 (4.0) &  785.32 (13.3) &   525.57 (3.8) & 1051.15 (11.1) \\
     75 &  2.50 & 1.13 (0.2) & 2.28 (0.3) & 3.29 (0.3) & 4.66 (0.5) &  9.57 (0.6) & 18.78 (0.6) & 37.19 (1.2) &  74.57 (1.3) & 149.44 (1.5) &  300.69 (2.9) &  891.19 (10.6) &   593.17 (8.5) & 1190.18 (14.3) \\
     75 &  3.00 &  1.9 (0.3) & 4.11 (0.4) & 5.73 (0.6) & 8.14 (0.8) & 15.24 (1.0) & 31.12 (0.7) & 62.38 (2.3) & 126.82 (3.5) & 253.87 (4.2) &  504.67 (4.8) & 1526.21 (17.8) &  1014.08 (7.8) & 2026.82 (17.8) \\
    100 &  1.50 & 0.44 (0.1) &  0.9 (0.1) & 1.36 (0.1) & 1.76 (0.1) &  3.62 (0.2) &  7.41 (0.2) & 14.37 (0.4) &  29.32 (0.5) &  57.95 (0.8) &  116.42 (1.1) &   349.04 (3.3) &   235.01 (1.3) &   465.53 (4.5) \\
    100 &  2.00 & 0.68 (0.1) & 1.41 (0.1) & 2.09 (0.2) & 2.84 (0.2) &  5.63 (0.3) & 11.38 (0.4) & 22.17 (0.4) &  45.46 (0.7) &  90.17 (1.2) &  181.82 (2.2) &   546.38 (7.3) &    361.2 (4.4) &   723.95 (7.0) \\
    100 &  2.38 & 1.03 (0.2) & 1.98 (0.3) & 2.86 (0.3) & 3.97 (0.3) &  8.01 (0.4) & 16.49 (0.8) & 32.56 (1.7) &  65.45 (1.3) & 133.01 (2.7) &  263.42 (4.2) &    792.1 (7.0) &   527.21 (8.3) &  1059.02 (8.5) \\
    100 &  2.50 & 1.07 (0.2) & 2.18 (0.2) & 3.42 (0.3) & 4.61 (0.3) &  8.96 (0.5) & 18.89 (0.7) & 37.09 (1.4) &   74.4 (1.4) & 148.98 (3.7) &  296.87 (4.3) &  891.42 (11.7) &   596.85 (9.2) &  1197.52 (9.7) \\
    100 &  3.00 & 1.88 (0.4) & 4.06 (0.6) & 5.75 (0.3) & 7.63 (0.8) & 15.88 (0.7) & 32.56 (1.3) & 63.17 (2.7) & 126.99 (3.7) & 256.24 (6.0) &  505.43 (7.5) &  1522.9 (17.4) & 1015.09 (13.3) & 2022.33 (16.2) \\
\bottomrule
\end{tabular}
}
\end{table}

\begin{table}[ht]
\centering
\caption{Average response time - \fififi{} (in seconds).}
\label{tab:fififi}
\tiny
\resizebox{\columnwidth}{!}{%
\begin{tabular}{ll|rrrr|rrrrrrrrr}
\toprule
    \multicolumn{ 2}{c|}{\textbf{Parameters}} &               \multicolumn{ 4}{c|}{{\bf Small Instances }}  &
    \multicolumn{ 9}{c}{{\bf Large Instances }}
    \\
    \hline
$p (\%)$  & $\sigma$ &  100    &  200    &  300    &  400    &  800    &  1,600   &  3,200   &  6,400   &  12,800  &  25,600  &  51,200  &  76,800 &  102,400 \\
\midrule
25 &  1.50 & 0.36 (0.0) & 0.75 (0.1) & 1.12 (0.1) & 1.53 (0.1) &  3.04 (0.1) &  6.04 (0.2) & 12.13 (0.2) &  23.81 (0.5) &   48.2 (0.5) &  95.97 (0.5) &   288.13 (1.3) & 192.76 (1.1) &   386.12 (0.8) \\
     25 &  2.00 &  0.6 (0.1) & 1.15 (0.1) & 1.67 (0.1) & 2.27 (0.2) &  4.79 (0.2) &  9.26 (0.3) & 18.68 (0.4) &  37.42 (0.9) &  74.59 (0.9) & 149.37 (1.6) &   448.52 (3.1) & 298.63 (1.8) &   597.95 (1.9) \\
     25 &  2.38 & 0.81 (0.1) & 1.64 (0.2) & 2.62 (0.3) & 3.31 (0.3) &  6.85 (0.3) & 13.52 (0.3) & 27.16 (0.7) &  53.98 (0.6) & 108.92 (1.5) & 219.17 (2.2) &   651.24 (3.3) & 436.18 (2.9) &   868.26 (4.3) \\
     25 &  2.50 & 0.92 (0.2) & 1.97 (0.3) & 2.89 (0.3) & 3.77 (0.4) &  7.75 (0.7) & 15.52 (1.0) & 30.42 (0.9) &  61.14 (1.2) &  123.6 (1.7) & 245.32 (2.8) &   737.22 (4.9) & 491.74 (3.6) &    982.5 (3.9) \\
     25 &  3.00 &  1.5 (0.2) & 3.21 (0.3) & 4.79 (0.6) &  6.4 (0.5) & 13.11 (0.5) & 25.96 (1.1) & 53.38 (1.8) &  103.9 (2.2) & 209.67 (5.1) & 418.97 (5.1) & 1257.22 (12.2) & 839.29 (7.1) & 1676.04 (11.0) \\
     50 &  1.50 & 0.37 (0.0) & 0.73 (0.0) &  1.1 (0.1) & 1.53 (0.1) &  3.03 (0.1) &  6.12 (0.2) &  12.2 (0.3) &  23.93 (0.4) &  48.17 (0.4) &  96.28 (0.8) &   287.91 (1.1) & 192.27 (1.3) &   384.85 (1.6) \\
     50 &  2.00 & 0.59 (0.1) & 1.18 (0.1) & 1.83 (0.2) & 2.34 (0.2) &  4.61 (0.3) &  9.36 (0.2) & 18.81 (0.4) &  37.22 (0.6) &  74.64 (0.5) & 149.37 (1.3) &   447.71 (2.6) & 300.06 (1.5) &   598.31 (3.4) \\
     50 &  2.38 & 0.81 (0.1) & 1.75 (0.1) & 2.58 (0.3) & 3.39 (0.4) &  6.71 (0.4) & 13.06 (0.6) & 26.96 (0.6) &  54.16 (0.5) & 108.49 (1.3) & 217.13 (2.9) &   653.28 (3.7) & 434.89 (2.5) &   869.02 (4.8) \\
     50 &  2.50 & 0.96 (0.1) & 1.78 (0.2) &  3.0 (0.2) & 3.86 (0.3) &  7.69 (0.5) & 15.33 (0.6) & 30.59 (1.0) &  61.41 (0.9) & 123.11 (1.6) & 245.95 (2.7) &   737.44 (2.9) & 491.24 (2.9) &   979.62 (4.1) \\
     50 &  3.00 & 1.53 (0.3) & 3.24 (0.4) & 4.51 (0.5) & 6.41 (0.6) & 13.11 (0.7) &  25.1 (1.4) & 52.02 (1.8) & 104.25 (2.6) & 209.35 (3.4) & 418.48 (4.0) &  1257.83 (6.0) & 837.24 (6.8) & 1672.09 (11.4) \\
     75 &  1.50 & 0.36 (0.0) & 0.77 (0.0) & 1.15 (0.1) & 1.51 (0.1) &  2.99 (0.1) &  6.03 (0.2) & 11.85 (0.2) &  24.12 (0.3) &  48.19 (0.3) &  96.56 (0.6) &   288.39 (1.1) & 192.26 (0.9) &    384.9 (0.7) \\
     75 &  2.00 & 0.57 (0.1) & 1.16 (0.1) & 1.76 (0.1) & 2.36 (0.1) &  4.71 (0.2) &  9.24 (0.3) & 18.55 (0.3) &  37.19 (0.6) &  74.59 (1.1) & 149.09 (1.5) &   449.01 (2.1) &  298.6 (2.2) &   598.39 (3.0) \\
     75 &  2.38 & 0.84 (0.1) & 1.78 (0.2) & 2.45 (0.2) & 3.38 (0.2) &  6.62 (0.3) & 13.34 (0.5) & 26.91 (1.0) &  54.06 (0.8) &  108.4 (1.0) & 216.27 (1.9) &   650.15 (3.7) & 433.99 (1.9) &   870.44 (3.2) \\
     75 &  2.50 &  1.0 (0.2) & 1.93 (0.2) & 2.78 (0.3) & 3.84 (0.3) &  7.94 (0.5) & 15.72 (0.4) &  30.6 (1.0) &  61.85 (1.4) & 123.26 (1.4) & 246.05 (1.3) &   738.06 (5.5) & 490.27 (3.5) &   980.84 (4.4) \\
     75 &  3.00 & 1.63 (0.3) & 3.47 (0.3) & 4.72 (0.4) & 6.83 (0.7) & 12.75 (0.8) &  26.3 (0.7) & 51.91 (1.7) & 104.39 (2.3) & 208.17 (2.3) & 419.61 (4.3) &  1260.37 (7.4) & 837.47 (7.3) &  1671.08 (9.9) \\
    100 &  1.50 & 0.38 (0.0) & 0.77 (0.1) & 1.14 (0.1) & 1.46 (0.1) &  3.02 (0.1) &  6.11 (0.2) & 11.96 (0.3) &  24.15 (0.3) &  47.96 (0.3) &  96.04 (0.8) &   288.74 (1.3) & 192.82 (1.0) &   384.58 (1.5) \\
    100 &  2.00 & 0.58 (0.1) & 1.16 (0.1) & 1.76 (0.2) &  2.4 (0.2) &  4.65 (0.2) &  9.42 (0.4) & 18.35 (0.4) &  37.55 (0.5) &  75.01 (0.5) & 149.88 (1.1) &   449.87 (4.4) & 298.37 (2.9) &   598.48 (3.5) \\
    100 &  2.38 & 0.86 (0.2) & 1.65 (0.2) & 2.38 (0.2) & 3.37 (0.2) &  6.62 (0.3) & 13.54 (0.6) & 27.15 (1.3) &  53.85 (0.8) & 109.06 (1.7) &  216.9 (2.2) &   652.28 (4.7) & 435.55 (4.1) &   871.17 (3.4) \\
    100 &  2.50 & 0.92 (0.1) & 1.81 (0.2) & 2.83 (0.2) & 3.85 (0.3) &  7.43 (0.4) & 15.74 (0.6) & 30.55 (0.9) &  61.79 (1.3) &  123.4 (2.0) & 244.96 (2.3) &    737.2 (4.1) & 491.43 (4.4) &   983.18 (5.0) \\
    100 &  3.00 & 1.59 (0.3) & 3.44 (0.5) & 4.85 (0.3) & 6.42 (0.6) & 13.15 (0.7) & 26.59 (1.0) & 52.22 (2.3) & 105.02 (2.8) & 211.26 (3.6) & 418.43 (4.6) &  1254.62 (7.6) & 838.35 (7.9) &  1677.34 (9.6) \\
\bottomrule
\end{tabular}
}
\end{table}

\begin{table}[ht]
\centering
\caption{Average response time - \fifila{} (in seconds). }
\label{tab:fifila}
\resizebox{\columnwidth}{!}{%
\tiny
\begin{tabular}{ll|rrrr|rrrrrrrrr}
\toprule
    \multicolumn{ 2}{c|}{\textbf{Parameters}} &               \multicolumn{ 4}{c|}{{\bf Small Instances }}  &
    \multicolumn{ 9}{c}{{\bf Large Instances }}
    \\
    \hline
$p (\%)$  & $\sigma$ &  100    &  200    &  300    &  400    &  800    &  1,600   &  3,200   &  6,400   &  12,800  &  25,600  &  51,200 & 76,800 &  102,400 \\
\midrule
25 &  1.50 & 0.19 (0.04) & 0.36 (0.05) & 0.58 (0.04) & 0.76 (0.04) &  1.53 (0.11) &  2.97 (0.11) &  6.13 (0.13) &  11.91 (0.26) &  23.97 (0.26) &  47.96 (0.46) &  143.72 (0.74) &  96.51 (1.04) &   192.36 (0.47) \\
     25 &  2.00 & 0.32 (0.04) & 0.59 (0.13) & 0.85 (0.07) & 1.14 (0.14) &  2.36 (0.12) &  4.71 (0.22) &  9.26 (0.21) &  18.74 (0.74) &  37.38 (0.75) &  74.45 (1.41) &  225.33 (2.23) &  149.71 (1.6) &   297.96 (1.49) \\
     25 &  2.38 &  0.4 (0.11) & 0.77 (0.14) & 1.32 (0.17) & 1.73 (0.24) &  3.46 (0.25) &  6.77 (0.22) &  13.3 (0.49) &  27.03 (0.39) &  54.92 (1.17) & 109.54 (1.32) &   324.53 (1.6) & 218.31 (2.13) &   434.73 (3.52) \\
     25 &  2.50 & 0.48 (0.15) & 1.02 (0.23) & 1.42 (0.22) & 1.89 (0.23) &   4.0 (0.45) &  7.93 (0.51) &  15.58 (0.9) &  30.62 (0.68) &  61.69 (0.95) & 122.59 (1.46) &  367.07 (3.01) & 244.46 (2.45) &   492.17 (3.24) \\
     25 &  3.00 & 0.77 (0.17) & 1.71 (0.22) &  2.54 (0.4) & 3.38 (0.24) &  6.57 (0.47) & 12.98 (0.64) & 26.63 (1.04) &    51.3 (1.7) & 103.79 (1.34) & 209.25 (4.34) &  629.61 (5.27) & 419.58 (4.78) &   839.61 (6.21) \\
     50 &  1.50 & 0.25 (0.03) & 0.51 (0.04) & 0.75 (0.06) & 1.03 (0.05) &  2.06 (0.11) &   4.0 (0.17) &  8.08 (0.23) &   16.0 (0.25) &  32.08 (0.34) &  64.09 (0.87) &  192.04 (1.04) &  127.99 (0.8) &    256.89 (1.1) \\
     50 &  2.00 & 0.38 (0.08) &  0.8 (0.07) & 1.21 (0.14) &  1.61 (0.1) &  3.15 (0.26) &   6.28 (0.3) & 12.53 (0.15) &   24.93 (0.6) &  49.81 (0.59) &  99.44 (1.29) &  298.97 (1.77) & 199.93 (1.36) &   399.21 (1.13) \\
     50 &  2.38 &  0.51 (0.1) & 1.13 (0.12) & 1.69 (0.18) & 2.31 (0.27) &   4.5 (0.34) &  8.87 (0.29) & 17.94 (0.27) &  36.01 (0.95) &  72.79 (1.65) & 144.15 (2.13) &  436.17 (2.57) & 289.05 (2.86) &     579.4 (4.3) \\
     50 &  2.50 &  0.6 (0.12) & 1.13 (0.23) & 2.07 (0.16) & 2.59 (0.35) &  5.08 (0.28) & 10.18 (0.64) & 20.49 (0.85) &  40.66 (1.33) &  83.22 (0.79) & 164.16 (2.18) &  491.89 (3.01) & 326.64 (2.37) &   653.87 (4.97) \\
     50 &  3.00 & 0.96 (0.24) & 2.23 (0.46) &  3.12 (0.5) & 4.26 (0.57) &  8.63 (0.79) & 17.05 (1.17) & 34.85 (1.66) &  69.75 (1.56) & 138.78 (2.86) & 280.94 (2.73) &  839.18 (8.76) & 556.21 (6.99) &  1115.11 (9.63) \\
     75 &  1.50 & 0.31 (0.04) & 0.62 (0.04) & 0.94 (0.08) &  1.25 (0.1) &  2.49 (0.09) &  5.01 (0.22) &  9.84 (0.23) &  20.04 (0.34) &  40.17 (0.48) &  80.55 (0.64) &  240.57 (1.34) & 159.78 (1.28) &   321.13 (0.99) \\
     75 &  2.00 & 0.44 (0.09) & 0.93 (0.12) & 1.41 (0.14) & 1.98 (0.18) &  3.85 (0.23) &  7.67 (0.24) & 15.53 (0.35) &   30.76 (0.4) &  62.07 (0.85) & 124.46 (1.03) &  373.19 (2.23) &  248.7 (2.17) &   498.42 (2.86) \\
     75 &  2.38 & 0.72 (0.16) & 1.46 (0.15) & 2.07 (0.12) &  2.84 (0.3) &   5.61 (0.3) & 11.11 (0.38) & 22.42 (1.03) &  45.42 (1.34) &  90.99 (1.44) & 179.84 (1.95) &  541.38 (4.44) &  362.05 (2.9) &   724.47 (2.44) \\
     75 &  2.50 &  0.8 (0.14) & 1.61 (0.23) & 2.41 (0.26) & 3.32 (0.27) &  6.66 (0.41) & 12.91 (0.64) & 25.72 (1.02) &  51.26 (1.12) &  102.55 (1.9) & 206.43 (2.51) &  613.65 (4.91) & 409.13 (2.26) &   817.27 (4.48) \\
     75 &  3.00 & 1.35 (0.31) &  2.9 (0.45) & 4.13 (0.52) & 5.83 (0.44) & 10.54 (1.05) & 21.81 (0.72) & 43.53 (1.87) &  86.79 (2.92) & 174.76 (2.68) & 351.24 (4.49) & 1052.72 (7.89) & 698.67 (7.62) &    1394.4 (9.3) \\
    100 &  1.50 & 0.38 (0.05) & 0.73 (0.07) & 1.15 (0.06) & 1.47 (0.08) &  3.02 (0.15) &  6.13 (0.19) & 12.01 (0.42) &   24.11 (0.4) &  47.96 (0.42) &  95.88 (0.78) &  288.88 (1.04) & 192.52 (1.07) &   384.64 (1.61) \\
    100 &  2.00 & 0.57 (0.12) &   1.2 (0.1) & 1.75 (0.18) & 2.37 (0.16) &  4.68 (0.33) &   9.4 (0.51) & 18.42 (0.42) &   37.8 (0.55) &   75.0 (0.93) &  150.0 (1.39) &  450.28 (3.68) & 298.01 (3.35) &    597.9 (3.62) \\
    100 &  2.38 & 0.86 (0.22) & 1.73 (0.24) &  2.5 (0.37) &  3.3 (0.17) &  6.76 (0.36) & 13.45 (0.56) & 27.06 (1.52) &  54.15 (1.35) & 109.29 (1.71) & 216.55 (2.49) &  653.67 (4.54) & 434.72 (4.71) &    870.3 (4.72) \\
    100 &  2.50 & 0.86 (0.11) & 1.87 (0.19) & 2.87 (0.24) & 3.85 (0.35) &  7.52 (0.54) & 15.48 (0.63) & 30.45 (1.28) &  61.68 (0.99) & 123.17 (2.65) & 246.07 (2.51) &   736.54 (4.5) & 492.22 (5.41) &   983.42 (6.66) \\
    100 &  3.00 & 1.64 (0.35) & 3.27 (0.46) & 4.69 (0.33) & 6.47 (0.73) &   13.2 (0.8) & 27.17 (1.24) & 52.22 (2.32) & 104.46 (2.81) &   212.1 (3.3) & 418.58 (4.87) & 1254.78 (7.94) & 841.76 (8.21) & 1676.29 (12.15) \\
\bottomrule
\end{tabular}
}
\end{table}

\begin{table}[ht]
\centering
\caption{Average response time - \lafila{} (in seconds).}
\label{tab:lafila}
\tiny
\resizebox{\columnwidth}{!}{%
\begin{tabular}{ll|rrrr|rrrrrrrrr}
\toprule
    \multicolumn{ 2}{c|}{\textbf{Parameters}} &               \multicolumn{ 4}{c|}{{\bf Small Instances }}  &
    \multicolumn{ 9}{c}{{\bf Large Instances }}
    \\
    \hline
$p (\%)$  & $\sigma$ &  100    &  200    &  300    &  400    &  800    &  1,600   &  3,200   &  6,400   &  12,800  &  25,600  &  51,200  & 76.800 &  102,400 \\
\midrule
25 &  1.50 & 0.19 (0.0) & 0.36 (0.1) & 0.58 (0.0) & 0.76 (0.0) & 1.53 (0.1) &  2.97 (0.1) &  6.13 (0.1) & 11.91 (0.3) &  23.97 (0.3) &  47.96 (0.5) & 143.72 (0.7) &  96.51 (1.0) & 192.36 (0.5) \\
     25 &  2.00 & 0.31 (0.0) & 0.59 (0.1) & 0.84 (0.1) & 1.14 (0.1) & 2.36 (0.1) &   4.7 (0.2) &  9.26 (0.2) & 18.74 (0.7) &  37.38 (0.7) &  74.44 (1.4) & 225.33 (2.2) & 149.71 (1.6) & 297.96 (1.5) \\
     25 &  2.38 & 0.39 (0.1) & 0.75 (0.1) & 1.29 (0.2) & 1.69 (0.2) & 3.42 (0.2) &   6.7 (0.2) & 13.19 (0.5) & 26.69 (0.3) &  54.47 (1.2) & 108.71 (1.4) & 321.42 (2.9) & 216.88 (2.5) & 431.17 (4.4) \\
     25 &  2.50 & 0.47 (0.1) & 0.98 (0.2) & 1.38 (0.2) & 1.85 (0.2) & 3.92 (0.5) &  7.75 (0.5) & 15.24 (0.9) & 30.02 (0.7) &  60.84 (1.0) & 120.55 (1.7) & 361.68 (4.7) & 240.18 (2.4) & 483.73 (3.8) \\
     25 &  3.00 & 0.73 (0.2) & 1.64 (0.2) & 2.39 (0.3) & 3.22 (0.2) & 6.27 (0.4) & 12.43 (0.7) & 25.47 (1.0) &  49.2 (1.6) &  99.73 (1.5) & 200.34 (4.1) & 603.77 (5.7) & 402.52 (4.2) & 804.97 (5.8) \\
     50 &  1.50 & 0.23 (0.0) & 0.46 (0.0) & 0.69 (0.1) & 0.96 (0.0) & 1.92 (0.1) &  3.75 (0.2) &  7.53 (0.2) & 14.87 (0.2) &  29.89 (0.3) &  59.68 (0.8) & 178.82 (0.8) & 119.24 (0.8) & 239.18 (1.1) \\
     50 &  2.00 & 0.33 (0.1) &  0.7 (0.1) & 1.06 (0.1) & 1.39 (0.1) & 2.71 (0.2) &  5.49 (0.2) & 10.94 (0.2) & 21.72 (0.5) &  43.46 (0.5) &  86.88 (1.0) &  260.8 (1.5) & 174.77 (1.1) & 348.92 (1.2) \\
     50 &  2.38 & 0.44 (0.1) & 0.96 (0.1) & 1.43 (0.2) & 1.93 (0.2) & 3.75 (0.3) &  7.33 (0.3) & 14.98 (0.3) & 30.22 (0.6) &  60.88 (1.3) &  120.8 (1.9) & 365.11 (2.3) & 242.34 (2.1) & 485.11 (3.3) \\
     50 &  2.50 & 0.51 (0.1) & 0.94 (0.2) &  1.7 (0.1) & 2.13 (0.2) & 4.21 (0.3) &  8.44 (0.5) & 16.94 (0.8) & 33.74 (0.8) &  68.49 (0.9) & 135.73 (1.9) & 407.12 (2.0) & 270.52 (1.8) & 540.64 (3.3) \\
     50 &  3.00 & 0.77 (0.2) & 1.76 (0.3) & 2.41 (0.4) & 3.35 (0.4) & 6.85 (0.5) & 13.38 (0.9) & 27.49 (1.3) & 54.91 (1.5) & 110.17 (2.1) & 222.03 (2.2) & 664.82 (6.4) & 441.32 (4.7) & 883.26 (8.0) \\
     75 &  1.50 & 0.25 (0.0) & 0.52 (0.0) & 0.78 (0.1) & 1.04 (0.1) & 2.06 (0.1) &  4.15 (0.2) &  8.16 (0.2) &  16.6 (0.2) &  33.28 (0.4) &  66.72 (0.5) & 199.25 (0.9) & 132.54 (0.8) & 265.77 (0.7) \\
     75 &  2.00 & 0.35 (0.1) & 0.71 (0.1) & 1.09 (0.1) & 1.49 (0.1) & 2.94 (0.1) &  5.85 (0.2) & 11.76 (0.2) & 23.47 (0.4) &  47.24 (0.6) &  94.63 (0.7) & 284.16 (1.7) & 189.19 (1.6) & 379.16 (1.8) \\
     75 &  2.38 & 0.52 (0.1) & 1.04 (0.1) & 1.49 (0.1) & 2.02 (0.2) & 4.03 (0.2) &   8.0 (0.3) & 16.14 (0.7) & 32.65 (0.8) &  65.43 (0.9) &  129.8 (1.3) & 390.17 (2.6) & 260.88 (1.5) & 522.32 (1.8) \\
     75 &  2.50 & 0.58 (0.1) & 1.13 (0.2) & 1.67 (0.2) & 2.31 (0.2) & 4.72 (0.3) &  9.21 (0.4) & 18.22 (0.7) &  36.5 (0.8) &  72.93 (1.3) & 146.15 (1.3) & 436.39 (3.3) & 290.16 (1.8) & 579.93 (2.8) \\
     75 &  3.00 & 0.93 (0.2) & 1.95 (0.2) & 2.73 (0.3) & 3.91 (0.4) & 7.06 (0.6) & 14.64 (0.4) & 29.27 (1.2) & 58.43 (1.7) & 117.26 (1.6) & 235.85 (2.8) & 706.83 (4.7) &  469.5 (5.1) & 936.65 (6.2) \\
    100 &  1.50 & 0.28 (0.0) & 0.55 (0.0) & 0.84 (0.0) & 1.08 (0.1) & 2.25 (0.1) &  4.53 (0.1) &  8.85 (0.3) &  17.9 (0.3) &  35.41 (0.3) &  70.95 (0.5) & 213.53 (0.9) & 142.51 (0.9) & 284.36 (1.1) \\
    100 &  2.00 & 0.39 (0.1) & 0.78 (0.1) & 1.18 (0.1) & 1.59 (0.1) & 3.13 (0.2) &  6.32 (0.3) & 12.32 (0.3) & 25.34 (0.4) &   50.4 (0.5) &  100.8 (0.9) & 302.39 (2.8) & 200.25 (2.3) &  402.0 (2.3) \\
    100 &  2.38 & 0.54 (0.1) & 1.07 (0.2) & 1.55 (0.2) & 2.12 (0.1) & 4.21 (0.2) &  8.58 (0.4) & 17.12 (0.9) &  34.1 (0.7) &  69.08 (1.2) & 137.24 (1.5) &  413.1 (2.9) & 275.28 (3.0) & 550.55 (2.8) \\
    100 &  2.50 & 0.55 (0.1) & 1.15 (0.1) & 1.74 (0.2) & 2.38 (0.2) & 4.65 (0.3) &   9.7 (0.4) & 18.95 (0.7) & 38.35 (0.8) &  76.59 (1.6) & 152.62 (1.6) &  458.0 (2.7) &  305.8 (3.2) & 611.96 (3.6) \\
    100 &  3.00 & 0.95 (0.2) & 1.95 (0.3) & 2.74 (0.2) &  3.8 (0.4) &  7.7 (0.5) & 15.91 (0.7) &  30.4 (1.4) & 61.26 (1.8) & 123.86 (1.9) & 244.44 (3.2) & 733.35 (4.4) & 491.18 (5.1) & 979.49 (7.5) \\
\bottomrule
\end{tabular}
}
\end{table}
\end{landscape}

\begin{table}[ht]
\centering
\caption{Average response times - \exact{} (in seconds).}
\label{tab:exact}
\tiny
\begin{tabular}{ll|rrrr}
\toprule
    \multicolumn{ 2}{c|}{\textbf{Parameters}} &               \multicolumn{ 4}{c}{{\bf Small Instances }}  
    \\
    \hline
$p (\%)$  & $\sigma$ &  100    &  200    &  300    &  400    \\
\midrule
25 &  1.50 & 0.19 (0.0) & 0.36 (0.1) & 0.58 (0.0) & 0.75 (0.0) \\
     25 &  2.00 & 0.31 (0.0) & 0.59 (0.1) & 0.84 (0.1) & 1.13 (0.1) \\
     25 &  2.38 & 0.39 (0.1) & 0.75 (0.1) & 1.29 (0.2) & 1.69 (0.2) \\
     25 &  2.50 & 0.47 (0.1) & 0.98 (0.2) & 1.38 (0.2) & 1.85 (0.2) \\
     25 &  3.00 & 0.72 (0.2) & 1.63 (0.2) & 2.39 (0.3) & 3.21 (0.2) \\
     50 &  1.50 & 0.23 (0.0) & 0.46 (0.0) & 0.69 (0.1) & 0.96 (0.0) \\
     50 &  2.00 & 0.33 (0.1) &  0.7 (0.1) & 1.06 (0.1) & 1.39 (0.1) \\
     50 &  2.38 & 0.44 (0.1) & 0.96 (0.1) & 1.42 (0.2) & 1.93 (0.2) \\
     50 &  2.50 & 0.51 (0.1) & 0.94 (0.2) &  1.7 (0.1) & 2.13 (0.2) \\
     50 &  3.00 & 0.77 (0.2) & 1.76 (0.3) & 2.41 (0.4) & 3.35 (0.4) \\
     75 &  1.50 & 0.25 (0.0) & 0.52 (0.0) & 0.78 (0.1) & 1.04 (0.1) \\
     75 &  2.00 & 0.35 (0.1) & 0.71 (0.1) & 1.09 (0.1) & 1.49 (0.1) \\
     75 &  2.38 & 0.52 (0.1) & 1.04 (0.1) & 1.49 (0.1) & 2.02 (0.2) \\
     75 &  2.50 & 0.57 (0.1) & 1.13 (0.2) & 1.67 (0.2) & 2.31 (0.2) \\
     75 &  3.00 & 0.92 (0.2) & 1.95 (0.2) & 2.73 (0.3) & 3.91 (0.4) \\
    100 &  1.50 & 0.28 (0.0) & 0.55 (0.0) & 0.84 (0.0) & 1.08 (0.1) \\
    100 &  2.00 & 0.39 (0.1) & 0.78 (0.1) & 1.18 (0.1) & 1.59 (0.1) \\
    100 &  2.38 & 0.54 (0.1) & 1.07 (0.2) & 1.55 (0.2) & 2.12 (0.1) \\
    100 &  2.50 & 0.55 (0.1) & 1.15 (0.1) & 1.74 (0.2) & 2.38 (0.2) \\
    100 &  3.00 & 0.94 (0.2) & 1.94 (0.3) & 2.74 (0.2) & 3.79 (0.4) \\
\bottomrule
\end{tabular}
\end{table}

Table~\ref{tab:runtimes} reports the average runtime and respective standard deviation of~\exact{} and~\lafila{} (also in seconds, with two decimal places for the average and one for the standard deviation) based on each  configuration $(p,\sigma)$. We omit the runtime of~\lafila{} for instances with less than 12,800 files as well as the runtime of all the other algorithms because these values are negligibly small. Moreover, the standard deviation for~\lafila{} is always equal to zero when truncated to two decimal places, so we omit this information from Table~\ref{tab:runtimes} as well.
\begin{table}[ht]
\centering
\caption{Average runtime - \lafila{} and~\exact{} (in seconds).}
\label{tab:runtimes}
\tiny
\begin{tabular}{ll|rrrr|rrrrr}
\toprule
\multicolumn{ 2}{c|}{\textbf{Parameters}} & \multicolumn{ 4}{c|}{$\exact{}$}
&
\multicolumn{ 5}{c}{$\lafila{}$}
    \\
    \hline
$p (\%)$  & $\sigma$ & 100    & 200    &  300    &  400    &  12,800  & 25,600  & 51,200  & 76,800  & 102,400 \\
\midrule
25 &  1.50 & 0.028 (0.02) &  0.597 (0.21) &     3.22 (2.37) &   10.438 (3.39) &   0.0 & 0.001 & 0.002 & 0.001 &  0.002 \\
     25 &  2.00 & 0.047 (0.03) &  0.599 (0.31) &    3.343 (1.05) &   11.437 (3.31) &   0.0 & 0.001 & 0.002 & 0.001 &  0.003 \\
     25 &  2.38 & 0.034 (0.02) &  0.508 (0.26) &    3.766 (1.89) &   10.125 (4.21) &   0.0 & 0.001 & 0.002 & 0.001 &  0.003 \\
     25 &  2.50 & 0.032 (0.03) &  0.604 (0.15) &    3.303 (1.91) &      9.57 (4.0) &   0.0 & 0.001 & 0.002 & 0.002 &  0.003 \\
     25 &  3.00 & 0.051 (0.03) &  0.488 (0.15) &    3.212 (1.12) &    10.841 (2.8) &   0.0 & 0.001 & 0.002 & 0.002 &  0.003 \\
     50 &  1.50 & 0.335 (0.11) &  6.968 (1.13) &  36.842 (10.59) &  134.96 (27.24) & 0.001 & 0.001 & 0.004 & 0.002 &  0.005 \\
     50 &  2.00 & 0.342 (0.13) &  5.816 (1.18) &  37.434 (10.55) & 141.646 (28.48) & 0.001 & 0.001 & 0.004 & 0.002 &  0.005 \\
     50 &  2.38 &  0.344 (0.1) &  5.701 (0.96) &   39.338 (5.67) & 124.112 (33.33) & 0.001 & 0.001 & 0.003 & 0.002 &  0.005 \\
     50 &  2.50 &  0.33 (0.12) &  5.228 (0.91) &  40.472 (11.89) & 128.625 (23.31) & 0.001 & 0.001 & 0.003 & 0.002 &  0.005 \\
     50 &  3.00 & 0.325 (0.14) &  5.837 (1.84) &   42.639 (9.65) & 125.374 (17.22) &   0.0 & 0.001 & 0.003 & 0.002 &  0.004 \\
     75 &  1.50 & 1.019 (0.06) & 19.309 (2.58) &  117.204 (9.28) &  419.384 (45.4) & 0.001 & 0.001 & 0.004 & 0.002 &  0.005 \\
     75 &  2.00 & 0.939 (0.17) & 19.322 (2.65) & 119.615 (12.32) &  434.223 (33.8) & 0.001 & 0.001 & 0.004 & 0.002 &  0.005 \\
     75 &  2.38 & 1.054 (0.08) & 20.058 (2.28) & 114.735 (13.43) & 437.175 (35.62) & 0.001 & 0.001 & 0.004 & 0.002 &  0.005 \\
     75 &  2.50 & 1.019 (0.16) &  19.505 (1.3) &   117.44 (7.08) & 423.632 (28.27) & 0.001 & 0.001 & 0.003 & 0.002 &  0.005 \\
     75 &  3.00 & 1.014 (0.16) &  20.896 (2.4) &  117.295 (8.46) &  442.009 (44.0) &   0.0 & 0.001 & 0.003 & 0.002 &  0.004 \\
    100 &  1.50 & 1.646 (0.01) & 32.932 (0.18) &  196.048 (0.63) &  725.787 (5.38) & 0.001 & 0.001 & 0.004 & 0.002 &  0.005 \\
    100 &  2.00 & 1.644 (0.01) & 32.939 (0.16) &  196.853 (1.47) &   724.231 (4.9) &   0.0 & 0.001 & 0.003 & 0.002 &  0.004 \\
    100 &  2.38 & 1.648 (0.01) & 33.173 (0.85) &  195.943 (0.97) &  723.494 (4.57) &   0.0 & 0.001 & 0.003 & 0.002 &  0.004 \\
    100 &  2.50 & 1.646 (0.01) & 33.401 (1.61) &   195.61 (0.58) &   724.318 (7.0) &   0.0 & 0.001 & 0.003 & 0.002 &  0.004 \\
    100 &  3.00 & 1.653 (0.02) & 32.804 (0.06) &  196.289 (0.93) & 726.582 (11.72) &   0.0 & 0.001 & 0.003 & 0.002 &  0.004 \\
\bottomrule
\end{tabular}
\end{table}

Table~\ref{tab:stoch} compares the average quality of the solution produced by  each algorithm, given by the ratio between the solution produced by the algorithm and the optimal solution identified by~$\exact$. \lafila{} delivers solid performance, as observed in the other experiments, but we observe that its advantage of~\texttt{SLTS} gets reduced as the percentage with which files are requested decreases.

\begin{table}[ht]
\centering
\caption{Average approximation ratio of each algorithm on stochastic instances in comparison with~\exact{}.}
\label{tab:stoch}
\tiny
\begin{tabular}{l|lllll}
\toprule
\multicolumn{ 1}{c|}{\textbf{$p (\%)$}} & \multicolumn{ 5}{c}{Algorithm}
    \\
    \hline
 & \fifo{} &    \fififi{} & \fifila{}  &    \lafila{}  &      \texttt{SLTS} \\
\midrule
      20 & 2.35 (0.2) & 2.06 (0.2) & 1.01 (0.0) &  1.0 (0.0) & 1.21 (0.1) \\
      40 & 2.07 (0.1) & 1.77 (0.1) & 1.11 (0.0) &  1.0 (0.0) & 1.07 (0.0) \\
      60 & 1.94 (0.1) & 1.64 (0.1) & 1.26 (0.0) &  1.0 (0.0) & 1.02 (0.0) \\
      80 & 1.87 (0.0) & 1.56 (0.0) & 1.41 (0.0) & 1.01 (0.0) & 1.01 (0.0) \\
     100 & 1.83 (0.0) & 1.51 (0.0) & 1.58 (0.0) & 1.01 (0.0) &  1.0 (0.0) \\
\bottomrule
\end{tabular}
\end{table}

\subsection{Landsat Instances}\label{sec:tables_landsat}

For the real-world tape settings, we consider a case study on the use of remote sensing for precision viticulture in 3 zones of interest: the Atacama desert in Chile, the Serra Gaucha region in Brazil, and the Manduria region in Italy.  
The Atacama region is represented by a cluster of 4 Landsat-8 tiles, and it is known for its dry weather and a mostly cloud-free sky all year round, which translates into crisp satellite imagery. 
The software accessing these tiles computes the Normalized Difference Vegetation Index (NDVI) of the region by reading Landsat bands number 4 and 5, which represent the reflective radiation in visible Red and Near-Infrared wavelengths, respectively. Differently from the Atacama region, Serra Gaucha and Manduria are frequently covered with clouds that, at times, invalidate whole satellite scenes \citep{CloudyEarth}. The Serra Gaucha region is covered by  6 Landsat-8 tiles that, once processed, are converted into false-color images that emphasize vegetation; the image composition procedure used in this application requires the use of bands number 4, 5, and 6. Last, Manduria is composed of five tiles. The associated application uses a  neural network that processes the assembled data to classify the vineyards' health conditions.

Each aforementioned region is analysed by a different software, but in all cases read requests are contingent on the analysis of
the QA band. 
If an image has a cloud cover above a certain threshold, then no bands other than the QA are read; the missing data is then interpolated or extrapolated at a later time using data from past and future scenes. This aspect is important, as cloud cover thresholds have a direct impact on read requests. Our experiments involve only the second stage of this pipeline, i.e., we assume that the QA bands have been inspected already and that all read requests involve spectrum bands of images that satisfy the cloud cover conditions.



The results of our experiments involving the Landsat instances are presented in Table~\ref{tlandsat}. Each instance  is associated with a configuration~$(\alpha,\beta,\gamma)$;
numbers $\alpha$, $\beta$, and~$\gamma$ belong to the interval~$[0,100]$ and indicate the threshold for cloud covers adopted by the applications using data associated with the Atacama, Serra Gaucha, and Manduria region, respectively. The data set has 10 instances for each of the following configurations of~$(\alpha,\beta,\gamma)$: 
(2,15,22),
(17,15,32),
(7,21,22),
(9,18,38),
(17,21,32),
(9,15,22),
(2,26,38),
(17,21,22),
(2,21,22), and
(17,26,38). The first column in the table indicates the configuration~$\alpha-\beta-\gamma$, and each entry reports the average response times (in seconds) and respective standard deviations obtained by each algorithm over the 10 instances using  two decimal places for the average and one for the standard deviation).


\begin{table}[ht!]
\centering
\scriptsize
\caption{Average response times for Landsat instances (in seconds).}
\label{tlandsat}
\begin{tabular}{c|cccccc}
\toprule
\multicolumn{1}{c|}{Instance} & \multicolumn{5}{c}{Algorithm} \\
$\alpha-\beta-\gamma$       &   \fifo{} & \fififi{} & \fifila{} & \lafila{} & \exact{} \\
\midrule
02-15-22 & 237.46 (16.2) & 208.07 (13.8) & 127.19 (21.9) & 127.19 (21.9) & 127.19 (21.9) \\
      16-15-32 & 221.34 (12.9) & 211.54 (11.0) & 103.07 (16.6) & 103.07 (16.6) & 103.07 (16.6) \\
      07-21-22 & 221.82 (18.6) & 204.93 (15.1) &   115.7 (8.8) &   115.7 (8.8) &   115.7 (8.8) \\
      09-18-38 & 235.87 (20.7) & 204.64 (14.7) & 115.85 (14.9) & 115.85 (14.9) & 115.84 (14.9) \\
      17-21-32 &  228.6 (23.8) & 201.98 (16.7) &  119.2 (16.0) &  119.2 (16.0) &  119.2 (16.0) \\
      09-15-22 & 226.28 (16.7) &  201.13 (9.1) &  131.4 (21.4) &  131.4 (21.4) &  131.4 (21.4) \\
      02-26-38 & 224.83 (23.4) & 200.44 (11.2) & 108.49 (18.4) & 108.49 (18.4) & 108.49 (18.4) \\
      17-21-22 & 235.41 (13.3) & 209.43 (11.7) & 116.59 (18.1) & 116.59 (18.1) & 116.59 (18.1) \\
      02-21-22 & 231.84 (15.2) & 198.67 (10.2) & 122.83 (13.9) & 122.83 (13.9) & 122.83 (13.9) \\
      17-26-38 & 227.12 (23.0) & 198.34 (15.6) & 118.09 (14.0) & 118.09 (14.0) & 118.08 (14.0) \\
\bottomrule
\end{tabular}
\end{table}

\section{Proofs}

For ease of reference, we organize the proofs following the structure of the manuscript.

\subsection{The \ref{model:LTS} with Deterministic Service Requests}

We use the following auxiliary lemma for the results below.
\begin{lemma}
	\label{lemma:onlyatomic}
	Suppose all files have the same size, i.e., $\filesize{\file} = \filesize{}$ for all $\file \in \fileset$. Given an arbitrary set of requested files, there exists an optimal solution $\optordertuple$ such that $\order^*_{t} > \order^*_{t+1}$ for any rewind-stage file $\order^*_{t} \in \rewindphase_{\optordertuple}$, $t < \numfiles$. That is, files in the rewind stage are read in descending order of their indices.
\end{lemma}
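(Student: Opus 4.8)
\emph{Proof plan.} The plan is to exhibit a single optimal sequence with the desired structure, obtained not through local exchanges but by comparing against explicit lower bounds. I would fix an arbitrary optimal sequence $\ordertuple^{\mathrm{opt}}$, let $\mathcal{K}$ denote the set of \emph{requested} files it reads in the rewind stage, and put $k \equiv |\mathcal{K}|$ (note $1 \notin \mathcal{K}$, since file~$1$ is always read in the forward stage). I then define $\optordertuple$ to read the files of $\mathcal{K}$ first in \emph{descending} order of index, then the remaining requested files in ascending order (a single left-to-right pass), and finally all non-requested files last, so that the latter contribute nothing to the objective. The goal is to show $\pval(\optordertuple) \le \pval(\ordertuple^{\mathrm{opt}})$; since $\optordertuple$ reads its rewind-stage files in descending order, and its last rewind file has index $\ge 2 > 1 = \order^*_{k+1}$, this proves the lemma.

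The heart of the argument is a ``budget'' bound that uses $\filesize{\file} = \filesize{}$ for all $\file$: each file occupies the unit-length interval $[\leftfile{\file}, \rightfile{\file}] = [(\file-1)\filesize{}, \file\filesize{}]$ and these intervals are pairwise disjoint. In any sequence, let $\sigma_1, \sigma_2, \dots$ be the requested rewind-stage files in the order read. When the head first reaches $\leftfile{\sigma_i}$ (still during the rewind stage, so bit~$0$ has not yet been reached), it has (i) accrued net leftward displacement $(\numfiles - \sigma_i + 1)\filesize{}$ from its start at $\tapelength = \numfiles\filesize{}$, and (ii) traversed rightward each of the $i-1$ pairwise-disjoint intervals $[\leftfile{\sigma_{i'}}, \rightfile{\sigma_{i'}}]$ with $i' < i$, hence at least $(i-1)\filesize{}$ of rightward motion and therefore at least that much \emph{additional} leftward motion. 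Adding these contributions, the response time of $\sigma_i$ is at least $(\numfiles - \sigma_i + 2i - 1)\filesize{}$; summing over $i$ yields $\sum_i (\text{response time of } \sigma_i) \ge \bigl(\numfiles k - \sum_{\file \in \mathcal{K}} \file + k^2\bigr)\filesize{}$, a quantity \emph{independent of the reading order}. The same accounting shows the distance $D$ until bit~$0$ is first reached satisfies $D \ge (\numfiles + 2k)\filesize{}$, and thereafter any forward-stage file~$f$ has response time at least $D + (f-1)\filesize{}$.

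Next I would verify that every one of these inequalities is tight for $\optordertuple$: reading $\mathcal{K}$ in descending order produces a staircase in which every rightward move is a mandatory file read and no leftward move is wasted, so $D = (\numfiles + 2k)\filesize{}$ and the $i$-th rewind read has response time exactly $(\numfiles - \sigma_i + 2i - 1)\filesize{}$; and the ensuing single left-to-right pass reaches each forward file $f$ after exactly $D + (f-1)\filesize{}$ of travel. Hence $\pval(\optordertuple)$ equals the sum over requested files of the lower bounds just derived, which is at most $\pval(\ordertuple^{\mathrm{opt}})$ because $\ordertuple^{\mathrm{opt}}$ uses the same set $\mathcal{K}$ and the same forward-requested files, and any non-requested files it reads in the rewind stage only lengthen its path and hence do not invalidate the bounds. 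Therefore $\optordertuple$ is optimal with the claimed descending rewind order.

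The step I expect to require the most care is pinning down the budget accounting so that the descending staircase is \emph{exactly} tight---specifically, arguing that the only forced rightward motion before the $i$-th rewind read is the $(i-1)\filesize{}$ coming from the already-read disjoint unit intervals, with no hidden additional constraint. The handling of non-requested files, the degenerate case $k = 0$, and the placement of file~$1$ are routine once that accounting is in place.
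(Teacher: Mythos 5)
Your proposal is correct, but it proves the lemma by a genuinely different route than the paper. The paper's proof is an iterative exchange argument: starting from an optimal sequence containing rewind-stage files read in ascending order, it relocates the rightmost such file so that it is read the first time its left endpoint is traversed, shows via a $2\filesize{}(k-1)$ accounting that the total response time cannot increase, and repeats until the rewind stage is descending. You instead fix only the \emph{set} $\mathcal{K}$ of requested rewind-stage files of an optimal sequence, build the descending-staircase-plus-single-forward-pass sequence explicitly, and certify its optimality by order-independent lower bounds: net leftward displacement plus twice the rightward traversal of the $i-1$ pairwise-disjoint equal-size intervals already read gives response time at least $(\numfiles-\sigma_i+2i-1)\filesize{}$ for the $i$-th rewind read, at least $(\numfiles+2k)\filesize{}$ for first reaching bit $0$, and at least $(\numfiles+2k+f-1)\filesize{}$ for each requested forward file $f$; your construction meets all of these with equality (non-requested files deferred to the end contribute nothing), so it is optimal and satisfies the descending condition, including at the junction with file $1$. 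The bookkeeping checks out, and the equal-size hypothesis is used exactly where needed (disjoint unit intervals and exact tightness of the staircase). What each approach buys: the paper's exchange argument is local, short, and reuses the same template as its proofs of Proposition \ref{prop:basepolicies}(b) and Proposition \ref{prop:fifilaPerformance}; your certificate argument avoids the iterate-and-terminate step entirely and yields something slightly stronger, namely a closed-form optimal value for each fixed rewind set $\mathcal{K}$ (with the forward pass automatically in ascending order), at the cost of a more delicate one-shot accounting of distances.
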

\begin{proof}{Proof of Lemma \ref{lemma:onlyatomic}.}
	Let $\ordertuple$ be a solution to the \ref{model:LTS} with a (non-necessarily contiguous) subsequence $(\order_{t},\ldots,\order_{t+k})$, $k \ge 1$, of rewind-stage files such that $\order_{t} < \ldots < \order_{t+k}$, i.e., they are read in ascending order in $\ordertuple$. We can assume $\fileweight{\order_{t}} = \fileweight{\order_{t+k}} = 1$; otherwise, both files could be delayed to the forward stage without impacting the objective of the~\ref{model:LTS}. Since the tape head starts at position $\tapelength$, the file $\order_{t+k}$ is eventually traversed from right to left at some moment prior to reading $\order_{t}$. Create a new sequence $\ordertuple'$ where $\order_{t+k}$ is read immediately after the last-serviced file $\order_{t'}$ prior to reaching $\leftfile{\order_{t+k}}$ for the first time, where $t' < t$; observe that such a file must exist, as we would have $\order_1 = \order_t$ otherwise, thus contradicting the assumption that~$\order_t$ is a rewind-stage file. 
	
	The response time of each of the $k-1$ files $(\order_{t},\ldots,\order_{t+k-1})$ in~$\ordertuple'$ increases by $2\filesize{}$ with respect to~$\ordertuple$. Thus, 
	the increase in the response time for reading these $k-1$ files is at most $2\filesize{}(k-1)$ in $\ordertuple'$, or potentially less if any of such files is not requested. Conversely, the response time of $\order_{t+k}$ in $\ordertuple'$ decreases by at least $2\filesize{}(k-1)$, or more if the subsequence $(\order_{t},\ldots,\order_{t+k})$ is not contiguous. Since the response times of the files read after before~$\order_{t}$ or after~$\order_{t+k}$ do not change, the objective value of $\ordertuple'$ is smaller than or equal to $\ordertuple$. The result follows from  the iterative application of this argument on  any optimal solution~$\optordertuple$ to the \ref{model:LTS}. \hfill $\blacksquare$
\end{proof}

\bigskip

\begin{proof}{Proof of Proposition \ref{prop:allforward}.} 
	Let $\optordertuple$ be an optimal solution to the \ref{model:LTS} that violates the condition of the statement, i.e., there exist indices $t < t'$ such that $\order^*_{t}, \order^*_{t'} \in \forwardphase_{\optordertuple}$ and $\order^*_{t} > \order^*_{t'}$. By definition, $1 \in \forwardphase_{\optordertuple}$, which implies that the tape head must traverse $\order^*_{t'}$ from left to right  prior to reading $\order^*_{t}$. Suppose that such a traversal happens first between two forward-stage files $\order^*_{\hat{t}}$ and $\order^*_{\hat{t}+1}$ that are read consecutively in $\optordertuple$. Notice that we must have $\hat{t}, \hat{t}+1 \le t$ and $\order^*_{\hat{t}} < \order^*_{\hat{t}+1}$, as~$\order^*_{t'}$ is traversed from left to right. If we replace the contiguous subsequence $\dots,\order^*_{\hat{t}},\order^*_{\hat{t}+1},\dots$ in $\optordertuple$ by 
	$$
	\dots,\order^*_{\hat{t}},\order^*_{t'},\order^*_{\hat{t}+1},\dots
	$$
	that is, we read $\order^*_{t'}$ the first time it is traversed from left to right, the response time of $\order^*_{t'}$ decreases if $\fileweight{\order^*_{t'}} = 1$. Moreover, the response times of the files that precede $\order^*_{t'}$ in $\optordertuple$ do not change, while the response times for the files that succeed $\order^*_{t'}$ in $\optordertuple$ can only decrease since there is no need to rewind the tape to $\leftfile{\order^*_{t'}}$. Thus, the total response time either remains the same or decreases. \hfill $\blacksquare$ 
\end{proof}

\bigskip 

\begin{proof}{Proof of Proposition~\ref{prop:basepolicies}.}  We divide the proof in two parts.

\begin{enumerate}
    \item [(a)] Consider the following family of instances:
	\begin{itemize}
		\item File~$1$:  $\fileweight{1} = 1$ and $\filesize{1} = 1-\epsilon$ for some arbitrarily small~$\epsilon > 0$;
		\item File~$2$:  $\fileweight{2} = 0$ and $\filesize{2} = \numfiles^2$; and
		\item Files~$3,\ldots,\numfiles$:   $\fileweight{\file} = 1$ and $\filesize{\file} = 1$ for $\file= 3,\dots,\numfiles$.
	\end{itemize}
	
	The sum of the response times of the sequence~$\ordertuple$ delivered by~\fififi{} and~\ssf{} is in
$\mathcal{O}\left(\numfiles^3\right)$, 
	since files $3, 4, \dots, \numfiles$ are read after the tape head traverses the second file twice, an operation that consumes 
	$\mathcal{O}\left(\numfiles^2\right)$. Conversely, the sum of the response times for the sequence~$\ordertuple' = (3,4,\ldots,\numfiles,1,2)$ is in 
	$\mathcal{O}\left(\numfiles^2\right)$. Finally, \fifo{} delivers the same result if the system submits the request using the same order as~$\ordertuple$.
	
	\item [(b)] Let us consider an instance in which all files are of the same size and are requested, i.e.,	$\numfiles = \numrequests$ and $\filesize{\file} = \filesize{}$ for all $\file \in \fileset$ for some~$\filesize{} \in \mathbb{N}^*$. Let~$\optordertuple$ be an optimal sequence such that $\rewindphase_{\optordertuple} \neq \emptyset$. By Lemma \ref{lemma:onlyatomic} and Proposition \ref{prop:allforward}, we can assume that the forward- and rewind-stage files are ordered, i.e., 
	$\order^*_{t} < \order^*_{t+1}$ for all $\order^*_{t}, \order^*_{t+1} \in \forwardphase_{\optordertuple}$, and $\order^*_{t} > \order^*_{t+1}$ for all $\order^*_{t},\order^*_{t+1} \in \rewindphase_{\optordertuple}$, $t < \numfiles$. Moreover, let $\hat{t} > 1$ be the file with smallest $\leftfile{\order^*_{\hat{t}}}$
	such that $\order^*_{i} \equiv  \order^*_{\hat{t}}-1 \in \forwardphase_{\optordertuple}$ and $\order^*_{\hat{t}} \in \rewindphase_{\optordertuple}$ (i.e., file~$\order^*_{\hat{t}}$ is in the rewind stage, and the file~$\order^*_{i}$ on its left is in the forward stage); the existence of such a file~$\order^*_{\hat{t}}$
	follows from $1 \in \forwardphase_{\optordertuple}$ and the ascending ordering of files in the tape. 

	Create a new ordering $\ordertuple'$ such that $\order^*_{\hat{t}}$ is read immediately after $\order^*_{i}$ in the forward stage in $\ordertuple'$. Since all files have the same size, the response time of $\order^*_{\hat{t}}$ in $\ordertuple'$ increases by 
	$2\filesize{}(\order^*_{i}-1)$ 
	with respect to its response time in $\optordertuple$, as the~$\order^*_{i}-1$   files positioned to the left of $\order^*_{\hat{t}}$ in the tape, all belonging to the forward stage, are now serviced beforehand. However, the response time of the files $ 1, \ldots, \order^*_{i}-1$ reduce each by $2\filesize{}$ in $\ordertuple'$, as the reading of $\order^*_{\hat{t}}$ is delayed. Thus, the objective value of \ref{model:LTS} is not impacted, since all files are requested and the response time of all remaining files do not change in $\ordertuple'$. Thus, $\ordertuple'$ is optimal and also preserves the forward-stage ascending ordering. The optimality of~\fififi{} follows from the iterative application of this argument to any optimal solution. \hfill $\blacksquare$
	\end{enumerate}
\end{proof}

\bigskip

\begin{proof}{Proof of Proposition~\ref{prop:fifilaPerformance}.}
    The response time of every file $\file \in \fileset$, $\fileweight{\file} = 1$, in the sequence $\ordertuple$ generated by \fifila{} is upper-bounded by $3(\tapelength - \leftfile{\file})$, which is achieved if all files positioned on the right of $\file$ are read before the tape head reaches~$\leftfile{\file})$. As the response time of~$\file$ is never smaller than $\tapelength - \leftfile{\file}$, i.e., the distance between the beginning of the tape and $\file$, \fifila{} is a 3-approximation for~\ref{model:LTS}. Finally, observe that the approximation ration of~\fifila{} is asymptotically tight, as one can observe by inspecting the performance of the algorithm on instances with a single large file positioned at the last position and small files positioned on its left. 
    
    Next, we show the optimality of~\fifila{} for instances where all files are of the same size. Let $\optordertuple$ be an optimal sequence such that there exists at least one file $1 \neq \file \in \forwardphase_{\optordertuple}$ with $\fileweight{\file} = 1$.  By Lemma \ref{lemma:onlyatomic} and Proposition \ref{prop:allforward}, we can assume that $\order^*_{t} < \order^*_{t+1}$ for all $\order^*_{t}, \order^*_{t+1} \in \forwardphase_{\optordertuple}$, and $\order^*_{t} > \order^*_{t+1}$ for all $\order^*_{t},\order^*_{t+1} \in \rewindphase_{\optordertuple}$, $t < \numfiles$. Furthermore, $\fileweight{\file} = 1$  for all $\file \in \rewindphase_{\optordertuple}$, as postponing to the forward stage every file~$\file$ where $\fileweight{\file} = 0$ does not increase the total response time. Let~$\file$ be the forward-stage file $1 \neq \file \in \forwardphase_{\optordertuple}$, $\fileweight{\file} = 1$, with the largest $\leftfile{\file}$. 
Let $\filep \in \rewindphase_{\optordertuple}$ be the rewind-stage file that is closest and on the right of $\file$ in the tape (i.e., $\filep > \file$), and create a new sequence $\ordertuple'$ where $\file$ is read in the rewind stage immediately after $\filep$ (if such a file~$\filep$ does not exist, then set $\ordertuple'_1 = \file$).
The response time of $\file$ in $\ordertuple'$ reduces by at least $2 \filesize{} (\file - 1)$ with respect to its response time in~$\optordertuple$, which corresponds to the time to reach $0$ from $\leftfile{\file}$ (while eventually reading other files in the rewind stage) and return to $\leftfile{\file}$. The response times of files $1, 2, \dots, \file-1$, however, increase by exactly $2 \filesize{}$ each. Since not all files $1, 2, \dots, \file-1$ have requests, the objective value of $\ordertuple'$ in \ref{model:LTS} is not larger than the objective value of~$\optordertuple$. The result follows from the iterative application of the procedure above.
\hfill $\blacksquare$
\end{proof}

\bigskip

\begin{proof}{Proof of Proposition~\ref{prop:necessary_optimality_condition}.} 

First, observe that~$\neigh{t}$ always exist since~$\order_{t}$ is a rewind-stage file and  all files are traversed in the forward stage. It also follows that
a requested file is serviced the first time the tape head traverses it from left to right in any optimal solution; otherwise, the total response time can only increase. 

The left-hand side expression of inequality~\eqref{eq:optcond} is equal to the amount of time~$\Delta_{t}$ elapsed between the first and the second visit of~$\order_{t}$, i.e.,
\[
\Delta_{t} 
\equiv
\underbrace{
\sum\limits_{t'=t}^{ \neigh{t}} 
		\left( \filesize{\order_{t'}} + \dist{\order_{t'}}{\order_{t'+1}} \right)}_{\text{Servicing requests~$\order_{t},\ldots,\order_{\neigh{t} -1}$}}
\]

The right-hand side of inequality~\eqref{eq:optcond} corresponds to the increase~$\delta_{t}$ in the objective value from the response times of $\order_{t+1},\ldots,\order_{\numfiles}$ that incur when reading~$\order_{t}$, i.e., 
\[
\delta_{t} = \underbrace{2\filesize{\order_{t}}}_{\substack{\text{Reading~$\order_{t}$ and} \\ \text{returning to~$\leftfile{\order_{t}}$}}} 
\cdot 
\underbrace{
\left(
 \numrequests - \sum_{t' = 1}^{t} \fileweight{\order_{t'}}\right) }_{ \substack{\text{Number of requests} \\ \text{in~$\order_{t+1},\ldots,\order_{\numfiles}$} } }.
\]
By postponing~$\order_{t}$ to position~$\neighs{\ordertuple}{t}+1$ in~$\ordertuple$, we obtain a solution~$\ordertuple'$ that increases the response time of~$\order_t$  by~$\Delta_t$ and decreases the response times of~$\order_{t+1},\ldots,\order_{\numfiles}$ by~$\delta_t$. Thus, if inequality~\eqref{eq:optcond}
is violated and~$\Delta_t < \delta_t$, then~$\ordertuple$ is not optimal, as its objective value is greater than that of~$\ordertuple'$. \hfill $\blacksquare$
\end{proof}

\subsection{An Exact Polynomial-Time Algorithm}


Let~$\pordertuple = (\porder_1, \porder_2, \dots, \porder_{\blocksize})$ be a subsequence of a solution~$\ordertuple$ to the~\ref{model:LTS}; when applicable, we use~$\porder_0$ and~$\porder_{\blocksize+1}$ to denote the elements of~$\ordertuple$ preceding and succeeding~$\pordertuple$, respectively. 

\medskip

\begin{proof}{Proof of Proposition \ref{prop:rewindblock}.}
	Suppose the assumptions of the proposition hold but $t > t'$, i.e.,
	$\order^*_{t}$ and~$\order^*_{t'}$ are rewind-stage files, there is at least one forward-stage file~$\file$ such that~$\order^*_{t'} < \file < \order^*_{t}$, and $\order^*_{t}$ succeeds $\order^*_{t'}$ in $\optordertuple$ (i.e., $\order^*_{t'}$ is read first). This implies that the tape head eventually moves from $\order^*_{t'}$ to $\order^*_{t}$ during the rewind stage. 	
	This movement necessarily traverses the forward-stage file $\file$ since $\rightfile{\order^*_{t'}} < \leftfile{\file}< \leftfile{\order^*_{t}}$ by assumption. 
	We can therefore modify~$\optordertuple$  to read $\file$ in this movement, i.e., add  $(\order^*_{\eta},\file,\order^*_{\eta+1})$ as a subsequence, to reduce the response time of $\file$ without  impacting the response time of any other file; in the special case where~$\fileweight{\file} = 1$, this adjustment leads to a sequence with lower response time, thus contradicting the optimality of~$\optordertuple$.
	\hfill $\blacksquare$
\end{proof}

\medskip

To demonstrate Theorem \ref{thm:lts_exact}, we will use the concept of  \textit{blocks} defined below:
\begin{definition}[Block]
	\label{def:block}  
    Given a sequence $\ordertuple$ and two files $\file, \filep \in \fileset$, $\file \le \filep$, a subset $\block{\file}{\filep}^{\ordertuple} \equiv \{\file, \file+1,\file+2,\ldots,\filep\} \subseteq \fileset$ of $\filep-\file+1$ adjacent tape files is a block of $\ordertuple$ if 
	\begin{itemize}
		\item[(a)] All files in~$\block{\file}{\filep}^{\ordertuple}$ are rewind-stage files in~$\ordertuple$, i.e., $\block{\file}{\filep}^{\ordertuple} \subseteq \rewindphase_{\ordertuple}$; and  
  		\item[(b)] $\ordertuple$ has a contiguous sequence~$\pordertuple$ which is a permutation of the files  in~$\block{\file}{\filep}^{\ordertuple}$, i.e., $\ordertuple = (\order_1,\order_2,\ldots,\porder_1,\porder_2,\ldots,\porder_{\blocksize},\ldots,\order_{n})$
		for $\{\porder_1, \porder_2, \dots, \porder_b\} = \block{\file}{\filep}^{\ordertuple}$. 
	\end{itemize}
\end{definition}

We refer to the \textit{execution of block $\block{\file}{\filep}^{\ordertuple}$} as the execution of the read operations of~$\pordertuple$. For technical convenience, we assume that the execution of~$\block{\file}{\filep}^{\ordertuple}$ starts when~$\rightfile{\filep}$ is reached for the first time  and ends when~$\leftfile{\file}$ is reached after all the read operations. We write~$\block{\file}{\filep}$ to denote~$\block{\file}{\filep}^{\ordertuple}$ when~$\ordertuple$ is clear from the context. Moreover, we also say that  $\block{\file}{\filep}$ is on the left of~$\block{\file'}{\filep'}$ if $\file < \file'$.

A block~$\block{\file}{\filep}$ is \textit{maximal} if it is neighbored by forward-stage files~$\file-1$ and~$\filep+1$ and it is not properly contained in any other block. A block is \textit{simple} if all its files are read in a single left-to-right movement. Figure \ref{fig:exampleOptSolution} illustrates the block structure with respect to the sequence~$\optordertuple = (7,8,3,2,4,1,5,6,9)$. The solution contains the maximal blocks  $\block{7}{8} = \{7,8\}$ (block 1) and $\block{2}{4} = \{2,3,4\}$ (block 2) induced by the forward-stage files $\forwardphase_{\optordertuple} = \{1,5,6,9\}$. Block 1 is simple because files $7, 8$ are read in a single left-to-right movement. Block 2, in turn, is not simple because file $3$ is read prior to file $2$. However, the non-maximal block $\block{3}{3} = \optordertuple_3 = \{3\}$ is simple. 

It follows from Proposition~\ref{prop:rewindblock} that the forward-stage files in~$\ordertuple$ define maximal blocks. Note also that each rewind-stage file belongs to some maximal block of $\optordertuple$, i.e., the union of all maximal blocks of~$\ordertuple$ spans $\rewindphase_{\ordertuple}$. 

\begin{figure}[ht]
    \centering
	\usetikzlibrary{arrows,backgrounds,snakes}
	\begin{tikzpicture}
		[
			rewindfile/.style={rectangle, draw=black, fill=black!2, thick, minimum size=5mm},
			forwardfile/.style={rectangle, draw=purple, fill=purple!2, thick, minimum size=5mm, text=purple}			
		]
		
		\node[forwardfile, minimum width=12mm] (f1) {1};
		\node[rewindfile, minimum width=15mm, right=2mm of f1] (f2) {2};
		\node[rewindfile, minimum width=15mm, right=2mm of f2, node distance=50pt] (f3) {3};
		\node[rewindfile, minimum width=15mm, right=2mm of f3, node distance=50pt] (f4) {4};
		\node[forwardfile, minimum width=12mm, right=2mm of f4, node distance=50pt] (f5) {5};
		\node[forwardfile, minimum width=12mm, right=2mm of f5, node distance=50pt] (f6) {6};
		\node[rewindfile, minimum width=15mm, right=2mm of f6, node distance=50pt] (f7) {7};
		\node[rewindfile, minimum width=15mm, right=2mm of f7, node distance=50pt] (f8) {8};
		\node[forwardfile, minimum width=12mm, right=2mm of f8, node distance=50pt] (f9) {9};
		
		\node[below = 1.5mm of f7] {$\order^*_1$};
		\node[below = 1.5mm of f8] {$\order^*_2$};
		\node[below = 1.5mm of f3] {$\order^*_3$};
		\node[below = 1.5mm of f2] {$\order^*_4$};
		\node[below = 1.5mm of f4] {$\order^*_5$};
		\node[text=purple, below = 1.5mm of f1] {$\order^*_6$};
		\node[text=purple, below = 1.5mm of f5] {$\order^*_7$};
		\node[text=purple, below = 1.5mm of f6] {$\order^*_8$};
		\node[text=purple, below = 1.5mm of f9] {$\order^*_9$};

		\node[right=0.5pt of f1, yshift=15mm, xshift=-1.5pt] (l1a) {};
		\node[right=0.5pt of f1, yshift=-18mm, xshift=-1.5pt] (l1b) {};
		\draw[dashed] (l1a) -- (l1b);

		\node[right=0.5pt of f4, yshift=15mm, xshift=-1.5pt] (l2a) {};
		\node[right=0.5pt of f4, yshift=-18mm, xshift=-1.5pt] (l2b) {};
		\draw[dashed] (l2a) -- (l2b);

		\node[right=0.5pt of f6, yshift=15mm, xshift=-1.5pt] (l3a) {};
		\node[right=0.5pt of f6, yshift=-18mm, xshift=-1.5pt] (l3b) {};
		\draw[dashed] (l3a) -- (l3b);

		\node[right=0.5pt of f8, yshift=15mm, xshift=-1.5pt] (l4a) {};
		\node[right=0.5pt of f8, yshift=-18mm, xshift=-1.5pt] (l4b) {};
		\draw[dashed] (l4a) -- (l4b);

		\path[-,every node/.style={font=\sffamily}]
		(l1b)
			edge node[fill=white!3,font=\small] {Block 2} (l2b)
		(l3b)
			edge node[fill=white!3,font=\small] {Block 1} (l4b)
		;
	\end{tikzpicture}
	\caption{Example depicting the block structure of an optimal sequence $\optordertuple = (7,8,3,2,4,1,5,6,9)$. Forward-stage files are $\forwardphase_{\optordertuple} = \{1, 5, 6, 9\}$ (colored) and rewind-stage files are $\rewindphase_{\optordertuple} = \{2, 3, 4, 7, 8\}$.}
	\label{fig:exampleOptSolution}
\end{figure}

The reading stages within a maximum block are equivalent to the stages of the full problem. For example, files $2$ and $4$ can be perceived as ``forward-stage'' files when processing block 2 in Figure \ref{fig:exampleOptSolution}. Next, we show that one can identify the optimal subsequence for each block recursively, and use the resulting subsequences to compose a full solution to \ref{model:LTS}. Specifically, we solve the subproblem associated with each block as an instance of the~\ref{model:LTS}, using the number of pending requests prior to the block executation as a parameter of the optimization problem. To formalize this, Proposition \ref{prop:allrewind} shows that each block can be sequenced separately if the forward-stage files are known. 

\begin{proposition}
	\label{prop:allrewind}
	Let $\optordertuple$ be an optimal solution to the~\ref{model:LTS} and $\block{\file}{\filep}$ be a block of $\optordertuple$, with 
	$\blocksize \equiv |\block{\file}{\filep}|$. Let~$f$ be the number of files read in~$\optordertuple$ when~$\rightfile{\filep}$ is reached for the first time. 
	The subsequence~$\pordertuple^*$ of $\optordertuple$ describing the execution of $\block{\file}{\filep}$
	is the minimizer of the function $\optblock{\block{\file}{\filep}}{k}$ with $k = \numrequests - \sum\limits_{t=1}^{f-1} \fileweight{\order^*_{t}}$, where 
	\begin{align}
		\label{eq:optV}
		\optblock{\block{\file}{\filep}}{k}
		\equiv
		\min_{\pordertuple}
		\left\{
			k \dist{\filep}{\porder_1}
			+
			\sum_{t=2}^{\blocksize} 
				\left(
					k - \sum_{t'=1}^t \fileweight{\porder_{t'}}
				\right)
			\left( 
				\filesize{\porder_{t-1}} + \dist{\porder_{t-1}}{\porder_{t}} 
			\right)
			+
			\left(
				k - \sum_{t=1}^{\blocksize} \fileweight{\porder_{t}}
			\right)
			(\filesize{\porder_{\blocksize}} + \dist{\porder_{\blocksize}}{\file})
		\right\}
	\end{align}
	is the total increase in response time when executing $\block{\file}{\filep}$ optimally with $k$ file requests remaining.
\end{proposition}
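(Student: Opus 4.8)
The plan is to prove Proposition~\ref{prop:allrewind} by an exchange argument built on the latency-form objective~\eqref{eq:objRewriting}. Write $\Phi(\pordertuple)$ for the expression inside the $\min$ in~\eqref{eq:optV}. The key observation is that, with the forward-stage files of $\optordertuple$ held fixed, the total response time of $\optordertuple$ decomposes as $C+\Phi(\pordertuple^*)$, where $\pordertuple^*=(\porder^*_1,\dots,\porder^*_{\blocksize})$ is the subsequence that $\optordertuple$ uses to execute the block $\block{\file}{\filep}$, and $C$ is a quantity that depends on $\optordertuple$ only through what happens outside that subsequence --- in particular, not through the \emph{order} in which the $\blocksize$ files of the block are read. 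The statement then follows, since replacing $\pordertuple^*$ by a cheaper ordering of the same files would contradict the optimality of $\optordertuple$.

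First I would pin down the rigid structure around the block. By Definition~\ref{def:block} the files of $\block{\file}{\filep}$ occupy a contiguous window of positions of $\optordertuple$, say positions $f,f+1,\dots,f+\blocksize-1$, so $f-1$ files have been read when $\rightfile{\filep}$ is first reached and $k=\numrequests-\sum_{t=1}^{f-1}\fileweight{\order^*_t}$ requests are pending at that instant. Using Propositions~\ref{prop:allforward} and~\ref{prop:rewindblock}, the file serviced immediately after the block lies to the left of $\file$; hence during the window the head travels from $\rightfile{\filep}$, reads the block files in the order $\pordertuple^*$, and then passes $\leftfile{\file}$ on its way to the next file, so that ``the block execution ends'' at position $\leftfile{\file}$ exactly as the definition intends, and nothing about the head's trajectory outside the window is influenced by the internal order of the block.

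Then I would carry out the decomposition via~\eqref{eq:objRewriting}, grouping requested files by whether they are read before, inside, or after the block. The files read before the block contribute a fixed amount to $C$; each of the $k$ requested files serviced at or after the block contributes to $C$ the common time needed to first reach $\rightfile{\filep}$; and each requested file read after the block contributes to $C$ the portion of its response time accrued once the head has left $\leftfile{\file}$. Everything else is the contribution of the window's trajectory, obtained by charging each elementary stretch of that trajectory against the number of requests still pending while it is traversed: the stretch $\rightfile{\filep}\!\to\!\leftfile{\porder^*_1}$ against all $k$ pending requests; each intermediate ``read $\porder^*_{t-1}$, reposition to $\porder^*_t$'' stretch against what remains once $\porder^*_1,\dots,\porder^*_{t-1}$ have been read; and the closing ``read $\porder^*_{\blocksize}$, return to $\leftfile{\file}$'' stretch, which costs $\filesize{\porder^*_{\blocksize}}+\dist{\porder^*_{\blocksize}}{\file}$, against only the $k-\sum_{t=1}^{\blocksize}\fileweight{\porder^*_t}$ requests sitting beyond the block. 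Summed up, this is exactly $\Phi(\pordertuple^*)$, so $\optpolicyval=C+\Phi(\pordertuple^*)$.

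Finally I would close the argument: for any permutation $\pordertuple$ of $\{\file,\file+1,\dots,\filep\}$, substituting $\pordertuple$ for $\pordertuple^*$ inside $\optordertuple$ yields a feasible sequence $\ordertuple'$ in which $\block{\file}{\filep}$ is still a block --- nothing outside the window changes, and the head still enters the window at $\rightfile{\filep}$ and leaves it at $\leftfile{\file}$ --- so the same decomposition gives total response time $C+\Phi(\pordertuple)$ with the \emph{same} $C$; optimality of $\optordertuple$ then forces $\Phi(\pordertuple^*)\le\Phi(\pordertuple)$ for all $\pordertuple$, which is the assertion that $\pordertuple^*$ minimizes $\optblock{\block{\file}{\filep}}{k}$. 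I expect the main obstacle to be the bookkeeping in the decomposition step: one must verify that $C$ genuinely does not see the intra-block order --- the only channel through which the block can affect post-block files is the total time the head spends between $\rightfile{\filep}$ and $\leftfile{\file}$, and that channel is already absorbed into the per-stretch charges --- and that the ``clean'' picture of the second paragraph (monotone traversal of the window, terminating at $\leftfile{\file}$) really does hold in an optimal solution, which is what Propositions~\ref{prop:allforward} and~\ref{prop:rewindblock} provide.
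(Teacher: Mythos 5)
Your proposal is correct and takes essentially the same route as the paper's proof: you charge each stretch of the block's execution to the requests still pending and conclude by an exchange argument, which is exactly what the paper's interpretation of the three terms of $\optblock{\block{\file}{\filep}}{k}$ followed by its ``optimizes the trade-off\ldots consequently composes an optimal solution'' step does, except that you make the decomposition $C+\Phi(\pordertuple)$ and the substitution explicit and you omit the paper's closing induction over the ordered simple blocks, which is not needed for the statement itself (and your per-stretch multipliers $k-\sum_{t'\le t-1}\fileweight{\porder_{t'}}$ follow \eqref{eq:objRewriting}, which is the intended reading of \eqref{eq:optV}). The one soft spot is your appeal to Propositions \ref{prop:allforward} and \ref{prop:rewindblock} to justify that the head leaves the block leftward past $\leftfile{\file}$ and that nothing is read inside the window besides the block files: those are existence-of-an-optimal-solution statements and do not constrain an arbitrary optimal $\optordertuple$, but the needed fact can be argued directly (reading a file to the right of $\filep$ immediately after the block, or one to the left of $\file$ between the first arrival at $\rightfile{\filep}$ and the block reads, adds head travel that strictly delays the always-requested file $1$), and the paper's own proof assumes this clean structure implicitly through its definition of block execution.
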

\begin{proof}{Proof of Proposition \ref{prop:allrewind}.} Each term in the objective function of Problem~\ref{eq:optV} represents the increase in the response time of the pending tasks in different stages of~$\pordertuple$. We partition the terms in $\optblock{\block{\file}{\filep}}{k}$ into three parts:
	\begin{align*}
		\underbrace{k \dist{\filep}{\porder_1}}_{(a)}
		+
		\underbrace{
			\sum_{t=2}^{\blocksize} 
				\left(
					k - \sum_{t'=1}^t \fileweight{\porder_{t'}}
				\right)
			\left( 
				\filesize{\porder_{t-1}} + \dist{\porder_{t-1}}{\porder_{t}} 
			\right)
		}_{(b)}
		+
		\underbrace{
			\left(
				k - \sum_{t=1}^{\blocksize} \fileweight{\porder_{t}}
			\right)
			(\filesize{\porder_{\blocksize}} + \dist{\porder_{\blocksize}}{\file}).
		}_{(c)}
	\end{align*}
	
	The increases in the response time associated with the terms above are: a) moving the tape head from the right of $\filep$ to the left of~$\porder_1$; b) servicing the files in $\block{\file}{\filep}$ except~$\porder_b$; and c) servicing~$\porder_b$ and  moving the tape head from the right of $\porder_{\blocksize}$ to the left of $\file$. Therefore, $\optblock{\block{\file}{\filep}}{k}$ optimizes the trade-off between the response times of requests in~$\block{\file}{\filep}$ and requests read in future movements. Consequently, if an optimal solution uses block~$\block{\file}{\filep}$ and has~$k$ pending tasks when~$\block{\file}{\filep}$ is executed, then sequence~$\pordertuple^*$ given by~$\optblock{\block{\file}{\filep}}{k}$ composes an optimal solution. 
	
	Let~$(\pordertuple_1, \pordertuple_2,\ldots,\pordertuple_q)$ be a decomposition of~$\optordertuple$ into simple blocks $\block{\file_1}{\filep_1}, \block{\file_2}{\filep_2}, \ldots, \block{\file_q}{\filep_q}$, ordered as they appear in~$\optordertuple$,
	and let~$k_1, k_2,\ldots,k_q$ be number of pending requests when these blocks start to be executed, respectively. We show that an optimal solution~$\optordertuple$ consists of sequences~$\pordertuple_y$ generated by~$\optblock{\block{\file_y}{\filep_y}}{k_y}$ for each~$y$ in~$\{1,2,\ldots,q\}$ by induction in~$y$. First, if~$y = 1$, we have~$k_1 = m$ (as no file has been read yet), so~$\optblock{\block{\file_1}{\filep_1}}{m}$ identifies a sequence that optimizes the trade-off of response times for all files. Moreover, the execution time of the other blocks is irrelevant to files in~$\optblock{\block{\file_1}{\filep_1}}{m}$. Therefore, 
$\pordertuple_1$ derived from Problem~\ref{eq:optV} composes an optimal solution~$\optordertuple$.

Let us assume that Problem~\ref{eq:optV} delivers the optimal sequences for the first~$y-1$ blocks.  From the induction hypothesis, the problem admits an optimal solution~$\optordertuple$ in which the time when~$\rightfile{\filep_y}$ is reached for the first time coincides with the time~$\pordertuple_y$ starts to be executed. Moreover, the reading times of~$\pordertuple_{y+1}, \pordertuple_{y+2}, \ldots, \pordertuple_{q}$ is
not revelant to the response time of files in~$\block{\file_y}{\filep_y}$, Finally, $\optblock{\block{\file_y}{\filep_y}}{k_y}$ optimizes the trade-off between the response times of pending requests in and out of~$\block{\file_y}{\filep_y}$, so~$\pordertuple_y$ is also part of an optimal solution~$\optordertuple$. \hfill $\blacksquare$
\end{proof}

\bigskip

\begin{proof}{Proof of Theorem~\ref{thm:lts_exact}.}   The state space size for both recursions is $\mathcal{O}(\numfiles^3)$. Since each value function evaluation requires time $\mathcal{O}(\numfiles)$ (with memoization), all the entries of~$\rewindrec(\file,\filep, k)$ and~$\forwardrec(\file,\filep, k)$ can be computed in time~$\bigo(\numfiles^4)$. We show that~$\rewindphase(1, \numfiles, \numrequests)$ gives an exact solution to the~\ref{model:LTS} by showing the correctness of both recursions by induction in the size~$\filep-\file$ of their input sequences. 

The problem is trivial if~$\numfiles = \numrequests = 0$. Thus, let us consider the case~$\numfiles = \numrequests = 1$, which admits only one solution: the tape head moves from~$\rightfile{\file}$ to~$\leftfile{\file}$ and then back. $\rewindrec(\file,\file, k)$ considers exactly this strategy in the second case in~$\eqref{eq:rewindrec}$. As for~$\forwardrec(\file,\file, k)$, there is no~$\filep'$ such that~$\file \leq \filep' < \file$, so the second term in~\eqref{eq:valuerec} vanishes and its unique solution is given by the first and the third terms, which evaluate to the expected total response time.

Assume now that both recursions are correct for sequences of size at most~$q-1$, and pick an arbitrary instance with~$\numfiles = q$ files. For~$\rewindrec(1,\numfiles, k)$, there are two main families of sequences to consider. In the first family, file $\numfiles$ is the last to be read. Sequences with this characteristic are considered by~$\forwardrec(1,\numfiles, k)$ in the second ``min'' case in~\eqref{eq:rewindrec}. Specifically, the sequences in~$\forwardrec(1,\numfiles, k)$ first move the tape head to~$\leftfile{\numfiles}$ and then proceeds by considering all the sequences that read~$1,2,\ldots,\numfiles-1$ before moving back to read~$\numfiles$. Thus, since the two subproblems derived from each~$\filep'$ in~$\{1,2,\ldots,\numfiles-1\}$ in~\eqref{eq:rewindrec} are solved to optimality (by induction), $\forwardrec(1,\numfiles, k)$ yields the optimal solution value whereby~$\numfiles$ is the last file to be read. $\rewindrec(1,\numfiles, k)$ then augment any resulting sequence by moving to~$\leftfile{1}$ without affecting the final results (as all requests have been read).

Next, we consider the second family of sequences, i.e., solutions where~$\numfiles$ is not the last to be read, which are addressed only by the rewind-stage recursion. All solutions of this type are considered in the first ``min'' case in~\eqref{eq:rewindrec}, which selects a file~$\filep'$ to partition the original problem into two. The solutions of the subproblems for each choice of~$\filep'$ are independent
and optimal (by the induction hypothesis). Thus, the sequence that solves $\forwardrec(1,\numfiles, k)$ is also optimal, thus completing the proof. \hfill $\blacksquare$
\end{proof}

\subsection{The Stochastic~\ref{model:LTS}}

\begin{proof}{Proof of Lemma~\ref{lemma:scale}.} Given an optimal solution~$\optordertuple$ to~$I$ with optimal value~$R^*$, the objective value~$R'$ of~$\optordertuple$ when applied to the scaled instance~$I'$ is 
\begin{eqnarray*}
    R' &=& 
	\sum_{t=1}^{\numfiles} \left( \alpha\numrequests - \sum_{t'=1}^{t-1} \alpha \fileweight{\order_{t'}} \right) \left( \beta \filesize{\order_{t-1}} + \beta\dist{\order_{t-1}}{\order_{t}} \right) \\ &=&
	\sum_{t=1}^{\numfiles} \alpha\left( \numrequests - \sum_{t'=1}^{t-1}  \fileweight{\order_{t'}} \right) \beta \left(  \filesize{\order_{t-1}} + \dist{\order_{t-1}}{\order_{t}} \right) \\
	&=&
	\alpha \beta
	\sum_{t=1}^{\numfiles} \left( \numrequests - \sum_{t'=1}^{t-1}  \fileweight{\order_{t'}} \right) \left(  \filesize{\order_{t-1}} + \dist{\order_{t-1}}{\order_{t}} \right) 
	=
	\alpha \beta R^*.
\end{eqnarray*}
Similar arguments show that any solution to~$I'$ with objective~$R''$ gives a solution~$R''' = \frac{R''}{\alpha\beta}$ to~$I$. Thus, any solution to~$I'$ with objective value~$R'$ strictly smaller than $\alpha \beta R^*$ is feasible to~$I$ and attains a smaller objective value than~$R^*$ in~$I$, a contradiction. \hfill $\blacksquare$
\end{proof}

\subsection{The Online~\ref{model:LTS}}

\begin{proof}{Proof of Proposition~\ref{prop:fifo_not_competitive}.}  Let us consider the family of instances where all files have size 1 and  all requests are released within the first time step and induce the sequence~$\ordertuple = (\numfiles/2, \numfiles/2 + 1, \numfiles/2 - 1, \numfiles/2 + 2, \numfiles/2 - 2, \ldots, \numfiles, 1$) according to the~\fifo{} policy. Observe that the movements of the tape reader defined by~$\ordertuple$ follow  a zig-zag pattern, starting from the middle of the tape. Let us consider the response times of the files~$1, 2, \ldots, \numfiles/2$. For convenience, we define~$v(k)$ to denote the response time of file~$\numfiles/2-k$ for~$k \in \{0,1,\ldots,\numfiles-1\}$; observe that file~$\numfiles - k$ is the~$(k+1)$-th to be read among files in~$0, 1, 2, \ldots, \numfiles/2$. By construction, we have~$v(0) = \numfiles/2$, and for~$k \in \{1,\ldots,\numfiles-1\}$ we have
\begin{eqnarray*}
v(k) 
&=& 
\underbrace{v(k-1)}_{ \substack{ \text{Response time of} \\ \text{file~$\numfiles/2-(k-1)$} }  } 
+
\underbrace{2k}_{\substack{\text{ Time to move from} \\ \text{$\leftfile{\numfiles/2 - (k-1)}$ to~$\rightfile{\numfiles/2 + k}$ }  }}
+
\underbrace{2k+1}_{ \substack{ \text{Time to move from} \\ \text{$\rightfile{\numfiles/2 + k}$ to~$\leftfile{\numfiles/2 - k}$}  }} 
=
v(k-1) + 4k + 1.
\end{eqnarray*}
We show that~$v(k) = 2k^2 + 3k$ for~$k \in \{1,\ldots,\numfiles-1\}$. Direct verification shows that the result holds for~$k = 1$; assuming that the result holds for~$k-1$, we have
\begin{eqnarray*}
v(k) 
    &=& v(k-1) + 4k + 1 
    \\    &=& 
    \underbrace{2(k-1)^2 + 3(k-1)}_{\text{Induction hypothesis}} + 4k + 1 \\
    &=& 2k^2 - 4k + 2 +3k -3 +4k + 1 
    \\
    &=& 
    2k^2 + 3k, 
\end{eqnarray*}
so the identity follows. Therefore, the sum of the response times of files~$1, 2, \ldots, \numfiles/2$ in~$\ordertuple$ is 
\begin{eqnarray*}
\sum_{k = 0}^{\numfiles/2}v(k) 
    &=& 
        \sum_{k = 0}^{\numfiles/2}\left(   \frac{\numfiles}{2} +2k^2 + 3k\right)
    = 
        \bigo(\numfiles^2) + 2\sum_{k = 0}^{\numfiles/2}k^2 
    = 
        \bigo(\numfiles^2) +  2\frac{\frac{\numfiles}{2}(\frac{\numfiles}{2} + 1)(\numfiles+1)}{6} = \bigo(\numfiles^3)
\end{eqnarray*}
Finally, the sum of the response times for  solution~$\ordertuple' \equiv (1,2,\ldots,\numfiles)$ is~$\bigo(n^2)$, so it follows that the~\fifo{} policy is not~$c$-competitive for any constant~$c$.
\hfill $\blacksquare$
\end{proof}

\bigskip

\begin{proof}{Proof of Proposition~\ref{prop:onlinebad}.}
We present an adversarial strategy to derive the result. First, let us assume w.l.o.g. that $\numfiles = 3$, all files have size 1, and that the tape head is in~$\leftfile{2}$ (left position of file 2) at time~$t$. Let~$\mathcal{A}_0$ be the family of policies that move the tape head to the right and reach~$\leftfile{3}$ at time~$t+1$. In this case, it suffices for the adversarial environment to requests file~$2$ at time~$t+1$. In this case, an optimal strategy would only start moving the tape head to the right  at time~$t+1$, thus achieving an adjusted response time of zero, whereas any policy in~$\mathcal{A}_0$ would deliver an adjusted response time of at least 2. Thus, the competitive ratio of any algorithm in~$\mathcal{A}_0$  is unbounded. The same adversarial strategy shows that the competitive ratio of any policy that moves the reader to the left and reaches~$\leftfile{1}$ at time~$t+1$ also has an unbounded competitive ratio. Finally, for policies that maintains the tape reader positioned at~$\leftfile{2}$ until time~$t+1$, it suffices for the adversarial environment to release a request for file~$1$ at time~$t+1$, thus yielding an unbounded competitive ratio. Thus, there is no policy that delivers a bounded competitive ratio for the online~\ref{model:LTS} when considering the minimization of the sum of adjusted response  times.
\hfill $\blacksquare$
\end{proof}

\end{document}